\newtheorem{lem}{Lemma}[section]
\newtheorem{prop}{Proposition}[section]
\newtheorem{note}{Note}
\newtheorem{example}{Example}[section]
\newtheorem{defn}{Definition}[section]
\newtheorem{thm}{Theorem}[section]
\newtheorem{rem}{Remark}[section]
\numberwithin{equation}{section}
\newcommand{\beqa}{\begin{eqnarray}}
\newcommand{\eeqa}{\end{eqnarray}}
\newcommand{\nc}{\newcommand}
\newcommand{\rnc}{\renewcommand}
\nc{\cal}{\mathcal}
\nc{\goth}{\mathfrak}
\rnc{\bold}{\mathbf}
\renewcommand{\frak}{\mathfrak}
\renewcommand{\Bbb}{\mathbb}
\newcommand{\fpt}[7]{{}_4\phi_3\left[\begin{matrix} #1 , #2, #3, #4 \\
#5, #6, #7 \end{matrix}\,; q,q\right]}
\nc\K{\mathbb K}
\nc{\Cal}{\mathcal}
\nc{\Xp}[1]{X^+(#1)}
\nc{\Xm}[1]{X^-(#1)}
\nc{\on}{\operatorname}
\nc{\ch}{\mbox{ch}}
\nc{\Z}{{\bold Z}}
\nc{\J}{{\mathcal J}}
\nc{\C}{{\bold C}}
\nc{\Q}{{\bold Q}}
\nc{\N}{{\Bbb N}}
\nc\beq{\begin{equation}}
\nc\enq{\end{equation}}
\nc\lan{\langle}
\nc\ran{\rangle}
\nc\bsl{\backslash}
\nc\mto{\mapsto}
\nc\lra{\leftrightarrow}
\nc\hra{\hookrightarrow}
\nc\sm{\smallmatrix}
\nc\esm{\endsmallmatrix}
\nc\sub{\subset}
\nc\ti{\tilde}
\nc\nl{\newline}
\nc\fra{\frac}
\nc\und{\underline}
\nc\ov{\overline}
\nc\ot{\otimes}
\nc\bbq{\bar{\bq}_l}
\nc\bcc{\thickfracwithdelims[]\thickness0}
\nc\ad{\text{\rm ad}}
\nc\Ad{\text{\rm Ad}}
\nc\Hom{\text{\rm Hom}}
\nc\End{\text{\rm End}}
\nc\Ind{\text{\rm Ind}}
\nc\Res{\text{\rm Res}}
\nc\Ker{\text{\rm Ker}}
\rnc\Im{\text{Im}}
\nc\sgn{\text{\rm sgn}}
\nc\tr{\text{\rm tr}}
\nc\Tr{\text{\rm Tr}}
\nc\supp{\text{\rm supp}}
\nc\card{\text{\rm card}}
\nc\bst{{}^\bigstar\!}
\nc\he{\heartsuit}
\nc\clu{\clubsuit}
\nc\spa{\spadesuit}
\nc\di{\diamond}
\nc\cW{\cal W}
\nc\cG{\cal G}
\nc\al{\alpha}
\nc\bet{\beta}
\nc\ga{\gamma}
\nc\de{\delta}
\nc\ep{\epsilon}
\nc\io{\iota}
\nc\om{\omega}
\nc\si{\sigma}
\rnc\th{\theta}
\nc\ka{\kappa}
\nc\la{\lambda}
\nc\ze{\zeta}
\nc\vp{\varpi}
\nc\vt{\vartheta}
\nc\vr{\varrho}
\nc\Ga{\Gamma}
\nc\De{\Delta}
\nc\Om{\Omega}
\nc\Si{\Sigma}
\nc\Th{\Theta}
\nc\La{\Lambda}
\nc\op{\overline{p}}
\nc\oE{\overline{E}}
\nc\onabla{\overline{\nabla}}
\nc\oDelta{\overline{\Delta}}
\nc\bepsilon{\overline{\epsilon}}
\nc\bak{\overline{k}}
\nc\bgamma{\overline{\gamma}}
\nc\kb{\mathfrak b}
\nc\hQ{\widehat Q}
\nc\kN{\mathfrak N}
\nc\kM{\mathfrak M}
\nc\kX{\mathfrak X}
\nc\kA{\mathfrak A}
\nc\kn{\mathfrak n}
\nc\km{\mathfrak m}
\nc\tfQ{\widehat{\mathfrak Q}}
\nc\boa{\bold a}
\nc\bob{\bold b}
\nc\boc{\bold c}
\nc\bod{\bold d}
\nc\boe{\bold e}
\nc\bof{\bold f}
\nc\bog{\bold g}
\nc\boh{\bold h}
\nc\boi{\bold i}
\nc\boj{\bold j}
\nc\bok{\bold k}
\nc\bol{\bold l}
\nc\bom{\bold m}
\nc\bon{\bold n}
\nc\boo{\bold o}
\nc\bop{\bold p}
\nc\boq{\bold q}
\nc\bor{\bold r}
\nc\bos{\bold s}
\nc\bou{\bold u}
\nc\bov{\bold v}
\nc\bow{\bold w}
\nc\boz{\bold z}
\nc\ba{\bold A}
\nc\bb{\bold B}
\nc\bc{\bold C}
\nc\bd{\bold D}
\nc\be{\bold E}
\nc\bg{\bold G}
\nc\bh{\bold H}
\nc\bi{\bold I}
\nc\bj{\bold J}
\nc\bk{\bold K}
\nc\bl{\bold L}
\nc\bm{\bold M}
\nc\bn{\bold N}
\nc\bo{\bold O}
\nc\bp{\bold P}
\nc\bq{\bold Q}
\nc\br{\bold R}
\nc\bs{\bold S}
\nc\bt{\bold T}
\nc\bu{\bold U}
\nc\bv{\bold V}
\nc\bw{\bold W}
\nc\bz{\bold Z}
\nc\bx{\bold X}
\nc\ca{\mathcal A}
\nc\cb{\mathcal B}
\nc\cc{\mathcal C}
\nc\cd{\mathcal D}
\nc\ce{\mathcal E}
\nc\cf{\mathcal F}
\nc\cg{\mathcal G}
\rnc\ch{\mathcal H}
\nc\ci{\mathcal I}
\nc\cj{\mathcal J}
\nc\ck{\mathcal K}
\nc\cl{\mathcal L}
\nc\cm{\mathcal M}
\nc\cn{\mathcal N}
\nc\co{\mathcal O}
\nc\cp{\mathcal P}
\nc\cq{\mathcal Q}
\nc\car{\mathcal R}
\nc\cs{\mathcal S}
\nc\ct{\mathcal T}
\nc\cu{\mathcal U}
\nc\cv{\mathcal V}
\nc\cz{\mathcal Z}
\nc\cx{\mathcal X}
\nc\cy{\mathcal Y}
\nc\e[1]{E_{#1}}
\nc\ei[1]{E_{\delta - \alpha_{#1}}}
\nc\esi[1]{E_{s \delta - \alpha_{#1}}}
\nc\eri[1]{E_{r \delta - \alpha_{#1}}}
\nc\ed[2][]{E_{#1 \delta,#2}}
\nc\ekd[1]{E_{k \delta,#1}}
\nc\emd[1]{E_{m \delta,#1}}
\nc\erd[1]{E_{r \delta,#1}}
\nc\ef[1]{F_{#1}}
\nc\efi[1]{F_{\delta - \alpha_{#1}}}
\nc\efsi[1]{F_{s \delta - \alpha_{#1}}}
\nc\efri[1]{F_{r \delta - \alpha_{#1}}}
\nc\efd[2][]{F_{#1 \delta,#2}}
\nc\efkd[1]{F_{k \delta,#1}}
\nc\efmd[1]{F_{m \delta,#1}}
\nc\efrd[1]{F_{r \delta,#1}}
\nc\fa{\frak a}
\nc\fb{\frak b}
\nc\fc{\frak c}
\nc\fd{\frak d}
\nc\fe{\frak e}
\nc\ff{\frak f}
\nc\fg{\frak g}
\nc\fh{\frak h}
\nc\fj{\frak j}
\nc\fk{\frak k}
\nc\fl{\frak l}
\nc\fm{\frak m}
\nc\fn{\frak n}
\nc\fo{\frak o}
\nc\fp{\frak p}
\nc\fq{\frak q}
\nc\fr{\frak r}
\nc\fs{\frak s}
\nc\ft{\frak t}
\nc\fu{\frak u}
\nc\fv{\frak v}
\nc\fz{\frak z}
\nc\fx{\frak x}
\nc\fy{\frak y}
\nc\fA{\frak A}
\nc\fB{\frak B}
\nc\fC{\frak C}
\nc\fD{\frak D}
\nc\fE{\frak E}
\nc\fF{\frak F}
\nc\fG{\frak G}
\nc\fH{\frak H}
\nc\fJ{\frak J}
\nc\fK{\frak K}
\nc\fL{\frak L}
\nc\fM{\frak M}
\nc\fN{\frak N}
\nc\fO{\frak O}
\nc\fP{\frak P}
\nc\fQ{\frak Q}
\nc\fR{\frak R}
\nc\fS{\frak S}
\nc\fT{\frak T}
\nc\fU{\frak U}
\nc\fV{\frak V}
\nc\fZ{\frak Z}
\nc\fX{\frak X}
\nc\fY{\frak Y}
\nc\tfi{\ti{\Phi}}
\nc\bF{\bold F}
\rnc\bol{\bold 1}
\nc\ua{\bold U_\A}
\nc\qinti[1]{[#1]_i}
\nc\q[1]{[#1]_q}
\nc\xpm[2]{E_{#2 \delta \pm \alpha_#1}}  
\nc\xmp[2]{E_{#2 \delta \mp \alpha_#1}}
\nc\xp[2]{E_{#2 \delta + \alpha_{#1}}}
\nc\xm[2]{E_{#2 \delta - \alpha_{#1}}}
\nc\hik{\ed{k}{i}}
\nc\hjl{\ed{l}{j}}
\nc\qcoeff[3]{\left[ \begin{smallmatrix} {#1}& \\ {#2}& \end{smallmatrix}
\negthickspace \right]_{#3}}
\nc\qi{q}
\nc\qj{q}
\nc\ufdm{{_\ca\bu}_{\rm fd}^{\le 0}}
\nc\isom{\cong} 
\nc{\pone}{{\Bbb C}{\Bbb P}^1}
\nc{\pa}{\partial}
\nc{\F}{{\mathcal F}}
\nc{\Sym}{{\goth S}}
\nc{\A}{{\mathcal A}}
\nc{\arr}{\rightarrow}
\nc{\larr}{\longrightarrow}
\nc{\ri}{\rangle}
\nc{\lef}{\langle}
\nc{\W}{{\mathcal W}}
\nc{\uqatwoatone}{{U_{q,1}}(\su)}
\nc{\uqtwo}{U_q(\goth{sl}_2)}
\nc{\dij}{\delta_{ij}}
\nc{\divei}{E_{\alpha_i}^{(n)}}
\nc{\divfi}{F_{\alpha_i}^{(n)}}
\nc{\Lzero}{\Lambda_0}
\nc{\Lone}{\Lambda_1}
\nc{\ve}{\varepsilon}
\nc{\phioneminusi}{\Phi^{(1-i,i)}}
\nc{\phioneminusistar}{\Phi^{* (1-i,i)}}
\nc{\phii}{\Phi^{(i,1-i)}}
\nc{\Li}{\Lambda_i}
\nc{\Loneminusi}{\Lambda_{1-i}}
\nc{\vtimesz}{v_\ve \otimes z^m}
\nc{\asltwo}{\widehat{\goth{sl}_2}}
\nc\ag{\widehat{\goth{g}}}  
\nc\teb{\tilde E_\boc}
\nc\tebp{\tilde E_{\boc'}}
\newcommand{\eeq}{\end{equation}}
\newcommand{\ben}{\begin{eqnarray}}
\newcommand{\een}{\end{eqnarray}}
\begin{document}

\title[Bispectral $q-$hypergeometric structure and $q-OA$'s integrable models]
{A bispectral  $q-$hypergeometric basis for\\   a class of quantum  integrable models}

\author{Pascal Baseilhac and Xavier Martin}
\address{Laboratoire de Math\'ematiques et Physique Th\'eorique CNRS/UMR 7350,
 F\'ed\'eration Denis Poisson FR2964,
Universit\'e de Tours,
Parc de Grammont, 37200 Tours, 
FRANCE}
\email{baseilha@lmpt.univ-tours.fr; xmartin@lmpt.univ-tours.fr}

\begin{abstract}
For the class of quantum integrable models generated from the $q-$Onsager algebra, a basis of bispectral multivariable $q-$orthogonal polynomials is exhibited. In a first part, it is shown that the multivariable Askey-Wilson polynomials with $N$ variables and $N+3$ parameters introduced by Gasper and Rahman \cite{GR1} generate a family of infinite dimensional  modules for the $q-$Onsager algebra, whose fundamental generators are realized in terms of the  multivariable $q-$difference and difference operators proposed by Iliev \cite{Iliev}. Raising and lowering operators extending those of Sahi \cite{sahi} are also constructed. In a second part, finite dimensional  modules are constructed and studied for a certain class of parameters and if the $N$ variables belong to a discrete support. In this case, the bispectral property finds  a natural interpretation within the framework of tridiagonal pairs. In a third part, eigenfunctions of the $q-$Dolan-Grady hierarchy are considered in the polynomial basis. In particular, invariant subspaces are identified for certain conditions generalizing Nepomechie's relations. In a fourth part, the analysis is extended to the special case $q=1$. This framework provides a $q-$hypergeometric formulation of quantum integrable models such as the open XXZ spin chain with generic integrable boundary conditions ($q\neq 1$).
\end{abstract}

\maketitle

\vskip -0.2cm

{\small MSC:\ 81R50;\ 81R10;\ 81U15;\ 39A70;\ 33D50;\ 39A13.}

{{\small  {\it \bf Keywords}:  $q-$Onsager algebra; Integrable systems; Bispectrality; Tridiagonal pairs; Multivariable polynomials; Basic hypergeometric series; Askey-scheme}}

\vspace{5mm}


\section{Introduction}
In the context of mathematical physics, studying the relation between $q-$hypergeometric special functions in one or several variables, related  orthogonal polynomials, and the representation theory of quantum algebras is an important field of research. From different point of views detailed in the text, this subject is challenging and deserves to be further explored for several reasons.\vspace{1mm}

Starting from the basic example of the quantum integrable harmonic oscillator whose eigenfunctions are the Hermite polynomials, generalizations involving 
a coupled system of multiple harmonic oscillators have been extensively studied in the literature (see e.g. \cite{Jeugt}).  In particular, orthogonal polynomials of the Askey scheme \cite{AW}
are eigenfunctions of Hamiltonians characterized by a certain tridiagonal interaction structure describing `triangular' two-dimensional spin lattices associated with the XY spin chain with nearest neighbor non-homogeneous couplings and non-zero magnetic field (see \cite{GMVZ} and references therein). According to the interaction considered, the parameters entering in the definition of the $q-$hypergeometric eigenfunctions are restricted. For the simplest examples of models, corresponding eigenfunctions are expressed in terms of one-variable polynomials such as the $q-$Racah, $q-$Hahn, $q-$Meixner or $q-$Krawtchouk ones. Importantly, thanks to the explicit expressions of the
 Hamiltonian's eigenfunctions in terms of $q-$hypergeometric functions and using orthogonality properties,
closed form expressions for correlation functions have been derived in various cases (see e.g. \cite{Jeugt}). In the last twenty years, such models have found applications in different areas, ranging from 
 low dimensional quantum physics and statistical mechanics to game theory, also impacting the analysis of strongly correlated systems or Markov chains. Indeed, they provide examples of transmission channels for short distance quantum  communication in which the so-called `perfect state transfer' occurs \cite{CDEL}. 
In the context of Markov chains, orthogonal polynomials also appear in the spectral representation of the transition probability matrix \cite{AGR}. Applications of Askey-Wilson polynomials in the context of solvable quantum mechanics have also been considered. See for instance \cite{OS14} and references therein.\vspace{1mm}

Multivariable extensions of the basic hypergeometric orthogonal polynomials have also arisen in the literature as eigenfunctions of mutually commuting differential or difference operators. In this vein, Macdonald-Koornwinder polynomials with $N$ variables and parameters $q,t$ \cite{M,M1,Ko,S2,Sa} are one possible generalization of the Askey-Wilson polynomials.
They provide a basis of eigenfunctions for relativisitic integrable generalizations of the Calogero-Sutherland-Moser system associated with different root systems \cite{R,D,D2,FS}. 
Also, certain degenerations of the Macdonald-Koornwinder polynomials known as $q-$Whittaker functions ($t=0$), respectively Hall-Littlewood polynomials ($q=0$), appear as eigenfunctions of the Hamiltonian of the open Toda chain with different types of boundary conditions, respectively of the repulsive delta Bose gas \cite{DE1,DE2,D3}. The limit $N\rightarrow \infty$ of Macdonald polynomials has also
been considered recently \cite{NS2}, and for $q\rightarrow 1$ Macdonald polynomials reduce to Jack polynomials which describe eigenfunctions of the quantum Benjamin-Ono model \cite{NS1}. In the context of conformal field theory and integrable perturbations, explicit formulae for singular vectors of
the Virasoro \cite{MY} and $q-$Virasoro algebra \cite{SSAFR} are given by Jack and Macdonald polynomials, respectively.  Let us  also mention that  Macdonald-Koornwinder polynomials provide a basis for a class of solutions of the boundary $q-$Knizhnik-Zamolodchikov equations \cite{SV}.\vspace{1mm} 

From the point of view of representation theory and harmonic analysis on quantum algebras, $q-$hypergeometric orthogonal polynomials can be understood 
as natural extensions of the hypergeometric functions associated with Lie groups. For $SU_q(2)$, it is known that matrix elements of irreducible representations
can be expressed in terms of little $q-$Jacobi polynomials. Askey-Wilson polynomials also found an interpretation as   `zonal spherical functions' on $SU_q(2)$ (for a review, see e.g. \cite{Koelink0}) and as overlap coefficients between eigenbasis that diagonalize certain elements of the so-called Zhedanov's algebra \cite{Zhed} (the  `Cartesian' $q-$deformation of $sl_2$, see also \cite{Fai,HPK}), also known as the Askey-Wilson algebra and
often denoted $AW(3)$ in the literature \cite{F1}. Macdonald-Koornwinder multivariable polynomials of various types
 also appear as zonal spherical functions 
in the context of the reflection equation algebra \cite{L,NDS}. 
In the representation theory of affine Hecke algebras, it is also known that each family of orthogonal polynomials can be associated with an irreducible  affine root  system \cite{C,C1}.\vspace{1mm}

Besides the Macdonald-Koornwinder generalization of the Askey-Wilson polynomials, a different type of multivariable generalization
 of the Askey-Wilson polynomials has been proposed \cite{F2} by Gasper and Rahman \cite{GR1}. These polynomials can be seen as $q-$analogs of the multivariable polynomials previously
 introduced by Tratnik \cite{Trat} (see also \cite{GerI}). In the present paper, Gasper-Rahman and Tratnik's multivariable polynomials and their connection with the $q-$Onsager algebra \cite{Ter03,Bas2} and the theory of quantum integrable systems will be at the center of our attention.\vspace{1mm}

Let us now describe the content of this paper. First, besides the two-variable case \cite{F3}, to our knowledge the relation between Gasper-Rahman multivariable generalization of
 Askey-Wilson polynomials \cite{GR1} and the representation theory of quantum algebras has not been considered up to now.
However, recall that Askey-Wilson polynomials with four parameters and $q$ appear in the representation theory of the $q-$Onsager algebra \cite{Ter03}, 
generating a simple example of infinite dimensional irreducible module. Also, observe that by restricting the four parameters and the variable to a discrete support, the Askey-Wilson polynomials generate a finite dimensional irreducible module. In this case, the 
well-known bispectral property of the Askey-Wilson polynomials finds a natural interpretation within the theory of Leonard pairs \cite{Terw0,Terw}. 
In the first part of this paper (Sections 2,3), we will fill a gap in the literature: we first show that Gasper-Rahman multivariable polynomials generate a family of infinite dimensional  
module of the $q-$Onsager algebra. We also construct raising and lowering operators which generalize those for the Askey-Wilson polynomials, proposed by \cite{sahi}. Secondly, restricting the parameters and choosing a discrete support for the $N$ variables, the bispectral property of Gasper-Rahman multivariable polynomials will be interpreted within the theory of tridiagonal pairs \cite{Ter03}. Parallel to the case $q\neq 1$, a similar analysis is handled in Section 5 for the special case $q=1$.
These are the first main results of this paper. \vspace{1mm}

Now, in the context of quantum integrable systems establishing a bridge between the representation theory of the $q-$Onsager algebra and Gasper-Rahman polynomials
opens new perspectives. Indeed, a systematic $q-$hypergeometric formulation of the eigenfunctions of all mutually commuting quantities of a given integrable model
in terms of known orthogonal polynomials of one or several variables is highly desirable. In addition to the examples mentionned at the beginning of the introduction
in which polynomials of the Askey scheme, Macdonald-Koornwinder generalizations or limiting cases appear, there are two important families of quantum integrable 
systems for which a $q-$hypergeometric formulation remains essentially, up to now, unrevealed: \vspace{1mm}

(i) The first class follows from the Faddeev-Reshetikhin-Takhtadjan and Sklyanin's (FRTS) framework, where the
Hamiltonians of quantum integrable models are derived from a generating function of mutually commuting quantities called the transfer matrix. The transfer matrix
is built from solutions of the Yang-Baxter and reflection equations which constitute the basic and main ingredients in the construction (see \cite{Skly88} for details). 
In this framework, solutions of the Yang-Baxter and reflection
equations can be interpreted as intertwiners of finite dimensional irreducible representations of quantum algebras. Heisenberg spin chains with periodic, twisted or open
 integrable boundary conditions fall among the well-known examples built in the FRTS framework, associated with $U_q(\widehat{sl_2})$ or higher rank generalizations.\vspace{1mm}

 (ii) The second class of quantum integrable systems is generated from the Onsager algebra \cite{Ons,Davies} or its $q-$deformation \cite{Bas2}. For instance, the Hamiltonians of the Ising \cite{Ons}, superintegrable
 chiral Potts \cite{Potts}, 
XY models \cite{Ar} and generalizations \cite{Ahn} are in one-to-one correspondence with the simplest element of an Abelian subalgebra of the Onsager algebra (the so-called Dolan-Grady hierarchy) acting on a finite dimensional irreducible module. Similarly, 
the Hamiltonian of the open XXZ chain with generic boundary conditions \cite{F4} can be expressed as a linear combination of mutually commuting quantities which generate the so-called $q-$Dolan-Grady hierarchy \cite{Bas2}, acting on a
finite or infinite dimensional irreducible module (in the thermodynamic or continuum limit).\vspace{1mm}

 For a large class of quantum integrable systems which have been studied in the framework (i) and/or 
(ii), it is in general not clear how to express the eigenfunctions of the transfer matrix in terms of $q-$hypergeometric functions and related polynomials. However,  in recent years it appeared that quantum integrable models
with boundaries associated with a $U_q(\widehat{sl_2})$ $R-$matrix, and $K-$matrices solutions of the reflection equation, can be alternatively generated from a $q-$deformed analog of the Onsager algebra \cite{Bas2} or, more generally, one of its higher rank generalization \cite{BB1}. In these cases, the models admit
a presentation either within the FRTS framework (i) or within the $q-$Onsager framework (ii). 
For the simplest $U_q(\widehat{sl_2})$ case, the correspondence between
the two frameworks has been studied in details \cite{BK}.\vspace{1mm}

In Section 4, we will consider a subclass of quantum integrable systems which possess two alternative presentations (i) and (ii) for generic values of $q$. The Hamiltonian $H$ - or more generally the corresponding transfer matrix - can be written as a linear combination of mutually commuting quantities $\{{\textsf I}_{2k+1}|k=0,1,2,...\}$ that form a $q-$deformed analog of the Dolan-Grady hierarchy. Namely,
\beqa
H=\sum_{k=0}^{N} h_{-k} {\textsf I}_{2k+1} + h'_0 \qquad \mbox{with} \quad [{\textsf I}_{2k+1},{\textsf I}_{2l+1}]=0, \qquad k,l \in {\mathbb N}, \label{Hamil}
\eeqa
where $h_{-k}$, $k=0,1,2,...$ and $h'_0$ are
model-dependent  scalar functions \cite{F5}.  Let
$\omega_0,\omega_1,g_+,g_-$  be scalars. The elements ${\textsf
  I}_{2k+1}$ are given by \cite{F6}: 
\beqa
{\textsf I}_{2k+1}= \omega_0{\textsf W}_{-k}+ \omega_1{\textsf W}_{k+1}+ g_+{\textsf G}_{k+1} + g_-{\tilde{\textsf G}}_{k+1}, \label{qDGhier}
\eeqa
where ${\textsf W}_{-k},{\textsf W}_{k+1},{\textsf G}_{k+1},{\tilde{\textsf G}}_{k+1}$ generate the current algebra of the $q-$ Onsager algebra \cite{BSh}. For all known examples of the literature considered up to now, ${\textsf W}_{-k},{\textsf W}_{k+1}$ (resp. ${\textsf G}_{k+1},{\tilde{\textsf G}}_{k+1}$)  are polynomials of total degree $2k+1$ (resp. $2k+2$) in the fundamental generators  ${\textsf W}_0,{\textsf W}_1$ of the $q-$Onsager algebra with defining relations shown later in (\ref{qDG}). Their explicit expressions can be computed recursively using the algorithm given in \cite{BB2}. For instance,
the elements ${\textsf{G}}_{1},{\textsf{W}}_{-1},{\textsf{G}}_{2}$ read:
\beqa
&&{\textsf{G}}_{1} = \big[{\textsf{W}_1},\textsf{W}_0\big]_q \ ,\label{defel}\\
&&{\textsf{W}}_{-1} = \frac{1}{\rho}\left( (q^2+q^{-2})\textsf{W}_0\textsf{W}_1\textsf{W}_0 -\textsf{W}_0^2\textsf{W}_1 - \textsf{W}_1 \textsf{W}_0^2\right) + \textsf{W}_1 \ ,\nonumber\\
&&{\textsf{G}}_{2} = \frac{1}{\rho(q^2+q^{-2})} \Big( (q^{-3}+q^{-1}) \textsf{W}_0^2{\textsf{W}_1}^2 - (q^{3}+q){\textsf{W}_1}^2\textsf{W}_0^2 + (q^{-3}-q^{3})(\textsf{W}_0{\textsf{W}_1}^2\textsf{W}_0 + {\textsf{W}_1}\textsf{W}_0^2{\textsf{W}_1})  \nonumber\\
&&\qquad \qquad  - (q^{-5}+q^{-3} +2q^{-1}) \textsf{W}_0{\textsf{W}_1}\textsf{W}_0{\textsf{W}_1} + (q^{5}+q^{3} +2q){\textsf{W}_1}\textsf{W}_0{\textsf{W}_1}\textsf{W}_0 +  \rho(q-q^{-1})(\textsf{W}_0^2 + {\textsf{W}_1}^2)\Big) \ .\nonumber 
\eeqa
The expressions for the elements $\tilde{\textsf{G}}_{1},{\textsf{W}}_{2},\tilde{\textsf{G}}_{2}$ are obtained from ${\textsf{G}}_{1},{\textsf{W}}_{-1},{\textsf{G}}_{2}$
by exchanging ${\textsf W}_0\leftrightarrow{\textsf W}_1$.
Note that the vector space on which ${\textsf W}_0,{\textsf W}_1$ act is specified according to the favourite model considered. Quotients of the $q-$Onsager algebra may be considered as well (see e.g. \cite{BB3}). In this  `$q-$Onsager framework', the explicit connection between Gasper-Rahman multivariable polynomials and the representation theory of the $q-$Onsager algebra mentionned above gives a straightforward access to a $q-$hypergeometric formulation of the eigenfunctions for this class of models, which is the second main result of this paper. \vspace{1mm}

The paper is organized as follows. Below, we introduce notations that will be used thoughout this paper. In Section 2, 
the orthogonal system of multivariable polynomials of Gasper and Rahman \cite{GR1}  is recalled. Then, we introduce  $q-$difference and difference operators that essentially follow from Iliev's work \cite{Iliev}, and are diagonalized by Gasper-Rahman polynomials.  Studying the structure of the operators, new infinite dimensional  modules of the $q-$Onsager algebra 
are exhibited in terms of these polynomials. In particular, explicit realizations of the standard generators of the $q-$Onsager algebra in terms of multivariable $q-$difference and difference operators are given. Generalizing the construction of Sahi \cite{sahi} to $N$-variables,  explicit realizations of raising and lowering operators are also given. In Section 3, restricting the $(N+3)-$parameter space and the variables to a discrete support, finite dimensional  modules of the $q-$Onsager algebra are constructed and described. 
In this case, the corresponding difference-difference bispectral problem is naturally interpreted within the theory of tridiagonal pairs.
 Based on these results, the spectral problem for the $q-$Dolan-Grady hierarchy of mutually commuting quantities (\ref{qDGhier}) is reconsidered in Section 4, where the eigenfunctions are shown to admit an expansion in terms of Gasper-Rahman polynomials. In particular, subspaces that are invariant under the action of the Abelian subalgebra generated by the elements (\ref{qDGhier}) are identified, extending the analysis of \cite{BK2}. These subspaces are characterized 
by certain relations that already appeared in the literature on boundary integrable models, named `Nepomechie's relations'. 
In Section 5, the construction of a hypergeometric basis for the special case $q=1$ is handled: infinite and finite dimensional  modules of the Onsager algebra and a subclass of tridiagonal algebras, respectively, are generated by Tratnik's multivariable polynomials \cite{Trat} of Krawtchouk and Racah type, respectively. In this special case $q=1$, standard generators are expressed in terms of  Iliev-Geronimo's difference operators proposed in \cite{GerI}. Perspectives for the analysis of quantum integrable models  and potential generalizations of the present framework are briefly described in the last Section.\vspace{2mm}

\subsection{Notations}
In this paper, we denote $q$ the deformation parameter, assumed not to be a root of unity. 
We will use the standard $q$-shifted factorials \cite{GR}:
\beqa
(a;q)_n=\prod_{k=0}^{n-1}(1-aq^{k}), \quad
(a;q)_{\infty}=\prod_{k=0}^{\infty}(1-aq^{k}) \quad \mbox{and} \quad
(a_1,a_2,\dots,a_k;q)_n=\prod_{j=1}^k(a_j;q)_n.\nonumber
\eeqa
\vspace{1mm}

Define $\mathcal{P}_z={\mathbb C}[ z^{\pm 1}_1,z_2^{\pm 1},...,z_N^{\pm 1}]$ as the ring of  Laurent polynomials in the variables $z_1,z_2,...,z_N$ with complex coefficients. For each of the variables $z_j$, $j=1,2,...,N$, introduce an
involution $I_j$ of $\mathcal{P}_z$ such that $I_j(z_j)=z_j^{-1}$ and $I_j(z_k)=z_k$ for $k=1,2,...,N$ and $k\not= j$. Then, we will denote $\mathcal{P}_x= {\mathbb C} [ x_1,x_2,...,x_N ]$ the subring of  $\mathcal{P}_z$  consisting of polynomials in the variables $x_i=(z_i+z_i^{-1})/2$ with complex coefficients.

\vspace{1mm}

 Let $\{e_1,e_2,\dots,e_N\}$ be the canonical basis for ${\mathbb C}^N$.  We also define \cite{F7} the $q$-shift, forward and backward difference operators in the $j$-th coordinate acting on functions $f(z)\equiv f(z_1,z_2,\dots, z_N)$, respectively, such that:
\begin{eqnarray*}
\oE_{z_j} f(z)&=&f(z_1,z_2,\dots,q^2z_j,\dots,z_N),\\ 
\oDelta_{z_j} f(z)&=&(\oE_{z_j}-1)f(z),\\  
\onabla_{z_j} f(z)&=&(1-\oE_{z_j}^{\ -1})f(z).
\end{eqnarray*}
Similarly, $E_{n_k}$ will denote the forward shift  in the variable $n_k$. For every function $f(n)\equiv f(n_1,n_2,\dots,n_N)$ we have $E_{n_k}f(n) = f(n+e_k)$.
\vspace{1mm}

The standard multi-index notation will be also used. For instance, if $\nu=(\nu_1,\nu_2,\dots,\nu_N)\in{\mathbb Z}^N$ then 
\beqa
z^{\nu}=z_1^{\nu_1}z_2^{\nu_2}\cdots z_N^{\nu_N},\quad
\oE_{z}^{\nu}=\oE_{z_1}^{\nu_1}\oE_{z_2}^{\nu_2}\cdots \oE_{z_N}^{\nu_N},\quad
zq^{\nu}=(z_1q^{\nu_1},z_2q^{\nu_2},\dots,z_Nq^{\nu_N}).\nonumber 
\eeqa
Let $\nu = (\nu_1,\nu_2,...,\nu_N)\in \{0,\pm 1\}^N \backslash \{0\}^N$. Following \cite{Iliev}, we define $\nu_j^+=max(\nu_j,0)$ and $\nu_j^-=-min(\nu_j,0)$.  We denote $I^{\nu^-}$ the composition of the involutions $I_j$ corresponding to the non-zero coordinates of $\nu^-$. Also, we denote $|\nu|=\nu_1+\nu_2+...+\nu_N$.
\vspace{4mm}

\section{The $q-$Onsager algebra and $q-$difference operators}
Tridiagonal algebras have been introduced and studied in
\cite{Ter93,Ter01,Ter03}, where they first appeared in the context
of $P-$ and $Q-$polynomial association schemes. A tridiagonal algebra is an associative algebra with unit which consists of two generators ${\textsf A}$ and
 ${\textsf A}^*$ called the standard generators (see Section 3, Definition \ref{TDgendef}). In general, the defining relations depend on five scalars $\rho,\rho^*,\gamma,\gamma^*$ and $\beta$ \cite{Ter03}. 
In this Section, we will focus on the so-called {\it reduced} parameter sequence $\gamma=0,\gamma^*=0$, $\beta=q^2+q^{-2}$ and $\rho=\rho^*$ which exhibits all  the
interesting properties and can be extended to more general parameter sequences when $q\neq 1$. In the literature, the corresponding algebra is called the $q-$Onsager algebra, 
in view of its close relationship to the Onsager algebra \cite{Ons} and the Dolan-Grady relations \cite{DG}.
\begin{defn}[\cite{Ter03,Bas2,BB1}]
The $q-$Onsager algebra $O_q(\widehat{sl_2})$ is the associative algebra with unit and standard generators $\textsf{W}_0,\textsf{W}_1$ subject to the following relations \cite{F8}
\beqa
[\textsf{W}_0,[\textsf{W}_0,[\textsf{W}_0,\textsf{W}_1]_q]_{q^{-1}}]=\rho[\textsf{W}_0,\textsf{W}_1]\
,\qquad
[\textsf{W}_1,[\textsf{W}_1,[\textsf{W}_1,\textsf{W}_0]_q]_{q^{-1}}]=\rho[\textsf{W}_1,\textsf{W}_0]\
\label{qDG} . \eeqa
\end{defn}
\begin{rem} For $\rho=0$ the relations (\ref{qDG})
reduce to the $q-$Serre relations of $U_{q}(\widehat{sl_2})$. For $q=1$, $\rho=16$ they coincide with the Dolan-Grady relations \cite{DG}.
\end{rem}

In \cite{Ter03}, an infinite dimensional irreducible module of the $q-$Onsager algebra based on the
Askey-Wilson polynomials was constructed. On the vector space of all polynomials in the single variable $x$, 
it was shown that the elements ${\textsf W}_0,{\textsf W}_1$ act, respectively, as a multiplicative
 factor in the variable $x$  and as the Askey-Wilson second order $q-$difference operator. Alternatively, in the basis of Askey-Wilson polynomials, the elements ${\textsf W}_0,{\textsf W}_1$ act as a tridiagonal matrix and diagonal matrix, respectively.  Thus, the well-known $q-$difference equation and 
three-term recurrence relations of the Askey-Wilson polynomials can be interpreted as a bispectral  $q-$difference-difference  problem for the elements ${\textsf W}_0,{\textsf W}_1$. Note that similar objects appear in previous works \cite{GLZ,GH2,NoS} and that the connection with the $q-$Onsager algebra is emphasized in \cite{Ter03}.
\vspace{1mm}

In this Section, we construct a family of infinite dimensional  modules of the $q-$Onsager algebra on the vector space $\mathcal{P}_x$ based on the Gasper-Rahman polynomials in the multivariables $x_1,x_2,...,x_N$ with coefficients in ${\mathbb C}$, thus generalizing the example of \cite{Ter03}. As we will show, to the action of ${\textsf W}_0,{\textsf W}_1$, one can associate a bispectral system of coupled recurrence relations and multivariable $q-$difference equations that is solved by the polynomials. In particular, the case $N=2$ is described in detail. Starting from the bispectral system of equations, we then extend the analysis of Sahi \cite{sahi} that holds for $N=1$ to generic values of $N$: $N$ distinct pairs of raising/lowering operators for the Gasper-Rahman polynomials are obtained, expressed in terms of the  standard generators of the $q-$Onsager algebra. 
\vspace{1mm}

\subsection{Bispectral multivariable Askey-Wilson $q-$orthogonal polynomials}
First, let us recall the definition of the multivariable polynomials introduced by Gasper and Rahman in \cite{GR1} using the notations of \cite{Iliev}. As multivariable generalizations of the well-known Askey-Wilson polynomials, these polynomials are  obtained by applying a Gram-Schmidt process. Let $q, a, b, c, d$ denote nonzero scalars in ${\mathbb C}$. To avoid degenerate situations, we assume $q$ is not a root of unity, and that none of $ab, ac, ad, bc, bd, cd, abcd$ is an integral power of $q$. For $n = 0,1,2,...$ the Askey-Wilson polynomials are defined by \cite{AW}:
\beqa
p_n(x;a,b,c,d)=\frac{(ab,ac,ad;q)_n}{a^n} \fpt{q^{-n}}{abcdq^{n-1}}{az}{az^{-1}}{ab}{ac}{ad}\quad \mbox{with}\quad x=\frac{1}{2}(z+\frac{1}{z}). \label{awpoly}
\eeqa
By definition, they satisfy a second-order $q-$difference equation and a three-term recurrence relation of the form:
\beqa
\varphi(z)p_n(x) |_{z\rightarrow q z} + \varphi(z^{-1})p_n(x) |_{z\rightarrow q^{-1} z}   + \mu(z,z^{-1}) p_n(x) &=& (q^{-n}+abcdq^{n-1}) p_n(x),\label{biAW}\\
 b_n p_{n+1}(x) + a_n p_{n}(x) + c_n p_{n-1}(x)  &=& x p_n(x)\label{biAW2},
\eeqa
where $\varphi(z),\mu(z,z^{-1})$ are rational functions of $z,z^{-1}$ and $a_n,b_n,c_n$ are rational functions of $q^n$ and of the parameters $a,b,c,d$. Their explicit expressions can be found in \cite{AW}.\vspace{1mm}

If $a,b,c,d$ are real, or occur in complex conjugate pairs if complex, and are such that max$(|a|,|b|,|c|,|d|)<1$, the polynomials satisfy the orthogonality relation \cite{F9}:
\beqa
\frac{1}{2\pi}\int_{-1}^{1}p_n(x;a,b,c,d)p_m(x;a,b,c,d)\frac{w(z;a,b,c,d)}{\sqrt{1-x^2}}dx = \delta_{n,m}\frac{(abcdq^{n-1};q)_n (abcdq^{2n};q)_\infty}{ (q^{n+1},abq^n,acq^n,adq^n,bcq^n,bdq^n,cdq^n;q)_\infty}\nonumber
\eeqa
where
\beqa
w(z;a,b,c,d)=\frac{(z^2,z^{-2};q)_{\infty}}{(az,az^{-1},bz,bz^{-1},cz,cz^{-1},dz,dz^{-1};q)_{\infty}}.\nonumber
\eeqa

In the limit $q\rightarrow 1$ and various limiting processes on the parameters $a,b,c,d$, one can obtain from the Askey-Wilson polynomials all the classical polynomials of continuous and discrete arguments: Wilson, Racah, Hahn, Jacobi, Krawtchouk, etc \cite{KS}.

\vspace{1mm}

In \cite{Trat}, a multivariable generalization of the Racah-Wilson polynomials was constructed. From that point of view, Gasper-Rahman polynomials \cite{GR1} are $q-$analogs of the multivariable polynomials introduced by Tratnik. They constitute a multivariable generalization of the Askey-Wilson polynomials (\ref{awpoly}), defined as follows:
\begin{defn}[See \cite{Iliev,GR1}] Define ${\op}_n(x;a,b,c,d)=p_n(x;a,b,c,d)|_{q\rightarrow q^2}$ from (\ref{awpoly}). Introduce the $N+3$ nonzero  parameters $\alpha_0,\alpha_1,...,\alpha_{N+2}$ and denote $x_j=\frac{1}{2}(z_j+z_j^{-1})$. Identify $z_0\equiv \alpha_0$ and $z_{N+1}\equiv \alpha_{N+2}$. The $N-$variable generalization of  the Askey-Wilson polynomials is defined by:
\beqa
Q^{(N)}(\{n\},\{x\},\{\alpha\}) = \prod_{j=1}^{N} {\op}_{n_j}(x_j;   \alpha_j q^{2\kN_{j-1}}, \frac{\alpha_j}{\alpha_0^2}q^{2\kN_{j-1}},\frac{\alpha_{j+1}}{\alpha_j}z_{j+1},\frac{\alpha_{j+1}}{\alpha_j}z^{-1}_{j+1})\label{AWgen}
\eeqa
where $\kN_j=n_1+n_2+...+n_j$ and $\kN_0=0$.
\end{defn}

\vspace{2mm}

An analog of the orthogonality property of the Askey-Wilson polynomials can be exhibited as follows. Introduce the measure on $[-1,1]^N$:
\beqa
d\mu(x) = \frac{1}{(2\pi)^N} \frac{\prod_{j=1}^N (z_j^2,z_j^{-2};q^2)_{\infty}}{  \prod_{j=0}^N \prod_{\varepsilon_1,\varepsilon_2\in\{-1,1\}}(\alpha_{j+1} \alpha_j^{-1}z^{\varepsilon_1}_{j+1}z_j^{\varepsilon_2};q^2)_{\infty} } \prod_{j=1}^N \frac{dx_j}{\sqrt{1-x_j^2}}
\eeqa
and the inner product:
\beqa
\langle f,g \rangle = \int_{[-1,1]^N} f(x)g(x) d\mu(x).\label{inprod}
\eeqa
\vspace{2mm}

Following \cite[Theorem~2.1]{Iliev}, by induction on $N$ it is straightforward to show the following theorem:
\begin{thm}[See \cite{Iliev}, Theorem 2.1]\label{thmortho} Assume:
\beqa
&&0 < |\alpha_{N+1}| <  |\alpha_{N}| < \cdots < |\alpha_1| <\mbox{min}(1,|\alpha_0|^2), \label{condpar}\\
&& \frac{|\alpha_{N+1}|}{|\alpha_N|} < |\alpha_{N+2}| <  \frac{|\alpha_{N}|}{|\alpha_{N+1}|}.\nonumber
\eeqa
The multivariable polynomials $Q^{(N)}(\{n\},\{x\},\{\alpha\})$ are orthogonal with respect to the inner product (\ref{inprod}). Namely:
\beqa
\langle Q^{(N)}(\{n\},\{x\},\{\alpha\}),Q^{(N)}(\{m\},\{x\},\{\alpha\})   \rangle = \delta_{\{n\},\{m\}} {\overline H}_{\{n\}}\label{ortH}
\eeqa
with
\beqa
{\overline H}_{\{n\}}=  \prod_{k=1}^{N}\frac{(\frac{\alpha_{k+1}^2}{\alpha_0^2} q^{2(\kN_{k-1}+\kN_k-1)} ;q^2 )_{n_k} (\frac{\alpha_{k+1}^2}{\alpha_0^2} q^{2\kN_k};q^2 )_{\infty}}{   (q^{2(n_k+1)}, \frac{\alpha_k^2}{\alpha_0^2} q^{2(\kN_{k-1}+\kN_k)},\frac{\alpha_{k+1}^2}{\alpha_0^2}q^{2(\kN_{k-1}+\kN_k)} ;q^2)_\infty } \times \frac{1}{ \prod_{\varepsilon\in \{\pm 1\}}(\alpha_{N+1}\alpha^{\varepsilon}_{N+2}q^{2\kN_N}, \frac{\alpha_{N+1}}{\alpha_0^2}\alpha_{N+2}^\varepsilon q^{2\kN_N};q^2)_\infty}\nonumber.
\eeqa
\end{thm}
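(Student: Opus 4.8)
The plan is to induct on the number of variables $N$, peeling off the integration over $x_1$ and reducing each instance to an $(N-1)$-variable instance of the same statement. For the base case $N=1$ the frozen value $z_2\equiv\alpha_3$ makes $Q^{(1)}(\{n_1\},\{x_1\},\{\alpha\})$ a single Askey--Wilson polynomial $\op_{n_1}(x_1;\alpha_1,\alpha_1/\alpha_0^2,\alpha_2\alpha_3/\alpha_1,\alpha_2/(\alpha_1\alpha_3))$, while the measure $d\mu$ reduces to exactly the Askey--Wilson weight $w(z_1;a,b,c,d)$ with $q\to q^2$; the statement \eqref{ortH} is then the one-variable orthogonality relation recalled above, and $\overline{H}_{\{n_1\}}$ is a rewriting of its norm.

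For the inductive step I would first isolate the $z_1$-dependence. Inspecting \eqref{AWgen} and the measure, the only factors involving $z_1$ are the $j=1$ factor $\op_{n_1}$ (resp.\ $\op_{m_1}$), whose parameters $a=\alpha_1$, $b=\alpha_1/\alpha_0^2$, $c=(\alpha_2/\alpha_1)z_2$, $d=(\alpha_2/\alpha_1)z_2^{-1}$ are independent of all the $n_j$ (since $\kN_0=0$), together with the $j=0,1$ factors of $d\mu$; these recombine precisely into $w(z_1;a,b,c,d)|_{q\to q^2}$. Because $|z_2|=1$, the hypotheses \eqref{condpar} force $\max(|a|,|b|,|c|,|d|)<1$ with $c,d$ a complex-conjugate pair, so the Askey--Wilson orthogonality applies in the variable $x_1$ and the $x_1$-integration produces $\delta_{n_1,m_1}$ times the Askey--Wilson norm $h_{n_1}(z_2)$, which still depends on $z_2$ through the products $ac,ad,bc,bd$.

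Next I would identify the reduced problem. Once $n_1=m_1$, the surviving product $\prod_{j\ge2}\op_{n_j}$ is exactly $Q^{(N-1)}(\{n_2,\dots,n_N\},\{x_2,\dots,x_N\},\{\tilde\alpha\})$ with the shifted parameters $\tilde\alpha_0=\alpha_0$, $\tilde\alpha_k=\alpha_{k+1}q^{2n_1}$ for $1\le k\le N$ and $\tilde\alpha_{N+1}=\alpha_{N+2}$, as one checks factor by factor using $\tilde{\kN}_k=\kN_{k+1}-n_1$. The $z_2$-dependent factors of $h_{n_1}(z_2)$ (namely the $ac,ad,bc,bd$ terms) multiply the factors of $d\mu$ not consumed by the $x_1$-integration to reproduce exactly the $(N-1)$-variable measure built from $\{\tilde\alpha\}$; and \eqref{condpar} for $\{\tilde\alpha\}$ follows from \eqref{condpar} for $\{\alpha\}$ (for $0<q<1$ the common factor $q^{2n_1}\le1$ preserves the ordering, while in the boundary inequalities the shift cancels in all ratios). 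Hence the induction hypothesis applies to the remaining $(N-1)$-fold integral, returning $\delta_{\{n_2,\dots\},\{m_2,\dots\}}$ times the reduced norm $\overline{H}_{\{n_2,\dots,n_N\}}$ evaluated at $\{\tilde\alpha\}$; combined with $\delta_{n_1,m_1}$ this gives the full Kronecker delta $\delta_{\{n\},\{m\}}$.

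The one genuinely delicate point, which I expect to be the main obstacle, is the resulting norm identity. Writing $C_{n_1}$ for the $z_2$-independent remainder of $h_{n_1}$ (the factors formed from $abcd=\alpha_2^2/\alpha_0^2$, $ab$, $cd$ and $q$), one must show that $C_{n_1}$ times the reduced norm $\overline{H}_{\{n_2,\dots,n_N\}}$ — taken at the shifted parameters $\{\tilde\alpha\}$ and re-expressed in the original $\{\alpha\}$ — equals $\overline{H}_{\{n\}}$. Under the substitution $\tilde\alpha_k=\alpha_{k+1}q^{2n_1}$ each index becomes $\tilde{\kN}_k=\kN_{k+1}-n_1$ and the base arguments acquire compensating powers of $q^{2n_1}$; the comparison of the infinite $q^2$-products is then handled by the identity $(x;q^2)_\infty=(x;q^2)_r\,(xq^{2r};q^2)_\infty$, which converts the shifts into finite Pochhammer factors that must reassemble, together with $C_{n_1}$, into the $k=1$ factor of $\overline{H}_{\{n\}}$, the boundary factor being invariant under the shift. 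Verifying that these finite factors cancel exactly — keeping the Askey--Wilson normalization consistent with the stated closed form of $\overline{H}_{\{n\}}$ — is the only step requiring real care; the nested product structure of $Q^{(N)}$ and of $d\mu$ makes everything else automatic.
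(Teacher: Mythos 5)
Your proposal is correct and follows essentially the same route as the paper, which simply invokes Iliev's Theorem 2.1 and notes that the result follows ``by induction on $N$'': you carry out precisely that induction, peeling off the $x_1$-integration via the one-variable Askey--Wilson orthogonality (whose weight is exactly the $j=0,1$ factors of $d\mu$) and identifying the residual $(N-1)$-fold integral as the same statement for the shifted parameters $\tilde\alpha_0=\alpha_0$, $\tilde\alpha_k=\alpha_{k+1}q^{2n_1}$, $\tilde\alpha_{N+1}=\alpha_{N+2}$, with the hypotheses \eqref{condpar} preserved. The one step you describe rather than execute --- reassembling $C_{n_1}$ and the shifted Pochhammer symbols into the stated closed form of $\overline{H}_{\{n\}}$ --- is exactly the bookkeeping the paper itself leaves entirely to the citation.
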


\vspace{2mm}

For further convenience, normalized multivariable polynomials can be introduced following \cite{Iliev}. 
Assume the complex variables  $\{z\}=(z_1,z_2,...,z_N)\in ({\mathbb C}^*)^N$, the complex variables \cite{F10} $\{n\}=(n_1,n_2,...,n_N)\in  ({\mathbb C} \ \mbox{mod} \ \frac{2\pi i}{log(q)}{\mathbb Z})^N$ and the parameters $\{\alpha\}=(\alpha_0,\alpha_1,...,\alpha_{N+2})\in ({\mathbb C}^*)^{N+3}$.
\begin{defn}[See \cite{Iliev}]
 The normalized $N-$variable Gasper-Rahman mutlivariable polynomials is defined by:
\beqa
\qquad \hQ^{(N)}(\{n\},\{x\},\{\alpha\}) = \frac{(\alpha_{N+1}\alpha_{N+2})^{\kN_N}}{(\alpha_{N+1}\alpha_{N+2},\frac{\alpha_{N+1}\alpha_{N+2}}{\alpha_0^2};q^2)_{\kN_N}}\frac{1}{\prod_{j=1}^{N} \alpha_j^{n_j}(\frac{\alpha_{j+1}^2}{\alpha_{j}^2};q^2)_{n_j}}\ Q^{(N)}(\{n\},\{x\},\{\alpha\}). \label{normAWgen}
\eeqa
\end{defn}

\vspace{1mm}

Compared with (\ref{AWgen}), the normalization factor is chosen as follows.
Consider the following change of variables and parameters:
\beqa
\alpha_0 &\mapsto& \tilde{\alpha}_0= \alpha_0,\qquad  \alpha_j \mapsto \tilde{\alpha}_j= \frac{\alpha_0\alpha_{N+1}\alpha_{N+2}q}{\alpha_{N+2-j}} \quad \mbox{for} \quad j=1,2,...,N+1, \quad \alpha_{N+2} \mapsto \tilde{\alpha}_{N+2}= \frac{\alpha_1}{\alpha_0q},\nonumber\\ z_j &\mapsto& \tilde{z_j}= \frac{\alpha_{N+2-j}}{\alpha_0q}q^{2\kN_{N+1-j}},\nonumber\\
q^{2n_j} &\mapsto& q^{2\tilde{n}_j}=\frac{\alpha_{N+1-j} z_{N+1-j}}{\alpha_{N+2-j}z_{N+2-j}}
 \quad \mbox{for} \quad j=1,2,...,N.\nonumber
\eeqa
The map ${\mathfrak f}:(n,z,\alpha) \mapsto (\tilde{n},\tilde{z},\tilde{\alpha})$ defines an involution on $\left({\mathbb C} \ \text{mod} \ \frac{2i\pi}{log(q)}{\mathbb Z}\right)^N \times ({\mathbb C}^*)^N \times  ({\mathbb C}^*)^{N+3}$ (see  \cite[Lemma~5.1]{Iliev}). Applying Sear's formula (cf. \cite{GR}, page 49, eq. (2.10.4)), according to the factorized structure of the Gasper-Rahman polynomials (\ref{AWgen}) in terms of Askey-Wilson polynomials it is possible to show:
\begin{thm}[See \cite{Iliev}, Theorem 5.3]\label{norm} The normalized Gasper-Rahman multivariable polynomials are invariant under the action of the involution ${\mathfrak f}$:
\beqa
\hQ^{(N)}(\{n\},\{x\},\{\alpha\}) = \hQ^{(N)}(\{\tilde{n}\},\{\tilde{x}\},\{\tilde{\alpha}\}).\nonumber
\eeqa
\end{thm}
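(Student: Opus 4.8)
The plan is to exploit the fully factorized structure \er{AWgen}: since both $Q^{(N)}$ and the normalization prefactor in \er{normAWgen} are products over $j=1,\dots,N$ of single-variable objects, it suffices to control how the involution $\mathfrak f$ redistributes these factors and then to establish a one-variable self-duality for each of them. First I would isolate the base case $N=1$, which asserts the self-duality of the normalized Askey-Wilson polynomial $\op_n(x;a,b,c,d)$. Writing $\op_n$ through its terminating, balanced (Saalsch\"utzian) ${}_4\phi_3$ representation coming from \er{awpoly} (with $q\mapsto q^2$), the numerator carries the ``degree pair'' $(q^{-2n},abcdq^{2(n-1)})$ and the ``argument pair'' $(az,az^{-1})$ on an equal footing. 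Sears' transformation for a terminating balanced ${}_4\phi_3$ (\cite{GR}, eq.~(2.10.4)) permutes these parameters and produces an explicit prefactor of $q$-shifted factorials; the normalization in \er{normAWgen} is tailored precisely so that this prefactor is absorbed and one recovers the same normalized polynomial with the degree and argument data interchanged. This is the engine of the whole proof.

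For general $N$ I would apply the one-variable duality to each of the $N$ factors and then reindex the product by $j\mapsto N+1-j$ (equivalently, organise this as an induction on $N$, peeling off the last Askey-Wilson factor). The subtlety, and what I expect to be the main obstacle, is the bookkeeping of the partial sums $\kN_j=n_1+\cdots+n_j$: the $j$-th factor of $Q^{(N)}$ carries the parameters $(\alpha_j q^{2\kN_{j-1}},\,\alpha_j\alpha_0^{-2}q^{2\kN_{j-1}},\,\alpha_{j+1}\alpha_j^{-1}z_{j+1},\,\alpha_{j+1}\alpha_j^{-1}z_{j+1}^{-1})$, so the shifts $q^{2\kN_{j-1}}$ couple neighbouring factors. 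Under $\mathfrak f$ the index ordering is reversed, since $z_j\mapsto\tilde z_j$ involves $\alpha_{N+2-j}$ and $\kN_{N+1-j}$, while $q^{2\tilde n_j}$ involves $\alpha_{N+1-j}z_{N+1-j}/(\alpha_{N+2-j}z_{N+2-j})$. The natural claim is therefore that $\mathfrak f$ sends the $j$-th factor's parameter quadruple to that of the $(N+1-j)$-th factor in the tilde variables, with the degree and argument pairs swapped exactly as in Sears. I would verify this matching directly from the defining formulas for $\mathfrak f$, checking in particular that the reversed partial sums $\kN_N-\kN_{N+1-j}$ generated by the reindexing reproduce the shifts demanded by the single-variable duality.

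Once the per-factor correspondence is pinned down, the remaining task is to confirm that the product of the Sears prefactors across $j=1,\dots,N$ telescopes against the ratio of the two normalization constants in \er{normAWgen} (the one built from $\{n,z,\alpha\}$ against the one built from $\{\tilde n,\tilde z,\tilde\alpha\}$). Here I would use that $\mathfrak f$ is an involution (\cite{Iliev}, Lemma~5.1), which guarantees the consistency of the parameter ranges \er{condpar} and lets me regard the identity as one between rational functions of the $q^{n_j}$, the $z_j$ and the $\alpha$'s; it then suffices to verify it on the discrete support where the ${}_4\phi_3$'s terminate, after which it holds identically by analytic continuation. The genuinely nontrivial point throughout is the partial-sum accounting of the previous paragraph; by contrast the prefactor telescoping, while lengthy, is a routine manipulation of $q$-shifted factorials once the normalization \er{normAWgen} is in hand.
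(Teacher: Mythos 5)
Your proposal is correct and follows essentially the same route the paper indicates: the paper establishes this theorem by invoking Sears' transformation for a terminating balanced ${}_4\phi_3$ (\cite{GR}, eq.\ (2.10.4)) applied factor by factor to the product structure (\ref{AWgen}), with the normalization (\ref{normAWgen}) absorbing the resulting prefactors, deferring the detailed bookkeeping to \cite[Theorem 5.3]{Iliev}. Your sketch fills in exactly that bookkeeping (the index reversal $j\mapsto N+1-j$, the partial-sum accounting, and the telescoping of normalization constants), so there is nothing methodologically different to compare.
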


\vspace{1mm}

By analogy with the Askey-Wilson polynomials that solve the bispectral problem (\ref{biAW})-(\ref{biAW2}), the normalized Gasper-Rahman multivariable polynomials (\ref{normAWgen}) also solve a bispectral problem associated with  $q-$difference and difference operators. This is the subject of the next subsection.
\vspace{2mm}

\subsection{Iliev's $q-$difference and difference operators}
Following \cite{Iliev}, we now introduce a commutative algebra ${\cal A}_z$ of $q-$difference operators and a commutative algebra ${\cal A}_n$ of difference operators. Let 
\beqa
{\cal D}_z={\mathbb C}(z_1,z_2,...,z_N)[\oE^{\pm 1}_{z_1},\oE^{\pm 1}_{z_2},...,\oE^{\pm 1}_{z_N}]\nonumber
\eeqa
denote the associative algebra of $q-$difference operators with rational functions of $z_1,z_2,...,z_N$ as coefficients.
The commutative subalgebra ${\cal A}_z$ of ${\cal D}_z$  is generated by $N$ algebraically independent $q-$difference operators
$\{{\mathbb D}_{\{z\}}^{*(k)}|k=1,2,...,N\}$ that are defined as follows:
\begin{defn}[See \cite{Iliev}] Let $\nu = (\nu_1,\nu_2,...,\nu_N)\in \{0,\pm 1\}^N \backslash \{0\}^N$.
  Let $\{\nu_{i_1},\nu_{i_2},...,\nu_{i_s}\}$ be the nonzero components of $\nu$ with $1\leq i_1<i_2<\cdots <i_s\leq N$ and $s\geq 1$.  For $\nu\in \{0,1\}^N \backslash \{0\}^N$, denote
\beqa
\Phi_\nu(\{z\})  &=& (1-\alpha_{i_1}z_{i_1})(1-\frac{\alpha_{i_1}z_{i_1}}{\alpha_0^2})\times \frac{\prod_{k=2}^{s}    (1-\frac{\alpha_{i_k}z_{i_k}z_{i_{k-1}}}{\alpha_{i_{k-1}}})   (1-\frac{q^2\alpha_{i_k}z_{i_k}z_{i_{k-1}}}{\alpha_{i_{k-1}}}) }{ \prod_{k=1}^{s} (1-z^2_{i_k})   (1-q^2z^2_{i_k})    } \nonumber\\
&& \qquad \quad \times \ (1-\frac{\alpha_{N+1}\alpha_{N+2}z_{i_s}}{\alpha_{i_s}}) (1- \frac{\alpha_{N+1}z_{i_s}}{\alpha_{i_s}\alpha_{N+2}})\nonumber
\eeqa
and $\Phi_\nu(\{z\})=I^{\nu_-}(\Phi_{\nu_++\nu_-}(\{z\}))$ otherwise. The $q-$difference operators  $\{{\mathbb D}_{\{z\}}^{*(k)}|k=1,2,...,N\}$   are given by:
\beqa
{\mathbb D}_{\{z\}}^{*(N)} &\equiv& {\mathbb D}_{\{z\}}^{*(N)}\big(\{z_1,z_2,...,z_{N}\};\{\alpha_0,\alpha_1,...,\alpha_{N+1},\alpha_{N+2}\}\big) \label{defqdiff}\\
&=& \frac{\alpha_0 q}{\alpha_{N+1}} \Big(    \sum_{ \nu\in \{-1,0,1\}^N \backslash \{0\}^N  } \!\!\!\!\!\!\!\!\!\!\!\!(-1)^{|\nu^-|} \Phi_\nu(\{z\}) \oDelta_z^{\nu^+} \onabla_z^{\nu^-}\  + \ 1 \ + \ \frac{\alpha^2_{N+1}}{q^2\alpha_0^2}\  \Big)\nonumber\\
\mbox{and} \qquad {\mathbb D}_{\{z\}}^{*(k)} &\equiv& {\mathbb D}_{\{z\}}^{*(k)}\big(\{z_1,z_2,...,z_{k}\};\{\alpha_0,\alpha_1,...,\alpha_{k+1},z_{k+1}\}\big)  \qquad  \mbox{for}\qquad \quad k=1,2,...,N-1.\nonumber
\eeqa
\end{defn}

\vspace{3mm}

 Importantly, these operators essentially coincide with the ones given in \cite{Iliev}, up to an overall factor and additionnal constant term chosen for further convenience. According to \cite[Proposition~4.5]{Iliev}, the $q-$difference operators ${\mathbb D}_{\{z\}}^{*(k)}$, $k=1,2,...,N$ are self-adjoint and mutually commuting:
\beqa
\big[ {\mathbb D}_{\{z\}}^{*(k)}, {\mathbb D}_{\{z\}}^{*(l)}\big]=0 \quad \mbox{for all}\quad k,l\in\{1,2,...,N\}.
\eeqa
 Below, we will sometimes use an alternative formula for the $q-$difference operators introduced above as polynomials of $\{\oE_{z_j}\}$. According to    \cite[Proposition~4.2]{Iliev}, by induction on $N$ one obtains the following result:
\begin{prop}[See \cite{Iliev}, Proposition 4.2]\label{propf2} Define $z_0=\alpha_0$ and $z_{N+1}=\alpha_{N+2}$. The $N-$variable $q-$difference operator ${\mathbb D}_{\{z\}}^{*(N)}$ can be written as:
\beqa
{\mathbb D}_{\{z\}}^{*(N)} = \frac{\alpha_0q}{\alpha_{N+1}} \left( \sum_{ \nu\in \{-1,0,1\}^N  }\overline{C}_{\nu}(\{z\}) \oE_z^\nu + \frac{4\alpha_{N+1}}{\alpha_0(q^2+1)}x_0x_{N+1}\right)\label{qdiff2}
\eeqa 
where
\beqa
\overline{C}_{\nu}(z)=\left(q^2(q^2+1)\right)^{N-|{\nu}^{+}|-|\nu^{-}|}\,
\frac{\prod_{k=0}^{N}B_k^{{\nu}_k,{\nu}_{k+1}}(z)}
{\prod_{k=1}^{N}b_k^{{\nu}_k}(z)},\label{Cnu}
\eeqa
with the convention ${\nu}_0={\nu}_{N+1}=0$. For $j=0,1,...,N$,
\beqa
B_j^{0,0}(z)&=&1+\frac{\al_{j+1}^2}{q^2\al_{j}^2}-\frac{4\al_{j+1}x_{j}x_{j+1}}{(q^2+1)\al_{j}}, \qquad \qquad \qquad \quad
B_j^{1,1}(z)=\left(1-\frac{\al_{j+1}z_{j}z_{j+1}}{\al_{j}}\right)\left(1-\frac{q^2\al_{j+1}z_{j+1}z_{j}}{\al_{j}}\right),
 \nonumber\\
B_j^{0,1}(z)&=&\left(1-\frac{\al_{j+1}z_{j}z_{j+1}}{\al_{j}}\right)\left(1-\frac{\al_{j+1}z_{j+1}}{\al_{j}z_{j}}\right),\qquad
B_j^{1,0}(z)=\left(1-\frac{\al_{j+1}z_{j}z_{j+1}}{\al_{j}}\right)\left(1-\frac{\al_{j+1}z_{j}}{\al_{j}z_{j+1}}\right),
\nonumber\\
B_j^{-1,-1}(z)&=&I_{j}(I_{j+1}(B_j^{1,1}(z))),\quad B_j^{-1,l}(z)=I_j(B_j^{1,l}(z)), \quad B_j^{k,-1}(z)=I_{j+1}(B_j^{k,1}(z)) \text{ for }k,l=0,1 ,\nonumber
\eeqa
and
\beqa
b_j^{0}(z)=(1-q^2z_{j}^2)(1-q^2z_{j}^{-2}), \quad b_j^{1}(z)=(1-z_{j}^2)(1-q^2z_{j}^2),\quad
b_j^{-1}(z)=I_j(b_j^{1}(z)).\nonumber
\eeqa
\end{prop}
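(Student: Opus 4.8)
The plan is to show that the defining expression \eqref{defqdiff} for ${\mathbb D}_{\{z\}}^{*(N)}$ and the target expression \eqref{qdiff2} coincide as elements of ${\cal D}_z$ by comparing the coefficient of each shift monomial $\oE_z^\mu$, $\mu\in\{-1,0,1\}^N$. Both forms carry the same prefactor $\frac{\alpha_0q}{\alpha_{N+1}}$, so it suffices to match the bracketed operators. I would first expand every difference factor via $\oDelta_{z_j}=\oE_{z_j}-1$ and $\onabla_{z_j}=1-\oE_{z_j}^{-1}$. For fixed $\nu$, the product $\oDelta_z^{\nu^+}\onabla_z^{\nu^-}$ produces $\oE_z^\mu$ exactly when $\mu_j=\nu_j$ on $\mathrm{supp}(\mu)$ and $\mu_j=0$ off it; tracking the $\pm$ signs of the chosen factors together with the prefactor $(-1)^{|\nu^-|}$ shows that the net coefficient of $\oE_z^\mu$ equals $(-1)^{\#\{j\notin\mathrm{supp}(\mu):\nu_j\neq 0\}}\Phi_\nu$. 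Summing over all admissible $\nu$ reduces the whole statement, for $\mu\neq 0$, to the algebraic identity
\[
\overline{C}_\mu(\{z\})=\sum_{\eta}(-1)^{\#\{j:\eta_j\neq 0\}}\,\Phi_{\mu\oplus\eta}(\{z\}),
\]
where $\eta$ runs over all $\{-1,0,1\}$-fillings of the complement of $\mathrm{supp}(\mu)$ and $\mu\oplus\eta$ agrees with $\mu$ on $\mathrm{supp}(\mu)$ and with $\eta$ elsewhere; the case $\mu=0$ is identical once the additive constants $1+\alpha_{N+1}^2/(q^2\alpha_0^2)$ and the boundary term $\frac{4\alpha_{N+1}}{\alpha_0(q^2+1)}x_0x_{N+1}$ are matched against $\overline{C}_0$.

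I would prove this identity by induction on $N$, using that both sides have a nearest-neighbour (transfer-matrix) structure: $\Phi_\nu$ chains only through consecutive elements of $\mathrm{supp}(\nu)$, skipping the zero coordinates, whereas $\overline{C}_\nu$ from \eqref{Cnu} is the full chain of edge factors $B_k^{\nu_k,\nu_{k+1}}$ ($k=0,\dots,N$) and vertex factors $b_k^{\nu_k}$ ($k=1,\dots,N$) along $0,1,\dots,N+1$, each zero coordinate carrying an explicit weight $q^2(q^2+1)$ and the boundary convention $\nu_0=\nu_{N+1}=0$, $z_0=\alpha_0$, $z_{N+1}=\alpha_{N+2}$. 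The base case $N=1$ is the classical factorisation of the second-order Askey--Wilson $q$-difference operator \eqref{biAW} (with $q\to q^2$), checked directly against $B_0^{\cdot,\cdot}$, $B_1^{\cdot,\cdot}$ and $b_1^{\cdot}$. For the inductive step I would sum over the state of each off-support coordinate one at a time, starting from one end of the chain; because the coupling is nearest-neighbour, each such summation is local, and the recursive definition---in which ${\mathbb D}_{\{z\}}^{*(N-1)}$ is the $(N-1)$-variable operator with its last boundary parameter specialised to $z_N$---lets the restriction $\mu'=(\mu_1,\dots,\mu_{N-1})$ inherit the identity from the inductive hypothesis.

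The main obstacle is precisely this reconciliation of the two chain structures: I must show that summing a skipped (inactive) node against its two activated states $\pm1$, with the inclusion--exclusion sign $(-1)^{[\eta_k\neq0]}$, collapses the inserted edge factors $B_{k-1}^{\,\cdot,\eta_k}$, $B_k^{\eta_k,\cdot}$ and the denominator $b_k^{\eta_k}$ into the single zero-node contribution $B_{k-1}^{\,\cdot,0}B_k^{0,\cdot}/b_k^{0}$ times $q^2(q^2+1)$ appearing in $\overline{C}_\mu$. This is a local rational-function computation that uses the explicit forms of the $B_k^{\cdot,\cdot}$ and $b_k^{\cdot}$ in an essential way, together with the involutions $I_j$ that encode the $\nu^-$ part of $\Phi_\nu$; everything outside it is bookkeeping of supports and signs.

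Finally, since the operators used here differ from those of \cite{Iliev} only by the overall factor $\alpha_0q/\alpha_{N+1}$ and an additive constant, a shorter route is to invoke \cite[Proposition~4.2]{Iliev} for the unnormalised operator and then propagate the rescaling and the added constant $1+\alpha_{N+1}^2/(q^2\alpha_0^2)$ through the coefficients, producing the displayed $\overline{C}_\nu$ and the boundary term $\frac{4\alpha_{N+1}}{\alpha_0(q^2+1)}x_0x_{N+1}$; this is the route the present formulation follows.
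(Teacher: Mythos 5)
Your closing paragraph is exactly the paper's own argument: the paper does not prove this proposition from scratch but imports \cite[Proposition~4.2]{Iliev} "by induction on $N$", observing that the operators here differ from Iliev's only by the overall factor $\alpha_0q/\alpha_{N+1}$ and the additive constant $1+\alpha_{N+1}^2/(q^2\alpha_0^2)$ (whence the normalisation of $\overline{C}_\nu$ and the term $\frac{4\alpha_{N+1}}{\alpha_0(q^2+1)}x_0x_{N+1}$, cf.\ Remark \ref{remarkC}). That route is complete and correct as you state it. Your main body, by contrast, is a genuinely different, self-contained derivation, and its first half is sound: expanding $\oDelta_z^{\nu^+}\onabla_z^{\nu^-}$ monomial by monomial and absorbing the prefactor $(-1)^{|\nu^-|}$ does give net sign $(-1)^{\#\{j\notin\mathrm{supp}(\mu):\,\nu_j\neq0\}}$, so the proposition correctly reduces to the inclusion--exclusion identity $\overline{C}_\mu=\sum_\eta(-1)^{\#\{j:\eta_j\neq0\}}\Phi_{\mu\oplus\eta}$. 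However, the step you yourself flag as "the main obstacle" is precisely where all the content lives and it is only named, not carried out: summing an inactive node $k$ over its three states must convert the \emph{long-range} factors of $\Phi$ (which connect consecutive active indices $i_{k-1},i_k$ that may be far apart, and whose boundary factor attaches to the last active node $i_s$ rather than to $N$) into the strictly nearest-neighbour product $B_{k-1}^{\cdot,0}B_k^{0,\cdot}/b_k^0$ times $q^2(q^2+1)$, and one must also check that the result refactorises so the induction can continue to the next inactive node. Without that local computation (which is in effect the proof of Iliev's Proposition 4.2), the direct argument is an outline rather than a proof; as a verification of the statement you should either execute it or rely, as the paper does, on the citation plus the rescaling bookkeeping.
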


\vspace{2mm}

\begin{rem}\label{remarkC} According to (\ref{defqdiff}), ${\mathbb D}_{\{z\}}^{*(N)}(1)=  \ \frac{\alpha_0 q}{\alpha_{N+1}} \ + \ \frac{\alpha_{N+1}}{\alpha_0 q}\ $. From Proposition \ref{propf2}, it follows:
\beqa
\overline{C}_{\nu}(\{z\})|_{ \nu=\{0\}^N}  + \frac{4\alpha_{N+1}}{\alpha_0(q^2+1)}x_0x_{N+1}\  =  \ 1 \ + \ \frac{\alpha_{N+1}^2}{\alpha_0^2 q^2} \  - \!\!\!\!\!\! \sum_{ \nu\in \{-1,0,1\}^N\backslash \{0\}^N  \!\!\!\!\!\!\! }\overline{C}_{\nu}(\{z\}).\nonumber
\eeqa
\end{rem}

\vspace{3mm}

In view of the invariance of  the normalized Gasper-Rahman polynomials (\ref{normAWgen}) under the action of the involution ${\mathfrak f}$ (see Theorem \ref{norm}), a `dual' family of mutually commuting difference operators can be introduced.
Let 
\beqa
{\cal D}_n={\mathbb C}(q^{n_1},q^{n_2},...,q^{n_N})[E^{\pm 1}_{n_1},E^{\pm 1}_{n_2},...,E^{\pm 1}_{n_N}]\nonumber
\eeqa
denote the associative algebra of difference operators with rational functions of $q^{n_1},q^{n_2},...,q^{n_N}$ as coefficients.
Following \cite{Iliev}, introduce the map ${\mathfrak b}$  (which extends ${\mathfrak f}$) such that:
\beqa
{\kb}(\alpha_0) &=& \alpha_0,\qquad  {\kb}(\alpha_j) = \frac{\alpha_0\alpha_{N+1}\alpha_{N+2}q}{\alpha_{N+2-j}} \quad \mbox{for} \quad j=1,2,...,N+1, \quad  {\kb}(\alpha_{N+2}) = \frac{\alpha_1}{\alpha_0q},\nonumber\\
{\kb}(z_j) &=& \frac{\alpha_{N+2-j}}{\alpha_0q}q^{2(n_1+n_2+...+n_{N+1-j})},\nonumber\\
 {\kb}(\oE_{z_j}) &=& E_{n_{N+1-j}}E^{-1}_{n_{N+2-j}}  \quad \mbox{for} \quad j=1,2,...,N \quad \mbox{with} \quad E_{n_{N+1}}=Id.\nonumber
\eeqa
 The commutative subalgebra ${\cal A}_n$ of ${\cal D}_n$  is generated by $N$ algebraically independent difference operators $\{{\mathbb D}_{\{n\}}^{(k)}|k=1,2,...,N\}$ which are defined as follows:
\begin{defn}
Let $k=1,2,...,N$. The dual mutually commuting $N$-variable difference operators are defined by: 
\beqa
{\mathbb D}_{\{n\}}^{(k)}= {\kb}({\mathbb D}_{\{z\}}^{*(k)}).\label{Dnk}
\eeqa
\end{defn}

\vspace{3mm}

By construction  \cite{Iliev}, the normalized multivariable polynomials of Gasper and Rahman (\ref{normAWgen})  form a basis of the vector space ${\cal P}_x$. Extending the well-known property of the Askey-Wilson polynomials, they also solve a family of bispectral problems associated with ${\cal A}_z$ and ${\cal A}_n$. 
\begin{thm}[See \cite{Iliev}, Theorem 5.5]\label{bispec} Let $\{\alpha\}\in  ({\mathbb C}^*)^{N+3}$ and $k=1,2,...,N$. The normalized multivariable polynomial $\hQ^{(N)}(\{n\},\{x\},\{\alpha\})$ solves the following system of $q-$difference-difference bispectral problems:
\beqa
{\mathbb D}_{\{z\}}^{*(k)}  \hQ^{(N)}(\{n\},\{x\},\{\alpha\}) &=&  \left( \frac{\alpha_{k+1}}{\alpha_0 q} q^{2\kN_k} +  \frac{\alpha_0 q}{\alpha_{k+1}} q^{-2\kN_k}\right) \hQ^{(N)}(\{n\},\{x\},\{\alpha\}), \label{qdiffgen}\\
{\mathbb D}_{\{n\}}^{(k)}  \hQ^{(N)}(\{n\},\{x\},\{\alpha\}) &=&  \left( z_{N+1-k}+z_{N+1-k}^{-1}\right) \hQ^{(N)}(\{n\},\{x\},\{\alpha\})\label{recgen}
\eeqa
where
$\kN_k=n_1+n_2+\dots + n_k$.
\end{thm}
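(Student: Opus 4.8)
The plan is to prove the $q$-difference equation (\ref{qdiffgen}) first, by induction on $N$, and then to obtain the dual difference equation (\ref{recgen}) as a formal consequence by transporting (\ref{qdiffgen}) through the involution $\mathfrak f$ of Theorem \ref{norm}. Before the induction I would record a reduction of the general index $k$ to the top index. Since ${\mathbb D}_{\{z\}}^{*(k)}$ shifts only $z_1,\dots,z_k$ and carries $z_{k+1}$ as a passive end-parameter, and since the $j$-th factor in (\ref{AWgen}) depends on the $z$-variables only through $z_j$ (in $x_j$) and $z_{j+1}$ (in its last two parameters), the factors $\op_{n_{k+1}},\dots,\op_{n_N}$ are constants for ${\mathbb D}_{\{z\}}^{*(k)}$, while $\prod_{j=1}^{k}\op_{n_j}$ is exactly $Q^{(k)}$ with its end parameter $\alpha_{k+2}$ specialized to the variable $z_{k+1}$. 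Hence (\ref{qdiffgen}) for index $k$ follows from the top case $N=k$ (whose eigenvalue does not involve the end parameter), and it suffices to prove ${\mathbb D}_{\{z\}}^{*(N)}Q^{(N)}=\big(\tfrac{\alpha_{N+1}}{\alpha_0 q}q^{2\mathfrak N_N}+\tfrac{\alpha_0 q}{\alpha_{N+1}}q^{-2\mathfrak N_N}\big)Q^{(N)}$ for every $N$. As the normalization in (\ref{normAWgen}) is independent of $\{z\}$, it is immaterial here and I may work with $Q^{(N)}$ rather than $\hQ^{(N)}$.

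For the base case $N=1$ the factor $\op_{n_1}$ is a single $q^2$-Askey-Wilson polynomial with $abcd=\alpha_2^2/\alpha_0^2$, so (\ref{biAW}) (with $q\to q^2$) gives the eigenvalue $q^{-2n_1}+\tfrac{\alpha_2^2}{\alpha_0^2}q^{2(n_1-1)}$; the prefactor $\tfrac{\alpha_0 q}{\alpha_2}$ and the additive constants built into ${\mathbb D}_{\{z\}}^{*(1)}$ (cf. Remark \ref{remarkC}) convert this into $\tfrac{\alpha_2}{\alpha_0 q}q^{2n_1}+\tfrac{\alpha_0 q}{\alpha_2}q^{-2n_1}$, as required. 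For the inductive step I would use the transfer-matrix form of Proposition \ref{propf2}: the coefficients $\overline{C}_\nu$ factor along the chain $0,1,\dots,N$ through the nearest-neighbour blocks $B_k^{\nu_k,\nu_{k+1}}$, so summing over the last shift index $\nu_N\in\{-1,0,1\}$ expresses ${\mathbb D}_{\{z\}}^{*(N)}$ through ${\mathbb D}_{\{z\}}^{*(N-1)}$ (with end-parameter $z_N$) together with the one-variable Askey-Wilson operator acting on the adjoined factor $\op_{n_N}(x_N;\dots)$. Applying the inductive hypothesis to $\prod_{j=1}^{N-1}\op_{n_j}$ and equation (\ref{biAW}) to $\op_{n_N}$, the diagonal contributions add up to the claimed eigenvalue while the off-diagonal contributions cancel. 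Verifying this cancellation---that the chain blocks $B_{N-1}^{\nu_{N-1},\nu_N}$ and $B_N^{\nu_N,0}$ conspire with the Askey-Wilson shift coefficients so that every genuine $z_N$-shift telescopes away---is the \emph{main obstacle} and the only place where a real computation is needed. (Alternatively, one may avoid the telescoping entirely: on the orthogonality regime (\ref{condpar}) the $Q^{(N)}$ form an orthogonal basis of $\mathcal P_x$ for (\ref{inprod}), the mutually commuting self-adjoint operators ${\mathbb D}_{\{z\}}^{*(k)}$ preserve the degree filtration with simple joint spectrum on the leading terms, and are therefore simultaneously diagonalized by $Q^{(N)}$; reading the eigenvalues off the top-degree part and extending the resulting identity as a polynomial identity in $\{\alpha\}$ to all of $(\mathbb C^\ast)^{N+3}$ finishes the argument.)

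Finally, to deduce (\ref{recgen}) I would apply the substitution $\kb$ (equivalently, the involution $\mathfrak f$) to the identity (\ref{qdiffgen}). By (\ref{Dnk}) one has $\kb({\mathbb D}_{\{z\}}^{*(k)})={\mathbb D}_{\{n\}}^{(k)}$, and by the invariance of $\hQ^{(N)}$ under $\mathfrak f$ (Theorem \ref{norm}) the polynomial on both sides is unchanged. It remains to evaluate the image of the eigenvalue. Using $\kb(\alpha_0)=\alpha_0$, $\kb(\alpha_{k+1})=\tfrac{\alpha_0\alpha_{N+1}\alpha_{N+2}q}{\alpha_{N+1-k}}$, and $q^{2\tilde n_j}=\tfrac{\alpha_{N+1-j}z_{N+1-j}}{\alpha_{N+2-j}z_{N+2-j}}$, the product $q^{2\tilde{\mathfrak N}_k}=\prod_{j=1}^{k}q^{2\tilde n_j}$ telescopes to $\tfrac{\alpha_{N+1-k}}{\alpha_{N+1}}\tfrac{z_{N+1-k}}{\alpha_{N+2}}$ (recall $z_{N+1}=\alpha_{N+2}$), whence
$$\frac{\kb(\alpha_{k+1})}{\alpha_0 q}\,q^{2\tilde{\mathfrak N}_k}=z_{N+1-k},\qquad \frac{\alpha_0 q}{\kb(\alpha_{k+1})}\,q^{-2\tilde{\mathfrak N}_k}=z_{N+1-k}^{-1}.$$
Thus the transported eigenvalue is exactly $z_{N+1-k}+z_{N+1-k}^{-1}$, which is (\ref{recgen}). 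This last step is purely formal once the invariance of $\hQ^{(N)}$ and the substitution rules for $\kb$ are in hand, so the entire weight of the theorem rests on the inductive proof of (\ref{qdiffgen}).
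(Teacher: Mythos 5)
First, a point of reference: the paper does not prove this statement at all --- Theorem \ref{bispec} is imported verbatim from Iliev (Theorem 5.5 of \cite{Iliev}) and used as an input, so there is no in-paper proof to compare against. Your overall architecture nevertheless coincides with the route taken in the cited source: establish the $q$-difference equation (\ref{qdiffgen}) from the factorized structure (\ref{AWgen}), then obtain (\ref{recgen}) by transporting it through the involution of Theorem \ref{norm}. Several pieces of your sketch are correct and complete as written: the reduction of general $k$ to the top index (the eigenvalue indeed does not involve the end parameter $\alpha_{k+2}$, so specializing it to $z_{k+1}$ is harmless, and the factors $\op_{n_{k+1}},\dots,\op_{n_N}$ are inert under ${\mathbb D}_{\{z\}}^{*(k)}$); the base case $N=1$, where the bookkeeping with $abcd=\alpha_2^2/\alpha_0^2$ and the additive constants in (\ref{defqdiff}) checks out; and the duality step, where the telescoping of $q^{2\tilde{\kN}_k}$ and the identity $\kb(\alpha_{k+1})q^{2\tilde{\kN}_k}/(\alpha_0 q)=z_{N+1-k}$ are verified correctly, so that (\ref{recgen}) really is a formal consequence of (\ref{qdiffgen}) plus Theorem \ref{norm}.

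The genuine gap is the inductive step, which you yourself flag as ``the main obstacle'' and then do not carry out. The claim that, after summing over $\nu_N$, the blocks $B_{N-1}^{\nu_{N-1},\nu_N}$ and $B_N^{\nu_N,0}$ of Proposition \ref{propf2} combine with the one-variable Askey--Wilson coefficients so that all genuine $z_N$-shifts cancel and the diagonal parts add to the stated eigenvalue is precisely where the entire content of (\ref{qdiffgen}) resides; in \cite{Iliev} this occupies the bulk of Section 4 and is a nontrivial identity among rational functions, not a routine verification. A subtle point you gloss over is that $\op_{n_{N-1}}$ depends on $z_N$ through its parameters $\frac{\alpha_N}{\alpha_{N-1}}z_N^{\pm 1}$, so the $(N-1)$-variable inductive hypothesis cannot simply be ``applied to $\prod_{j=1}^{N-1}\op_{n_j}$'' while the $z_N$-shifts act independently on $\op_{n_N}$: the two interact, and exhibiting the telescoping is the whole job. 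Your fallback argument has its own gap of comparable size: self-adjointness with respect to (\ref{inprod}) on the regime (\ref{condpar}) plus commutativity gives simultaneous diagonalizability in \emph{some} orthogonal basis, but to conclude the eigenvectors are the $Q^{(N)}$ with the stated eigenvalues you must still prove that each ${\mathbb D}_{\{z\}}^{*(k)}$ is triangular with respect to the relevant filtration of ${\cal P}_x$, compute its action on leading terms, and check the joint spectrum there is simple --- none of which is done. As it stands the proposal is a correct plan with the hard half of the theorem deferred rather than proved.
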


\vspace{1mm}

For $N=1$, the above equations produce  the well-known second-order $q-$difference equation (\ref{biAW}) and three-term recurrence relations  (\ref{biAW2})  of the Askey-Wilson polynomials, respectively. From the point of view of the representation theory, in the basis of Askey-Wilson polynomials, the operator ${\mathbb D}_{\{n\}}^{(1)}$ defines a semi-infinite tridiagonal matrix. For $N$ generic, in the basis of Gasper-Rahman polynomials, the corresponding matrix - which entries are determined by ${\mathbb D}_{\{n\}}^{(N)}$ - enjoys a `block' tridiagonal form, as we now show.
\begin{lem}\label{lemma} The $N-$variable difference operator ${\mathbb D}_{\{n\}}^{(N)}$ can be written as:
\beqa
 {\mathbb D}_{\{n\}}^{(N)}&=&\!\!\!\!\!\!\!\!\!\!\!\!\! \sum_{ \{\nu_2,\nu_3,\dots,\nu_N\}\in \{-1,0,1\}^{N-1}  }\!\!\!\!\!\!\!\!\!\!\!\!\! \left( 2b_{n_1 n_2 \cdots n_N}^{[\nu_N\ \nu_{N-1}-\nu_N\ \cdots \nu_2-\nu_3\ 1-\nu_2]}      E^{\nu_{N}}_{n_1}  E^{\nu_{N-1}-\nu_N}_{n_2} E^{\nu_2-\nu_3}_{n_{N-1}}  E^{1-\nu_2}_{n_N} \right. \label{qdiff3}\\
&&\qquad\qquad\qquad + \left. 2c_{n_1 n_2 \cdots n_N}^{[\nu_N\ \nu_{N-1}-\nu_N\ \cdots \nu_2-\nu_3\ -1-\nu_2]}      E^{\nu_{N}}_{n_1} E^{\nu_{N-1}-\nu_N}_{n_2}\cdots E^{\nu_2-\nu_3}_{n_{N-1}}  E^{-1-\nu_2}_{n_N} \right.\nonumber \\
 &&\qquad\qquad\qquad  +     \left. 2 a_{n_1 n_2 \cdots n_N}^{[\nu_N\ \nu_{N-1}-\nu_N\ \cdots \nu_2-\nu_3\ -\nu_2]}      E^{\nu_{N}}_{n_1} E^{\nu_{N-1}-\nu_N}_{n_2}\cdots E^{\nu_2-\nu_3}_{n_{N-1}}  E^{-\nu_2}_{n_N}\right).\nonumber
\eeqa
\end{lem}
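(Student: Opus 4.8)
The plan is to start from the explicit polynomial form of ${\mathbb D}_{\{z\}}^{*(N)}$ furnished by Proposition~\ref{propf2} and push it through the map ${\kb}$, since by definition (\ref{Dnk}) one has ${\mathbb D}_{\{n\}}^{(N)}={\kb}({\mathbb D}_{\{z\}}^{*(N)})$. Concretely, I would substitute (\ref{qdiff2}) and use that ${\kb}$ is an algebra homomorphism to distribute it over the sum, writing
\beqa
{\mathbb D}_{\{n\}}^{(N)}={\kb}\!\left(\tfrac{\alpha_0q}{\alpha_{N+1}}\right)\!\left(\sum_{\nu\in\{-1,0,1\}^N}{\kb}\big(\overline{C}_{\nu}(\{z\})\big)\,{\kb}\big(\oE_z^\nu\big)+\tfrac{4}{q^2+1}\,{\kb}\!\left(\tfrac{\alpha_{N+1}}{\alpha_0}x_0x_{N+1}\right)\right).\nonumber
\eeqa
Each ${\kb}(\overline{C}_\nu(\{z\}))$ becomes a rational function of the $q^{n_i}$ through ${\kb}(z_j)=\tfrac{\alpha_{N+2-j}}{\alpha_0q}q^{2(n_1+\cdots+n_{N+1-j})}$, hence a genuine coefficient in ${\cal D}_n$.

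The heart of the computation is the image of the shift monomials. Applying ${\kb}(\oE_{z_j})=E_{n_{N+1-j}}E^{-1}_{n_{N+2-j}}$ (with $E_{n_{N+1}}=Id$) to $\oE_z^\nu=\oE_{z_1}^{\nu_1}\cdots\oE_{z_N}^{\nu_N}$ and collecting the net power of each $E_{n_m}$, the telescoping structure yields
\beqa
{\kb}\big(\oE_z^\nu\big)=E^{\nu_N}_{n_1}\,E^{\nu_{N-1}-\nu_N}_{n_2}\cdots E^{\nu_2-\nu_3}_{n_{N-1}}\,E^{\nu_1-\nu_2}_{n_N}.\nonumber
\eeqa
I would verify this by tracking, for each $m$, that $E_{n_m}$ receives the exponent $+\nu_{N+1-m}$ from the factor $j=N+1-m$ and $-\nu_{N+2-m}$ from $j=N+2-m$, the boundary index $N+1$ contributing nothing. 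These are precisely the exponents recorded in the superscripts of (\ref{qdiff3}).

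Next I would split the outer sum over $\nu\in\{-1,0,1\}^N$ into a sum over $(\nu_2,\dots,\nu_N)\in\{-1,0,1\}^{N-1}$ together with the three choices $\nu_1\in\{+1,-1,0\}$. These produce exactly the three families of terms in (\ref{qdiff3}): $\nu_1=1$ gives the final shift $E^{1-\nu_2}_{n_N}$ and defines $2b^{[\nu_N\,\cdots\,1-\nu_2]}_{n_1\cdots n_N}$, $\nu_1=-1$ gives $E^{-1-\nu_2}_{n_N}$ and defines $2c^{[\nu_N\,\cdots\,-1-\nu_2]}_{n_1\cdots n_N}$, and $\nu_1=0$ gives $E^{-\nu_2}_{n_N}$ and defines $2a^{[\nu_N\,\cdots\,-\nu_2]}_{n_1\cdots n_N}$. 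In each case the coefficient is the corresponding ${\kb}\big(\tfrac{\alpha_0q}{\alpha_{N+1}}\overline{C}_\nu\big)$, the factor $2$ being a normalization fixing the eigenvalue $z_1+z_1^{-1}=2x_1$ of Theorem~\ref{bispec}, while the leftover $x_0x_{N+1}$ contribution is absorbed into the purely diagonal coefficient $2a^{[0\,\cdots\,0]}$ by means of the identity in Remark~\ref{remarkC}. The announced block tridiagonal form is then immediate: the outer-variable shift $E^{\nu_N}_{n_1}$ occurs only with $\nu_N\in\{-1,0,1\}$, so ${\mathbb D}_{\{n\}}^{(N)}$ couples the block of fixed $n_1$ only to its neighbours $n_1\pm1$.

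The argument is essentially a normal-ordering bookkeeping exercise that rests entirely on Proposition~\ref{propf2} and on ${\kb}$ being a homomorphism. The only place demanding care, and what I expect to be the main (though routine) obstacle, is the telescoping of the exponents under the index-reversing substitution ${\kb}$ together with the consistent matching of the coefficient superscripts to the resulting net shifts.
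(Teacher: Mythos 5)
Your proposal is correct and follows essentially the same route as the paper: apply $\kb$ to the polynomial form (\ref{qdiff2}) of Proposition~\ref{propf2}, compute the telescoped image $\kb(\oE_z^\nu)=E^{\nu_1-\nu_2}_{n_N}E^{\nu_2-\nu_3}_{n_{N-1}}\cdots E^{\nu_{N-1}-\nu_N}_{n_2}E^{\nu_N}_{n_1}$, and decompose the sum according to $\nu_1=+1,0,-1$. Your version simply spells out the exponent bookkeeping and the absorption of the $x_0x_{N+1}$ term (via Remark~\ref{remarkC}) in more detail than the paper does.
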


\begin{proof} Recall that the $q-$difference multivariable operator as defined by (\ref{defqdiff}) can be alternatively written as (\ref{qdiff2}). Now, observe
\beqa
\kb(\oE_z^\nu)=\kb(\oE_{z_1}^{\nu_1}\oE_{z_2}^{\nu_2}\cdots \oE_{z_N}^{\nu_N}) &=& (E^{\nu_1}_{n_N}\underbrace{E^{-\nu_1}_{n_{N+1}}}_{\equiv Id})( E^{\nu_2}_{n_{N-1}} E^{-\nu_2}_{n_{N}}) \cdots  (E^{\nu_{N-1}}_{n_2} E^{-\nu_{N-1}}_{n_3})( E^{\nu_{N}}_{n_1} E^{-\nu_{N}}_{n_2}) \nonumber\\
&=& E^{\nu_1-\nu_2}_{n_N}   E^{\nu_2-\nu_3}_{n_{N-1}} \cdots  E^{\nu_{N-1}-\nu_N}_{n_2}  E^{\nu_{N}}_{n_1}.\nonumber
\eeqa
Decomposing ${\mathbb D}_{\{n\}}^{(N)}$ according to the values $\nu_1=+1,0,-1$, one obtains (\ref{qdiff3}).
\end{proof}
For practical applications, the explicit expressions of the coefficients entering in the decomposition (\ref{qdiff3}) are needed. 
In the next subsection, they will be given explicitly for $N=1$ - in which case the coefficients entering in the three-term recurrence relation (\ref{biAW2}) are recovered -, as well as for $N=2$.
\begin{rem}\label{coeffn}
For $N$ generic, the coefficients entering in (\ref{qdiff3}) are given by:
\beqa
b_{n_1 n_2 \cdots n_N}^{[\nu_N\ \nu_{N-1}-\nu_N\ \cdots \nu_2-\nu_3\ 1-\nu_2]}    &=& \frac{\alpha_1}{2\alpha_{N+1}\alpha_{N+2}}  \kb\left(  \overline{C}_{\nu}(\{z\})|_{\nu_1=+1} \right),\label{bcoeff}\\
c_{n_1 n_2 \cdots n_N}^{[\nu_N\ \nu_{N-1}-\nu_N\ \cdots \nu_2-\nu_3\ -1-\nu_2]}    &=& \frac{\alpha_1}{2\alpha_{N+1}\alpha_{N+2}} \kb\left( \overline{C}_{\nu}(\{z\})|_{\nu_1=-1} \right),\label{ccoeff}\\
a_{n_1 n_2 \cdots n_N}^{[\nu_N\ \nu_{N-1}-\nu_N\ \cdots \nu_2-\nu_3\ -\nu_2]}   &=& \frac{\alpha_1}{2\alpha_{N+1}\alpha_{N+2}} \kb\left(  \overline{C}_{\nu}(\{z\})|_{\nu_1=0} \right)\label{acoeff}
\eeqa
and
\beqa
a_{n_1 n_2 \cdots n_N}^{[0 0 \cdots 0 0]}   =  \frac{\alpha_1}{2\alpha_{N+1}\alpha_{N+2}}  
+ \frac{\alpha_{N+1}\alpha_{N+2}}{2\alpha_1} \!\! &-& \!\!\!\!\!\!\!\!\!\!\!\!\!\!\!\!\! \sum_{  \nu\in \{-1,0,1\}^N \backslash \{0\}^N } \left( b_{n_1 n_2 \cdots n_N}^{[\nu_N\ \nu_{N-1}-\nu_N\ \cdots \nu_2-\nu_3\ 1-\nu_2]} + c_{n_1 n_2 \cdots n_N}^{[\nu_N\ \nu_{N-1}-\nu_N\ \cdots \nu_2-\nu_3\ -1-\nu_2]} \right.\nonumber\\
&& \qquad \quad\qquad \qquad \left.+\  a_{n_1 n_2 \cdots n_N}^{[\nu_N\ \nu_{N-1}-\nu_N\ \cdots \nu_2-\nu_3\ -1-\nu_2]}    \right).\nonumber
\eeqa
\end{rem}
Note that the expression of the last coefficient follows from Remark \ref{remarkC}.\vspace{1mm}

Now, according to Lemma \ref{lemma}, it is clear that ${\mathbb D}_{\{n\}}^{(N)}$ doesn't leave invariant the eigenspace of ${\mathbb D}_{\{z\}}^{*(N)}$ generated by $\hQ^{(N)}(\{n\},\{x\},\{\alpha\})$. Precisely, the action of  ${\mathbb D}_{\{n\}}^{(N)}$ is as follows:
\vspace{1mm}

\begin{prop}\label{blocktri} Let $p\in{I\!\!N}$ be fixed. Let $V_p$  denote the eigenspace of ${\mathbb D}_{\{z\}}^{*(N)}$   generated by the normalized multivariable polynomials $\{\hQ^{(N)}(\{n\},\{x\},\{\alpha\})| \ n_1+n_2+...+n_N=p\}$. On $V_p$, the $N-$variable difference operator acts as:
\beqa
{\mathbb D}_{\{n\}}^{(N)} V_{p}  \subseteq V_{p+1} + V_{p}+ V_{p-1}.
\eeqa
\end{prop}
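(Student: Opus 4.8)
The plan is to obtain the claim directly from the explicit shift-operator form of $\mathbb{D}_{\{n\}}^{(N)}$ provided by \lemref{lemma}, the whole content being that each monomial of difference operators occurring in $\mathbb{D}_{\{n\}}^{(N)}$ changes the total degree $\kN_N=n_1+\cdots+n_N$ by $+1$, $0$ or $-1$ and nothing else.

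First I would record, from the computation in the proof of \lemref{lemma}, the image under $\kb$ of a $q$-shift monomial,
\[
\kb(\oE_z^\nu)=E_{n_N}^{\nu_1-\nu_2}E_{n_{N-1}}^{\nu_2-\nu_3}\cdots E_{n_2}^{\nu_{N-1}-\nu_N}E_{n_1}^{\nu_N},\qquad \nu_{N+1}:=0,
\]
and note that the total exponent of this product of forward shifts telescopes,
\[
\sum_{j=1}^{N}(\nu_j-\nu_{j+1})=\nu_1-\nu_{N+1}=\nu_1\in\{-1,0,1\}.
\]
Since $E_{n_k}$ raises $n_k$ by one, $\kb(\oE_z^\nu)$ sends a function indexed by $\{n\}$ to one indexed by a translate of $\{n\}$ whose total degree $\kN_N$ is raised by exactly $\nu_1$. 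Applying $\kb$ termwise to the form \er{qdiff2} of $\mathbb{D}_{\{z\}}^{*(N)}$ then writes $\mathbb{D}_{\{n\}}^{(N)}$ as a sum $\sum_\nu \kb(\overline{C}_\nu)\,\kb(\oE_z^\nu)$ together with the image of the $x_0x_{N+1}$ term, which carries no shift; this reproduces exactly the threefold split of \er{qdiff3}, the $b$-, $c$- and $a$-coefficients corresponding to $\nu_1=+1,-1,0$ respectively.

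Next I would evaluate $\mathbb{D}_{\{n\}}^{(N)}$ on a generator $\hQ^{(N)}(\{n\},\{x\},\{\alpha\})$ of $V_p$, i.e. with $\kN_N=p$. For this fixed index every coefficient $\kb(\overline{C}_\nu)$ is a scalar, while $\kb(\oE_z^\nu)$ acts as the lattice translation above, so each term is a scalar multiple of the \emph{single} basis vector $\hQ^{(N)}(\{n+\text{shift}\},\{x\},\{\alpha\})$ whose degree is $p+\nu_1$. By \er{qdiffgen} (with $k=N$) that vector is an eigenvector of $\mathbb{D}_{\{z\}}^{*(N)}$ with the eigenvalue attached to degree $p+\nu_1$, hence lies in $V_{p+\nu_1}$. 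Collecting the three possible values of $\nu_1$ gives $\mathbb{D}_{\{n\}}^{(N)}\hQ^{(N)}(\{n\},\cdot)\in V_{p+1}+V_p+V_{p-1}$, and since these generators span $V_p$ the stated inclusion follows.

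Once \lemref{lemma} is available the argument is essentially bookkeeping; the single genuine step is the telescoping identity $\sum_j(\nu_j-\nu_{j+1})=\nu_1$, which confines the degree shift to $\{-1,0,1\}$. The one point I would take care to stress is that each translated index $\{n+\text{shift}\}$ is a single lattice point, so no re-expansion in the basis is required and the dependence of the coefficients on $\{n\}$ cannot spread the image across further degrees; boundary effects, where a shifted $n_k$ leaves the admissible range, merely annihilate some coefficients and do not alter the inclusion.
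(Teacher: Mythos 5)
Your proof is correct and follows essentially the same route as the paper's: the paper's argument is precisely the observation that $\kb(\oE_z^\nu)\,\hQ^{(N)}(\{n\},\{x\},\{\alpha\})=\hQ^{(N)}(n_1+\nu_N,\,n_2+\nu_{N-1}-\nu_N,\dots,n_N+\nu_1-\nu_2,\{x\},\{\alpha\})\in V_{p+\nu_1}$, combined with the decomposition of Lemma~\ref{lemma}. Your telescoping identity $\sum_j(\nu_j-\nu_{j+1})=\nu_1$ just makes explicit why the shifted index has total degree $p+\nu_1$, which the paper leaves implicit.
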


\begin{proof} For $N$ generic, assume $p=n_1+n_2+...+n_N$ is fixed.  Observe:
\beqa
\qquad \quad \kb(\oE_z^\nu) \hQ^{(N)}(\{n\},\{x\},\{\alpha\})= \hQ^{(N)}(n_1 +\nu_N, n_2 + \nu_{N-1}-\nu_N , ..., n_N +\nu_1-\nu_2 , \{x\},\{\alpha\})  \in V_{p+\nu_1}.
\eeqa
Using Lemma \ref{lemma} and the definition of $V_{p}$,  the claim follows.
\end{proof}

\vspace{1mm}

Above arguments obviously extend to any operator ${\mathbb D}_{\{n\}}^{(k)}$ with $k=1,2,...,N-1$. All these operators are mutually commuting semi-infinite block tridiagonal matrices.
\vspace{2mm}

\subsection{New infinite dimensional modules for the $q-$Onsager algebra}
We now endow the vector space ${\mathbb C}[x_1,x_2,...,x_N]$ with a module structure of the $q-$Onsager algebra. A family of homomorphisms indexed by the integer $k=1,2,...,N$ can be exhibited as follows.
\begin{prop}\label{realqDG2} The map defined by 
\beqa
\textsf{W}_0  \mapsto   x_{N+1-k}, \qquad \textsf{W}_1  \mapsto \frac{1}{2}{\mathbb D}_{\{z\}}^{*(k)}, \qquad \rho \mapsto -\frac{(q^2-q^{-2})^2}{4} , \label{actWgen}
\eeqa
is an homomorphism from $O_q(\widehat{sl_2})$ to ${\cal D}_z$.
\end{prop}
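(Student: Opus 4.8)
The plan is to verify, as operator identities on $\mathcal{P}_x$, the two defining relations (\ref{qDG}) of $O_q(\widehat{sl_2})$ for the images $\textsf{W}_0=x_{N+1-k}$ (a multiplication operator) and $\textsf{W}_1=\frac12{\mathbb D}_{\{z\}}^{*(k)}$, with $\rho=-\frac{(q^2-q^{-2})^2}{4}$. Since the normalized polynomials $\hQ^{(N)}(\{n\},\{x\},\{\alpha\})$ form a basis of $\mathcal{P}_x$, it suffices to test each relation on this basis, and the whole computation is controlled by two structural facts. First, by the bispectral equation (\ref{qdiffgen}), $\textsf{W}_1$ is diagonal in the $\hQ$-basis, $\textsf{W}_1\hQ^{(N)}=\theta^*_{\kN_k}\hQ^{(N)}$ with $\theta^*_p=uq^{2p}+vq^{-2p}$, $u=\frac{\alpha_{k+1}}{2\alpha_0q}$, $v=\frac{\alpha_0q}{2\alpha_{k+1}}$ and $uv=\frac14$; thus $\mathcal{P}_x=\bigoplus_pV^*_p$ is graded by the eigenspaces $V^*_p=\mathrm{span}\{\hQ^{(N)}:\kN_k=p\}$. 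Second, by the dual equation (\ref{recgen}) and the block tridiagonal form of ${\mathbb D}_{\{n\}}^{(k)}$ (Lemma \ref{lemma}, Proposition \ref{blocktri}), multiplication by $x_{N+1-k}$ moves $\kN_k$ by at most one unit: the telescoping of $\kb(\oE_z^\nu)$ shows the net shift of $\kN_k$ to be $\nu_{N+1-k}\in\{-1,0,1\}$ (zero if $N+1-k>k$), so $\textsf{W}_0V^*_p\subseteq V^*_{p-1}+V^*_p+V^*_{p+1}$.

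The key elementary observation is that any relation of the shape ``cubic in a diagonal operator $A$, linear in an operator $B$ tridiagonal with respect to the eigenspaces of $A$'' reduces to a scalar identity. Writing $B=R+D+L$ for the raising, diagonal and lowering parts relative to the grading, and using that $[A,\cdot]_{q^{\pm1}}$ preserves this decomposition and multiplies a component of degree $s$ on $V^*_p$ by a scalar built from $\theta^*_{p+s}$ and $\theta^*_p$, the identity $[A,[A,[A,B]_q]_{q^{-1}}]=\rho[A,B]$ splits degree by degree and, on $V^*_p$, becomes $(\theta^*_{p+s}-\theta^*_p)\big[(\theta^*_{p+s})^2-(q^2+q^{-2})\theta^*_{p+s}\theta^*_p+(\theta^*_p)^2-\rho\big]=0$. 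For $s=0$ this holds trivially; for $s=\pm1$ a one-line substitution using $uv=\frac14$ gives $(\theta^*_{p\pm1})^2-(q^2+q^{-2})\theta^*_{p\pm1}\theta^*_p+(\theta^*_p)^2=uv\big(4-(q^2+q^{-2})^2\big)=-\frac{(q^2-q^{-2})^2}{4}$, so the bracket vanishes precisely for $\rho=-\frac{(q^2-q^{-2})^2}{4}$. Applied with $A=\textsf{W}_1$ (diagonal) and $B=\textsf{W}_0$ (tridiagonal), this settles the second relation of (\ref{qDG}) and fixes the value of $\rho$.

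For the first relation of (\ref{qDG}), which is cubic in $\textsf{W}_0$ rather than $\textsf{W}_1$, the same device is unavailable directly because $\textsf{W}_0$ is not diagonal in the $\hQ$-basis; here the bispectral duality supplies the missing diagonalization. The map $\kb$ of (\ref{Dnk}), which extends the involution ${\mathfrak f}$ under which $\hQ^{(N)}$ is invariant (Theorem \ref{norm}), is an injective algebra homomorphism with $\kb(\frac12{\mathbb D}_{\{z\}}^{*(k)})=\frac12{\mathbb D}_{\{n\}}^{(k)}$ and $\kb(x_{N+1-k})=\frac12\kb(z_{N+1-k}+z_{N+1-k}^{-1})=\theta^*_{\kN_k}$. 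Applying $\kb$ turns the first relation into the identity $[A,[A,[A,B]_q]_{q^{-1}}]=\rho[A,B]$ for $A=\theta^*_{\kN_k}$ and $B=\frac12{\mathbb D}_{\{n\}}^{(k)}$; now $A$ is diagonal with the very same $q$-geometric eigenvalues $\theta^*_p$, while $B$ is tridiagonal with respect to the $\kN_k$-grading. This is again of the ``cubic in the diagonal operator'' type, so it follows from the identical scalar identity of the previous paragraph, and injectivity of $\kb$ yields the first relation on $\mathcal{P}_x$.

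The main obstacle is precisely the first relation: the crux is to recognize that $\textsf{W}_0$, though not diagonalizable in the polynomial basis, acquires a $q$-geometric spectrum on the dual lattice $\kb(z_{N+1-k})=\frac{\alpha_{k+1}}{\alpha_0q}q^{2\kN_k}$, so that the ${\mathfrak f}$-invariance of $\hQ^{(N)}$ interchanges the roles of $\textsf{W}_0$ and $\textsf{W}_1$ and makes both relations instances of one symmetric scalar identity. Everything else --- the tridiagonal shift count and the verification that the quadratic bracket equals $\rho$ --- is routine.
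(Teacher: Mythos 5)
Your proposal is correct, and your treatment of the second relation of (\ref{qDG}) (cubic in $\textsf{W}_1$) coincides with the paper's: diagonalize $\textsf{W}_1$ on the $\hQ$-basis, use the block-tridiagonality of $\textsf{W}_0$ with respect to the $\kN_k$-grading, and reduce entrywise to the scalar identity $\theta^2-(q^2+q^{-2})\theta\theta'+\theta'^2=-\tfrac{(q^2-q^{-2})^2}{4}$ for consecutive eigenvalues, which is exactly the paper's expression (\ref{eq}). Where you genuinely diverge is the first relation (cubic in $\textsf{W}_0$): the paper does \emph{not} dualize, but verifies it directly in ${\cal D}_z$, exploiting that $x_{N+1-k}$ is a multiplication operator and that ${\mathbb D}^{*(k)}_{\{z\}}$ decomposes according to its shift $\nu_{N+1-k}\in\{0,\pm1\}$ in $z_{N+1-k}$; the triple $q$-commutator then produces the factor $(2x)^2-(q^2+q^{-2})(2x)\oE^{\pm1}_{z}(2x)+(\oE^{\pm1}_{z}(2x))^2+(q^2-q^{-2})^2$, which vanishes identically (the paper's (\ref{po1})) -- note this is the very same quadratic identity for $q$-geometric pairs, with $(u,v)=(z,z^{-1})$ in place of eigenvalue data. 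Your alternative -- pushing the relation through $\kb$ so that $\textsf{W}_0$ becomes the diagonal operator $\theta^{*(k)}_{\kN_k}$ and $\textsf{W}_1$ the shift operator $\tfrac12{\mathbb D}^{(k)}_{\{n\}}$, then reusing the same argument -- is valid and has the virtue of exhibiting both relations as instances of one symmetric mechanism, but it buys this symmetry at the price of two facts you assert without proof: that $\kb$ is a well-defined algebra homomorphism ${\cal D}_z\to{\cal D}_n$ (one must check $\kb(\oE_{z_j})\kb(z_j)=q^2\kb(z_j)\kb(\oE_{z_j})$, which does hold) and that it is injective (also true, since distinct shift monomials $\oE_z^{\nu}$ map to distinct shift monomials in $n$ and the coefficient map is a field embedding). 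The paper's direct route avoids both issues and is the more economical of the two; neither difference constitutes a gap.
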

\begin{proof}  First, consider the case $k=N$.  Let $\cW^{(N)}_0,\cW^{(N)}_1$ denote the following linear transformation:
\beqa
{\mathbb C}[x_1,x_2,...,x_N] \qquad &\mapsto &   \qquad {\mathbb C}[x_1,x_2,...,x_N]  \nonumber
\\
\cW^{(N)}_0:\qquad \qquad\quad \qquad f\;\; \qquad  &\mapsto & \qquad \; x_1 f,
\label{linW1}
\\
\cW^{(N)}_1:\qquad \qquad \quad \qquad f\;\; \qquad & \mapsto  &\qquad \;  \frac{1}{2}{\mathbb D}_{\{z\}}^{*(N)} f.\label{linW2}
\eeqa
\vspace{1mm}

To prove the claim, we begin with the first relation in (\ref{qDG}). Let $\Delta^*$ denote  
\beqa
\Delta^*=[\cW^{(N)}_0,[\cW^{(N)}_0,[\cW^{(N)}_0,\cW^{(N)}_1]_q]_{q^{-1}}]+\frac{(q^2-q^{-2})^2}{4}[\cW^{(N)}_0,\cW^{(N)}_1] 
\eeqa
We now show $\Delta^*=0$. According to the definition of the multivariable $q-$difference operator, recall that $\cW^{(N)}_1$ is a linear combination of $\oE^{\nu_1}_{z_1}$ with $\nu_1=0,\pm 1$.  Then, observe:
\beqa
(2x_1)^2 - (q^2+q^{-2}) 2x_1 \oE^{\pm 1}_{z_1}(2x_1) + (\oE^{\pm 1}_{z_1}(2x_1))^2 + (q^2-q^{-2})^2 =0,\label{po1}
\eeqa
which implies $\Delta^*=0$.

We now turn to the second relation in  (\ref{qDG}).  Recall $\{\hQ^{(N)}(\{n\},\{x\},\{\alpha\})\}$, $\kN_N=0,1,2,...$ form a basis of the vector space  ${\mathbb C}[x_1,x_2,...,x_N]$. With respect to this basis, according to Theorem \ref{bispec}  the operator   $\cW^{(N)}_1$ is diagonalized by $\hQ^{(N)}(\{n\},\{x\},\{\alpha\})$ with eigenvalues:
\beqa
\theta^{*(N)}_{\{n\}} =   \frac{1}{2}\left(\frac{\alpha_{N+1}}{\alpha_0 q} q^{2\kN_N} +  \frac{\alpha_0 q}{\alpha_{N+1}} q^{-2\kN_N}\right).\label{spec}
\eeqa
On the other hand, by construction, the action of  $\cW^{(N)}_0$ on $\{\hQ^{(N)}(\{n\},\{x\},\{\alpha\})\}$ produces a linear combination of polynomials $\{\hQ^{(N)}(\{m\},\{x\},\{\alpha\})\}$ with $\kM_N=m_1+...+m_N$ and $\kM_N=0,1,2,...$.  Let $M_{\{n\}\{m\}}$ denote the entries of the corresponding matrix.  According to Proposition \ref{blocktri}, it has vanishing entries
 for $|\kN_N-\kM_N|>1$ i.e.  the operator $\cW^{(N)}_0$ acts as a block tridiagonal matrix in the basis $\{\hQ^{(N)}(\{n\},\{x\},\{\alpha\})\}$. Let $\Delta$ denote the matrix representing
\beqa
\Delta=[\cW^{(N)}_1,[\cW^{(N)}_1,[\cW^{(N)}_1,\cW^{(N)}_0]_q]_{q^{-1}}]+\frac{(q^2-q^{-2})^2}{4}[\cW^{(N)}_1,\cW^{(N)}_0] 
\eeqa
with respect to the basis $\{\hQ^{(N)}(\{n\},\{x\},\{\alpha\})\}$. We now show $\Delta = 0$. To do this, we show that each entry of $\Delta$  is zero. According to the action of $\cW^{(N)}_0,\cW^{(N)}_1$ on the basis  $\{\hQ^{(N)}(\{n\},\{x\},\{\alpha\})\}$, the matrix $\Delta$ has $(\{n\},\{m\})$ entry:
\beqa
\left((\theta^{*(N)}_{\{n\}})^2 - (q^2+q^{-2})\theta^{*(N)}_{\{n\}}\theta^{*(N)}_{\{m\}} + (\theta^{*(N)}_{\{m\}})^2 +\frac{(q^2-q^{-2})^2}{4}\right)\left(\theta^{*(N)}_{\{n\}}-\theta^{*(N)}_{\{m\}}\right)
M_{\{n\}\{m\}}.\label{eq}
\eeqa
Recall  $M_{\{n\}\{m\}}=0$ for $|\kN_N-\kM_N|>1$. Observe that the first factor of the expression (\ref{eq}) is vanishing for $|\kN_N-\kM_N|=1$  in view of the eigenvalues (\ref{spec}). Obviously $\theta^{*(N)}_{\{n\}}-\theta^{*(N)}_{\{m\}}=0$ for $|\kN_N-\kM_N|=0$. It follows $\Delta=0$. 
Both relations being satisfied, the claim is proven for $k=N$. For $1\leq k<N$, using  Theorem \ref{bispec}  similar arguments  are applied from which the claim follows. 
\end{proof}

\begin{rem} The above arguments are identical to those in \cite[p. 17-18]{Ter03} for $N=1$.
\end{rem}

Let us now describe explicit examples of homomorphisms for  $N=1,2$.

\begin{example}[See also \cite{Ter03}, Section 5] For $N=1$, on the vector space ${\mathbb C}[x_1]$ the elements ${\textsf W}_0,{\textsf W}_1$ of the $q-$Onsager algebra act, respectively, as ${\cal W}_0^{(1)}= x_1$ and as the $q-$difference operator:
\beqa
{\cal W}_1^{(1)} = \frac{\alpha_0 q}{2\alpha_2} \left(   \Phi_1(z_1) \oDelta_{z_1}  -  \Phi_1(z^{-1}_1)  \onabla_{z_1}+ 1+ \frac{\alpha_2^2}{q^2\alpha_0^2}   \right),\label{w11}
\eeqa
with
\beqa
\Phi_1(z_1)=\frac{(1-\alpha_1z_1)  (1-\frac{\alpha_1}{\alpha_0^2}z_1)  (1-\frac{\alpha_2\alpha_3}{\alpha_1}z_1)   (1-\frac{\alpha_2}{\alpha_1\alpha_3}z_1)}{(1-z_1^2)(1-q^2z_1^2)}.
\nonumber
\eeqa
Note that this operator coincides with the so-called Askey-Wilson second-order $q-$difference operator.
\end{example}

\vspace{2mm}

\begin{example} On the vector space ${\mathbb C}[x_1,x_2]$ the elements ${\textsf W}_0,{\textsf W}_1$ of the $q-$Onsager algebra act, respectively, as  ${\cal W}_0^{(2)}= x_1$  and as the $2-$variable $q-$difference operator:
\beqa
\qquad\quad  {\cal W}_1^{(2)} &=& \frac{\alpha_0 q}{2\alpha_3} \Big(  \Phi_{11}(z_1,z_2) \oDelta_{z_1} \oDelta_{z_2}  +   \Phi_{11}(z^{-1}_1,z^{-1}_2)  \onabla_{z_1} \onabla_{z_2}   - \  \Phi_{11}(z_1,z^{-1}_2) \oDelta_{z_1} \onabla_{z_2}  -   \Phi_{11}(z^{-1}_1,z_2)  \onabla_{z_1} \oDelta_{z_2}   \\
&& \qquad  \  + \  \Phi_{10}(z_1,z_2) \oDelta_{z_1}   -   \Phi_{10}(z_1^{-1},z_2)  \onabla_{z_1} + \Phi_{01}(z_1,z_2)  \oDelta_{z_2}  -   \Phi_{01}(z_1,z_2^{-1})  \onabla_{z_2} + 1+ \frac{\alpha_3^2}{q^2\alpha_0^2} \  \Big)\nonumber
\eeqa
with
\beqa
\Phi_{11}(z_1,z_2)&=&\frac{  (1-\alpha_1z_1)  (1-\frac{\alpha_1}{\alpha_0^2}z_1)     (1-\frac{\alpha_2}{\alpha_1}z_1z_2)  (1-q^2\frac{\alpha_2}{\alpha_1}z_1z_2)   (1-\frac{\alpha_3\alpha_4}{\alpha_2}z_2)   (1-\frac{\alpha_3}{\alpha_4\alpha_2}z_2)  }{(1-z_1^2)(1-q^2z_1^2)(1-z_2^2)(1-q^2z_2^2)}, \nonumber\\
\Phi_{10}(z_1,z_2)&=&\frac{  (1-\alpha_1z_1)  (1-\frac{\alpha_1}{\alpha_0^2}z_1)   (1-\frac{\alpha_3\alpha_4}{\alpha_1}z_1)   (1-\frac{\alpha_3}{\alpha_4\alpha_1}z_1)  }{(1-z_1^2)(1-q^2z_1^2)}, \nonumber\\
\Phi_{01}(z_1,z_2)&=&\frac{  (1-\alpha_2z_2)  (1-\frac{\alpha_2}{\alpha_0^2}z_2)  (1-\frac{\alpha_3\alpha_4}{\alpha_2}z_2)   (1-\frac{\alpha_3}{\alpha_4\alpha_2}z_2)  }{(1-z_2^2)(1-q^2z_2^2)} .\nonumber
\eeqa
\end{example}
Note that using the definition (\ref{defqdiff}), the explicit homomorphism (\ref{actWgen})  for $N=2$ and $k=1$ is given by ${\cal W}_0^{(1)}=x_2$ and  ${\cal W}_1^{(1)}$ acts as (\ref{w11}) with $\alpha_3\rightarrow z_2$.
\vspace{2mm}

For generic values of $N$, according to Lemma \ref{lemma} the spectral problem (\ref{recgen}) gives a coupled system of three-term recurrence relations with respect to the integer $\kN_N$ (see Proposition \ref{blocktri}).  In the basis of normalized Gasper-Rahman  multivariable polynomials (\ref{normAWgen}), in view of the homomorphism (\ref{actWgen}) for $k=N$, it implies that ${\textsf W}_0$ can be written as a block tridiagonal matrix which entries are given by (\ref{bcoeff}), (\ref{ccoeff}), (\ref{acoeff}). Let us now describe some explicit examples for $N=1,2$.\vspace{1mm}

\begin{example} For $N=1$, in the basis $\{\hQ^{(1)}(n_1,x_1,\alpha_0,\alpha_1,\alpha_2,\alpha_3)|n_1=0,1,...\}$ the operator ${\textsf W}_0$ acts as a semi-infinite tridiagonal matrix denoted ${\cW}^{(1)}_0$. It coincides with the well-known operator that produces the three-term recurrence relations for the Askey-Wilson polynomials. Following the conventions above, on $\hQ^{(1)}(n_1,x_1,\{\alpha\})$,
\beqa
 {\cW}^{(1)}_0 \quad \mbox{acts as}  \qquad b^{[1]}_{n_1} E_{n_1} + c^{[-1]}_{n_1} E^{-1}_{n_1}  + a^{[0]}_{n_1} \label{AWdiff}
\eeqa 
with
\beqa
b^{[1]}_{n_1}&=&\frac{\alpha_1}{2\alpha_2\alpha_3}\frac{  (1-\alpha_2\alpha_3q^{2n_1})  (1-\frac{\alpha_2\alpha_3}{\alpha_0^2}q^{2n_1})  (1-\frac{\alpha_2^2}{\alpha_0^2}q^{2n_1-2})  (1-\frac{\alpha_2^2}{\alpha_1^2}q^{2n_1})}{(1-\frac{\alpha_2^2}{\alpha_0^2}q^{4n_1-2})(1-\frac{\alpha_2^2}{\alpha_0^2}q^{4n_1})}\nonumber,\\
c^{[-1]}_{n_1}&=&\frac{\alpha_2\alpha_3}{2\alpha_1}\frac{  (1-q^{2n_1})  (1-\frac{\alpha_2}{\alpha_3}q^{2n_1-2})  (1-\frac{\alpha_2}{\alpha_3\alpha_0^2}q^{2n_1-2})  (1-\frac{\alpha_1^2}{\alpha_0^2}q^{2n_1-2})}{(1-\frac{\alpha_2^2}{\alpha_0^2}q^{4n_1-2})(1-\frac{\alpha_2^2}{\alpha_0^2}q^{4n_1-4})},\nonumber\\
a^{[0]}_{n_1}&=&\frac{\alpha_2\alpha_3}{2\alpha_1} +  \frac{\alpha_1}{2\alpha_2\alpha_3} - b^{[1]}_{n_1} -c^{[-1]}_{n_1}. \nonumber
\eeqa
\end{example}

\begin{example}  For $N=2$, in the basis $\{\hQ^{(2)}(n_1,n_2,x_1,x_2,\alpha_0,\alpha_1,\alpha_2,\alpha_3,\alpha_4)| n_1,n_2=0,1,...\}$ the operator ${\textsf W}_0$ acts as a semi-infinite block tridiagonal matrix denoted ${\cW}^{(2)}_0$. On $\hQ^{(2)}(n_1,n_2,x_1,x_2,\{\alpha\})$, 
\beqa
 {\cW}^{(2)}_0 \quad \mbox{acts as}  \quad &&  b^{[10]}_{n_1n_2} E_{n_1} +  b^{[01]}_{n_1n_2}E_{n_2} +  b^{[-12]}_{n_1n_2}E^{-1}_{n_1}E^2_{n_2} \label{AW2} \\
&&+ \   c^{[-10]}_{n_1n_2} E^{-1}_{n_1} +  c^{[0-1]}_{n_1n_2}E^{-1}_{n_2} +  c^{[1-2]}_{n_1n_2}E_{n_1}E^{-2}_{n_2}\nonumber \\
&&+  \  a^{[1-1]}_{n_1n_2} E_{n_1}E^{-1}_{n_2} +   a^{[-11]}_{n_1n_2} E^{-1}_{n_1}E_{n_2}  +   a^{[00]}_{n_1n_2}\nonumber
\eeqa
with
\beqa
b^{[10]}_{n_1n_2}&=& 
 \frac{\alpha_1}{2\alpha_3\alpha_4} \frac{                  
(1-\alpha_3\alpha_4q^{2\kN_2})(1-\frac{\alpha_3\alpha_4}{\alpha_0^2}q^{2\kN_2})(1-\frac{\alpha_3^2}{\alpha_0^2}q^{4n_1+2n_2-2})(1-\frac{\alpha_3^2}{\alpha_0^2}q^{4n_1+2n_2})(1-\frac{\alpha_2^2}{\alpha_0^2}q^{2n_1-2})(1-\frac{\alpha_2^2}{\alpha_1^2}q^{2n_1})
}{(1-\frac{\alpha_3^2}{\alpha_0^2}q^{4\kN_2-2})(1-\frac{\alpha_3^2}{\alpha_0^2}q^{4\kN_2})(1-\frac{\alpha_2^2}{\alpha_0^2}q^{4n_1-2})(1-\frac{\alpha_2^2}{\alpha_0^2}q^{4n_1})}  ,\nonumber \\
b^{[-12]}_{n_1n_2}&=&
\frac{\alpha_1}{2\alpha_3\alpha_4} \frac{                  
(1-\alpha_3\alpha_4q^{2\kN_2})(1-\frac{\alpha_3\alpha_4}{\alpha_0^2}q^{2\kN_2})(1-\frac{\alpha_3^2}{\alpha_2^2}q^{2n_2})(1-\frac{\alpha_3^2}{\alpha_2^2}q^{2n_2+2})(1-q^{-2n_1})(1-\frac{\alpha_0^2}{\alpha_1^2}q^{2-2n_1})
}{(1-\frac{\alpha_3^2}{\alpha_0^2}q^{4\kN_2-2})(1-\frac{\alpha_3^2}{\alpha_0^2}q^{4\kN_2})(1-\frac{\alpha_0^2}{\alpha_2^2}q^{2-4n_1})(1-\frac{\alpha_0^2}{\alpha_2^2}q^{4-4n_1})}  ,\nonumber \\
b^{[01]}_{n_1n_2}&=&   
\frac{\alpha_1}{2\alpha_3\alpha_4} \frac{                  
(1-\alpha_3\alpha_4q^{2\kN_2})(1-\frac{\alpha_3\alpha_4}{\alpha_0^2}q^{2\kN_2})(1-\frac{\alpha_3^2}{\alpha_0^2}q^{2\kN_2-2})(1-\frac{\alpha_3^2}{\alpha_1^2}q^{2\kN_2})}{(1-\frac{\alpha_3^2}{\alpha_0^2}q^{4\kN_2-2})(1-\frac{\alpha_3^2}{\alpha_0^2}q^{4\kN_2})}  - b^{[10]}_{n_1n_2} - b^{[-12]}_{n_1n_2},\nonumber
\eeqa
\beqa
c^{[-10]}_{n_1n_2}&=&   \frac{\alpha_3\alpha_4}{2\alpha_1} \frac{     
(1-\frac{\alpha_3}{\alpha_0^2\alpha_4}q^{2\kN_2-2})(1-\frac{\alpha_3}{\alpha_4}q^{2\kN_2-2})     
(1-\frac{\alpha_2^2}{\alpha_0^2}q^{4n_1+2n_2-2})(1-\frac{\alpha_2^2}{\alpha_0^2}q^{4n_1+2n_2-4})(1-q^{2n_1})(1-\frac{\alpha_1^2}{\alpha_0^2}q^{2n_1-2})}
{(1-\frac{\alpha_3^2}{\alpha_0^2}q^{4\kN_2-2})(1-\frac{\alpha_3^2}{\alpha_0^2}q^{4\kN_2-4})(1-\frac{\alpha_2^2}{\alpha_0^2}q^{4n_1-2})(1-\frac{\alpha_2^2}{\alpha_0^2}q^{4n_1-4})}  ,\nonumber \\ 
c^{[1-2]}_{n_1n_2}&=&
\frac{\alpha_3\alpha_4}{2\alpha_1}  \frac{                  
(1-\frac{\alpha_3}{\alpha_0^2\alpha_4}q^{2\kN_2-2})(1-\frac{\alpha_3}{\alpha_4}q^{2\kN_2-2})(1-q^{2n_2})(1-q^{2n_2-2})(1-\frac{\alpha_0^2}{\alpha_2^2}q^{2-2n_1})(1-\frac{\alpha_1^2}{\alpha_2^2}q^{-2n_1})
}{(1-\frac{\alpha_3^2}{\alpha_0^2}q^{4\kN_2-2})(1-\frac{\alpha_3^2}{\alpha_0^2}q^{4\kN_2-4})(1-\frac{\alpha_0^2}{\alpha_2^2}q^{2-4n_1})(1-\frac{\alpha_0^2}{\alpha_2^2}q^{-4n_1})},
\nonumber\\
c^{[0-1]}_{n_1n_2}&=&  \frac{\alpha_3\alpha_4}{2\alpha_1} \frac{                  
(1-\frac{\alpha_3}{\alpha_0^2\alpha_4}q^{2\kN_2-2})(1-\frac{\alpha_3}{\alpha_4}q^{2\kN_2-2})(1-q^{2\kN_2})(1-\frac{\alpha_1^2}{\alpha_0^2}q^{2\kN_2-2})}{(1-\frac{\alpha_3^2}{\alpha_0^2}q^{4\kN_2-2})(1-\frac{\alpha_3^2}{\alpha_0^2}q^{4\kN_2-4})} - c^{[-10]}_{n_1n_2} - c^{[1-2]}_{n_1n_2},
\nonumber
\eeqa
\beqa
a^{[1-1]}_{n_1n_2}&=&  
\frac{\alpha_1}{2\alpha_3\alpha_4} \frac{                  
(1-\frac{\alpha_2^2}{\alpha_0^2}q^{2n_1-2})(1-\frac{\alpha_2^2}{\alpha_1^2}q^{2n_1})(1-\alpha_3\alpha_4q^{2n_1})(1-\frac{\alpha_3\alpha_4}{\alpha_0^2}q^{2n_1})}{(1-\frac{\alpha_2^2}{\alpha_0^2}q^{4n_1-2})(1-\frac{\alpha_2^2}{\alpha_0^2}q^{4n_1})} 
- b^{[10]}_{n_1n_2} - c^{[1-2]}_{n_1n_2}
,\nonumber \\
a^{[-11]}_{n_1n_2}&=& \frac{\alpha_3\alpha_4}{2\alpha_1} \frac{                  
(1-q^{2n_1})(1-\frac{\alpha_1^2}{\alpha_0^2}q^{2n_1-2})(1-\frac{\alpha^2_2}{\alpha_0^2\alpha_3\alpha_4}q^{2n_1-2})(1-\frac{\alpha_2^2}{\alpha_3\alpha_4}q^{2n_1-2})}{(1-\frac{\alpha_2^2}{\alpha_0^2}q^{4n_1-2})(1-\frac{\alpha_2^2}{\alpha_0^2}q^{4n_1-4})} 
- c^{[-10]}_{n_1n_2} - b^{[-12]}_{n_1n_2}
,\nonumber \\
a^{[00]}_{n_1n_2}&=& \frac{\alpha_3\alpha_4}{2\alpha_1} +  \frac{\alpha_1}{2\alpha_3\alpha_4} -a^{[1-1]}_{n_1n_2} - a^{[-11]}_{n_1n_2} - b^{[-12]}_{n_1n_2} - b^{[10]}_{n_1n_2} - b^{[01]}_{n_1n_2} - c^{[1-2]}_{n_1n_2} - c^{[-10]}_{n_1n_2} - c^{[0-1]}_{n_1n_2}.\nonumber
\eeqa
\end{example}

More generally, realizations of the elements of the $q-$Onsager algebra in terms of $q-$difference operators are obtained in a straightforward manner using Proposition \ref{realqDG2}. Also, realizations in terms of difference operators  are obtained from the following identification: on the polynomial $\hQ^{(N)}(\{n\},\{x\},\{\alpha\})$,
\beqa
\cW_0^{(N)} \quad \mbox{acts as}  \qquad \frac{1}{2} {\mathbb D}_{\{n\}}^{(N)}\qquad  \mbox{and} \qquad  \cW_1^{(N)} \quad \mbox{acts as}  \qquad\theta^{*(N)}_{\{n\}}\label{real2}
\eeqa
with (\ref{qdiff3}), the expressions given in Remark \ref{coeffn}, and (\ref{spec}). \vspace{2mm}

   Note that for certain sets of parameters $\alpha_0,\alpha_1,...,\alpha_{N+2}$, the infinite dimensional module of the $q-$Onsager algebra generated by the Gasper-Rahman polynomials may become reducible (see Section 3). Also, the conditions of irreducibility of this module are not studied in this paper.\vspace{1mm} 

Combining previous results, we obtain one of the main results of this paper:
\begin{thm}
 Let $V$ denote the infinite dimensional vector space ${\cal P}_x$.  Let $\textsf{W}_0,\textsf{W}_1$ be the standard generators of the $q-$Onsager algebra (\ref{qDG}) with $q\neq 1$, where $\textsf{W}_0,\textsf{W}_1$ act on $V$ as (\ref{linW1}), (\ref{linW2}).
Then, $\textsf{W}_0,\textsf{W}_1$ induce on $V$ a module structure for the $q-$Onsager algebra with $\rho=-(q^2-q^{-2})^2/4$.
The Gasper-Raman multivariable polynomials  $\{\hQ^{(N)}(\{n\},\{x\},\{\alpha\})|\{n\}\in ({\mathbb N})^N\}$  form a basis of this module. 
\end{thm}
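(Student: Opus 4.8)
The plan is to assemble the statement from results already in hand, since the genuinely technical step has been carried out in Proposition~\ref{realqDG2}. First I would specialize that proposition to $k=N$. This yields precisely the assignments (\ref{linW1}) and (\ref{linW2}), namely $\textsf{W}_0 \mapsto \cW^{(N)}_0$ (multiplication by $x_1$) and $\textsf{W}_1 \mapsto \cW^{(N)}_1 = \tfrac12\,{\mathbb D}_{\{z\}}^{*(N)}$, together with $\rho = -(q^2-q^{-2})^2/4$, and it asserts that this assignment is an algebra homomorphism from $O_q(\widehat{sl_2})$ into ${\cal D}_z$. Because a homomorphism carries the defining relations (\ref{qDG}) to genuine operator identities, both $q$-Dolan--Grady relations hold as endomorphisms of $V = \mathcal{P}_x$; this is exactly the assertion that $V$ carries an $O_q(\widehat{sl_2})$-module structure with the stated value of $\rho$.

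Next I would check that the two operators genuinely act on $V = \mathcal{P}_x$ rather than merely on the larger ring $\mathcal{P}_z$ of Laurent polynomials in the $z_j$. For $\cW^{(N)}_0$ this is immediate, since multiplication by $x_1 = \tfrac12(z_1+z_1^{-1})$ preserves $\mathbb{C}[x_1,\dots,x_N]$. For $\cW^{(N)}_1$ I would invoke the bispectral property of Theorem~\ref{bispec}: the operator $\tfrac12\,{\mathbb D}_{\{z\}}^{*(N)}$ acts diagonally on each normalized Gasper--Rahman polynomial $\hQ^{(N)}(\{n\},\{x\},\{\alpha\})$ with eigenvalue $\theta^{*(N)}_{\{n\}}$ of (\ref{spec}). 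Since these polynomials lie in $\mathcal{P}_x$ and span it, $\cW^{(N)}_1$ preserves $\mathcal{P}_x$, so the module structure is well defined on $V$.

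Finally, the basis claim is supplied by Iliev's construction recalled just before Theorem~\ref{bispec}: the normalized multivariable polynomials $\{\hQ^{(N)}(\{n\},\{x\},\{\alpha\}) : \{n\}\in \mathbb{N}^N\}$ form a basis of $\mathcal{P}_x$. Combined with the diagonalization above, this exhibits an explicit eigenbasis of $\textsf{W}_1$ for the module and completes the argument.

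I do not anticipate a genuine obstacle, since the nontrivial verification of the $q$-Dolan--Grady relations was already performed in Proposition~\ref{realqDG2}, using the factorized polynomial identity (\ref{po1}) for the first relation and the spectral/block-tridiagonal argument for the second. The one point demanding a little care is the well-definedness discussed above --- confirming that the difference operator ${\mathbb D}_{\{z\}}^{*(N)}$ maps the symmetric subring $\mathcal{P}_x$ into itself rather than landing in $\mathcal{P}_z$ --- and this is disposed of cleanly by the bispectral diagonalization, with no need to re-examine the coefficients $\overline{C}_{\nu}(\{z\})$ directly.
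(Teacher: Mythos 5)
Your proposal is correct and follows essentially the same route as the paper, which offers no separate argument for this theorem but simply combines Proposition~\ref{realqDG2} (for $k=N$) with the fact, recalled before Theorem~\ref{bispec}, that the normalized Gasper--Rahman polynomials form a basis of $\mathcal{P}_x$. Your additional remark that $\tfrac12\,{\mathbb D}_{\{z\}}^{*(N)}$ preserves the subring $\mathcal{P}_x$ (via its diagonalization on the polynomial basis) is a point the paper leaves implicit, and your treatment of it is sound.
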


In Section 3, we will consider subspaces of parameters $\alpha_0,\alpha_1,...,\alpha_{N+2}$ for which the module generated by the Gasper-Rahman polynomials becomes reducible. In this case, the semi-infinite dimensional matrices associated with  $\cW^{(N)}_0,\cW^{(N)}_1$ truncate to finite dimensional ones.\vspace{1mm}

\subsection{Raising and lowering operators}
In the context of orthogonal polynomials with one or several variables, the determination of the raising and lowering operators is an important issue. For the  multivariable Macdonald polynomials associated with a root system of type
 $A_{n-1}$, see for instance \cite{KN}. For the Askey-Wilson polynomials, examples of raising and lowering operators 
have been constructed either starting from the $q-$difference-difference bispectral equations or using the one-variable version of the double affine Hecke algebra \cite{sahi}. In this subsection, by analogy with \cite{sahi},  we extend the construction to the multivariable Gasper-Rahman polynomials
(\ref{normAWgen}) in a straighforward manner. Let $p\in{I\!\!N}$ be fixed. Let $V_{p}$  denote the subspace
 of ${\mathbb C}[x_1,x_2,...,x_N]$ generated by the normalized multivariable 
polynomials $\{\hQ^{(N)}(\{n\},\{x\},\{\alpha\})| \ n_1+n_2+...+n_N=p\}$, as in Proposition \ref{blocktri}. 
Below, we introduce $N$ distinct pairs of difference operators $\mathcal{B}^{\pm(k)}_{\{n\}}$ which act as:
\beqa
\mathcal{B}^{\pm (k)}_{\{n\}}\ V_{p}  \subseteq V_{p\pm 1} \quad \mbox{for all} \quad k=1,2,...,N.
\eeqa
Note that raising and lowering operators for the Askey-Wilson polynomials are recovered by setting $N=k=1$ in the formulae below.  
\begin{prop} Let  $\mathcal{W}_0^{(k)},\mathcal{W}_1^{(k)}$ be defined as
(\ref{actWgen}). Define for a given set of integers $\{n\}=(n_1,n_2,...,n_k)$:
\begin{eqnarray}
\qquad \mathcal{B}^{\pm(k)}_{\{n\}} = \left(\frac{\prod_{l=1}^{k-1} 
\big(\mathcal{W}_1^{(l)} - \theta_{\{n+e_l\}}^{*(l)}\big)\big(\mathcal{W}_1^{(l)} - \theta_{\{n-e_l\}}^{*(l)}\big)}{\prod_{l=1}^{k-1} 
\big( \theta_{\{n\}}^{*(l)} -
\theta_{\{n+e_l\}}^{*(l)}\big)\big(\theta_{\{n\}}^{*(l)} -
\theta_{\{n-e_l\}}^{*(l)}\big)}\right)\left( \frac{\big(\mathcal{W}_1^{(k)} -
  \theta_{\{n\mp e_k\}}^{*(k)}\big)\big(\mathcal{W}_1^{(k)} -
    \theta_{\{n\}}^{*(k)}\big)}{\big(\theta^{*(k)}_{\{n\pm e_k\}}-
    \theta^{*(k)}_{\{n\mp e_k\}}\big)\big(\theta^{*(k)}_{\{n\pm e_k\}}-
  \theta^{*(k)}_{\{n\}}\big)}\right) \mathcal{W}_0^{(k)}.\label{B+-} 
\end{eqnarray}
For any $N\geq k$ and any $\{n'\}$ such that
$n'_i=n_i$ for $1\leq i\leq k$, one has
\begin{eqnarray}
\mathcal{B}^{+(k)}_{\{n\}} {\widehat
  Q}^{(N)}(\{n'\},\{x\},\{\alpha\})&=&
b_{n_1\cdots n_k}^{[\overbrace{0\cdots 0 1}^{k\ terms}]} 
E_{n_k}^{+1} {\widehat Q}^{(N)}(\{n'\},\{x\},\{\alpha\})\label{B+}, \\
\mathcal{B}^{-(k)}_{\{n\}} {\widehat
  Q}^{(N)}(\{n'\},\{x\},\{\alpha\})&=&
c_{n_1\cdots n_k}^{[\overbrace{0\cdots 0 -1}^{k\ terms}]} 
E_{n_k}^{-1} {\widehat Q}^{(N)}(\{n'\},\{x\},\{\alpha\}).\label{B-}
\end{eqnarray}
\end{prop}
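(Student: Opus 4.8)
The plan is to read both identities off the bispectral equations of \thmref{bispec} together with the block-tridiagonal structure of \propref{blocktri}. The first step is to record how the two families of operators act on the basis $\{\hQ^{(N)}(\{n\},\{x\},\{\alpha\})\}$. By \thmref{bispec} each $\mathcal{W}_1^{(l)}=\tfrac12\mathbb{D}_{\{z\}}^{*(l)}$ is diagonal, with eigenvalue $\theta^{*(l)}_{\{n\}}$ that depends only on $\kN_l=n_1+\dots+n_l$, while $\mathcal{W}_0^{(k)}$ acts on $\hQ^{(N)}$ as $\tfrac12\mathbb{D}_{\{n\}}^{(k)}$ and hence, by \lemref{lemma} and \remref{coeffn} applied to $\mathbb{D}_{\{n\}}^{(k)}$, spreads $\hQ^{(N)}(\{n'\})$ over finitely many neighbours $\hQ^{(N)}(\{n'+\text{shift}_\nu\})$. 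In this expansion each of $\kN_1,\dots,\kN_k$ changes by at most one unit and the matrix elements are precisely the coefficients $a,b,c$; in particular the pure shift $E_{n_k}^{+1}$, in which $\kN_1,\dots,\kN_{k-1}$ are unchanged and $\kN_k$ is raised by one, carries the coefficient $b^{[0\cdots 01]}_{n_1\cdots n_k}$, and $E_{n_k}^{-1}$ carries $c^{[0\cdots 0-1]}_{n_1\cdots n_k}$.

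The second step is to read the rational prefactors in \er{B+-} as spectral projectors. Since every $\mathcal{W}_1^{(l)}$ is diagonal, the $l$-th bracket acts on $\hQ^{(N)}(\{m\})$ as the scalar obtained by evaluating the degree-two polynomial in $\mathcal{W}_1^{(l)}$ — with roots $\theta^{*(l)}_{\{n\pm e_l\}}$ and value $1$ at $\theta^{*(l)}_{\{n\}}$ — at $\theta^{*(l)}_{\{m\}}$; it returns $1$ when $\kN_l(\{m\})=\kN_l(\{n\})$ and $0$ when $\kN_l(\{m\})=\kN_l(\{n\})\pm1$. For $\mathcal{B}^{+(k)}$ the factors $l=1,\dots,k-1$ therefore annihilate every term of the expansion whose $\kN_l$ has moved, while the last bracket, normalised to equal $1$ at $\theta^{*(k)}_{\{n+e_k\}}$, keeps only the term with $\kN_k$ raised by one. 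As $q$ is not a root of unity, $\theta^{*(l)}$ separates consecutive values of $\kN_l$, so these vanishings are genuine; and since $\kN_1,\dots,\kN_{k-1}$ fixed together with $\kN_k$ raised forces $n_1,\dots,n_{k-1}$ fixed and $n_k\mapsto n_k+1$, exactly one term of the expansion survives.

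Combining the two steps, applying the projector product of \er{B+-} to the expansion of $\mathcal{W}_0^{(k)}\hQ^{(N)}(\{n'\})$ kills all terms except the $E_{n_k}^{+1}$ one, on which every projector evaluates to $1$, giving \er{B+}; statement \er{B-} follows in the same way once the raising selector is replaced by the bracket built around $\theta^{*(k)}_{\{n-e_k\}}$, which retains the $E_{n_k}^{-1}$ term. Setting $N=k=1$ recovers the Askey-Wilson raising and lowering operators, a useful sanity check.

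The step I expect to be the main obstacle is the bookkeeping behind the first paragraph: showing that $\mathcal{W}_0^{(k)}$ moves each of $\kN_1,\dots,\kN_k$ by at most one unit (so that the quadratic Lagrange projectors, which resolve three consecutive values of each $\kN_l$, are sufficient), identifying the matrix element of the pure $E_{n_k}^{\pm1}$ shift with the prescribed coefficient $b^{[0\cdots 01]}_{n_1\cdots n_k}$ (resp. $c^{[0\cdots 0-1]}_{n_1\cdots n_k}$), and — for $N>k$ — checking that the surviving term leaves $n'_{k+1},\dots,n'_N$ untouched, so that the output is exactly $E_{n_k}^{\pm1}\hQ^{(N)}(\{n'\})$ for every admissible $\{n'\}$. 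All of this rests on the telescoping identity for $\kb(\oE_z^\nu)$ established in the proof of \lemref{lemma} together with the block-tridiagonal structure of \propref{blocktri}.
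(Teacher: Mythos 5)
Your proposal is correct and follows essentially the same route as the paper: first expand $\mathcal{W}_0^{(k)}\hQ^{(N)}$ via the block-tridiagonal action of $\tfrac12\mathbb{D}_{\{n\}}^{(k)}$, then use the quadratic brackets in $\mathcal{W}_1^{(l)}$ as Lagrange-type spectral projectors that successively kill every term except the pure $E_{n_k}^{\pm1}$ shift, whose matrix element is $b^{[0\cdots01]}$ (resp. $c^{[0\cdots0-1]}$). The only differences are cosmetic — you index the bookkeeping by the partial sums $\kN_l$ instead of the $\nu$-labels and explicitly note the nondegeneracy of the denominators — and these do not change the argument.
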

\begin{proof} Let us first prove Equation (\ref{B+}). In the
  definition (\ref{B+-}) of $\mathcal{B}^{+(k)}_{\{n\}}$,  define $\Gamma$ as the result of the action of
  the $\mathcal{W}_0^{(k)}$ term on ${\widehat Q}^{(N)}(\{n'\},\{x\},\{\alpha\})$, which is determined by
  (\ref{real2}) with (\ref{qdiff3}) for $N\rightarrow k$. 
By construction, through the action of $\mathcal{W}_1^{(1)} -
\theta_{\{n+e_1\}}^{*(1)}$ (resp. $\mathcal{W}_1^{(1)} -
\theta_{\{n-e_1\}}^{*(1)}$) on $\Gamma$, all the terms for
which $\nu_k=1$ (resp. $\nu_k=-1$) vanish. If $\nu_k=0$, a factor of the form
$\theta_{\{n\}}^{*(1)}-\theta_{\{n+e_1\}}^{*(1)}$ (resp. $\theta_{\{n\}}^{*(1)}-\theta_{\{n-e_1\}}^{*(1)}$) appears
in front of the corresponding terms. Therefore, the $l=1$ term in the numerator of
(\ref{B+-}) projects out all the terms in $\Gamma$ for which
$\nu_k=\pm 1$ and leaves the terms for which $\nu_k=0$ unchanged up to a factor
\beqa
\big( \theta_{\{n\}}^{*(1)} -
\theta_{\{n+e_1\}}^{*(1)}\big)\big(\theta_{\{n\}}^{*(1)} -
\theta_{\{n-e_1\}}^{*(1)}\big)\nonumber
\eeqa
in front of each of the remaining terms.
Then, looking at the action of the operators with $l=2$ in the numerator of (\ref{B+-}), which have the same
structure as those for $l=1$, on the remaining $\nu_k=0$ terms of
$\Gamma$, we similarly find that it projects out the terms for which
$\nu_{k-1}-\nu_k=\nu_{k-1}=\pm 1$ and leaves the terms for which $\nu_{k-1}=0$
unchanged up to a factor. 
An immediate induction on $1\leq l\leq k-1$ then shows similarly that
only the terms for which $\nu_i=0$ for $2\leq i \leq k$ are not
projected out, and those are left unchanged up to a factor. Therefore there now only
remains three terms from $\Gamma$: $a^{{[}0\cdots0{]}}$,
$b^{{[}0\cdots0 1{]}}E_{n_k}$ and $c^{{[}0\cdots0 -1{]}}E^{-1}_{n_k}$ acting on ${\widehat Q}^{(N)}(\{n'\},\{x\},\{\alpha\})$.
Finally, the second part of  (\ref{B+-}) acts on the remaining expression as follows: $(\mathcal{W}_1^{(k)} -
\theta_{\{n\}}^{*(k)})/(\theta^{*(k)}_{\{n+ e_k\}}- \theta^{*(k)}_{\{n\}})$
projects out the term in $a^{{[}0\cdots0{]}}$, whereas $(\mathcal{W}_1^{(k)} -
\theta_{\{n-e_k\}}^{*(k)})/(\theta^{*(k)}_{\{n+ e_k\}}- \theta^{*(k)}_{\{n-
  e_k\}})$ projects out the term in $c^{{[}0\cdots0-1{]}}E^{-1}_{n_k}$, and
again both leave the term in $b^{{[}0\cdots01{]}}E_{n_k}$ unchanged.
Thus, dividing the final expression by
\beqa
\prod_{l=1}^{k-1} 
\big( \theta_{\{n\}}^{*(l)} -
\theta_{\{n+e_l\}}^{*(l)}\big)\big(\theta_{n}^{*(l)} -
\theta_{\{n-e_l\}}^{*(l)}\big),\nonumber
\eeqa
 in the end, only one term has survived from $\Gamma$. It is given in
 the r.h.s of (\ref{B+}). The proof of the other equation (\ref{B-}) is similar.
%
\end{proof}

Let us mention that in the context of quantum integrable systems, the solution of the `inverse problem' - i.e. the explicit relation between local operators and the elements of the non-Abelian algebra that ensures the integrability of the model - plays a central role in the analysis of correlation functions, for instance. The explicit construction of the raising and lowering 
operators  (\ref{B+-}) in terms of the fundamental elements of the $q-$Onsager algebra provides such a solution. It should find applications in the analysis of models generated from the $q-$Onsager algebra, for instance the open XXZ spin chain. 
\vspace{2mm}

\section{Bispectrality and the relation with tridiagonal pairs}
As mentionned in the introduction, for the $q-$Askey scheme of
orthogonal polynomials in one variable the bispectral property has a
natural interpretation within the framework of Leonard's theorem \cite{Leon} and Leonard pairs \cite{Terw0,Terw}. Namely, for a pair of elements $A,A^*$ that act on an irreducible finite dimensional module, are diagonalizable, and satisfy the so-called Askey-Wilson relations, it is known that there exists two basis with respect to which the two matrices representing $A,A^*$ are diagonal (resp. tridiagonal) and tridiagonal (resp. diagonal). In such examples, the overlap coefficients between the two basis are given by the Askey-Wilson polynomials evaluated on a discrete support. First examples appeared, for instance, in \cite{Zhed}.\vspace{1mm}

Tridiagonal pairs are generalizations of Leonard pairs \cite{Ter0}, in which case one assumes that degeneracies may occur in the spectrum of the two matrices under consideration. Provided the elements $A,A^*$ satisfy a pair of relations called the tridiagonal relations (see (\ref{TD1}), (\ref{TD2}) below), it implies the following: in the basis in which the first matrix is diagonal with degeneracies, the other matrix takes a block tridiagonal structure. Furthermore, there exists another basis with respect to which the first matrix transforms into a block tridiagonal matrix, whereas the second one transforms into a diagonal matrix with degerenacies. In the literature, it is now well-understood that the representation theory of tridiagonal algebras - in particular the $q-$Onsager algebra (\ref{qDG}) - is intimately connected with the theory of tridiagonal pairs \cite{Ter03}, as we will recall below.
For instance, finite dimensional irreducible modules of the
$q-$Onsager algebra for which $A$ and $A^*$ are
  diagonalizable have 
been classified in \cite{ITaug} for $q$ not a root of unity. As an application,
 tridiagonal pairs of $q-$Racah type (see  `case I' below) over ${\mathbb C}$ have been 
classified for $q$ not a root of unity. For an algebraically closed field and no
 restrictions on $q$, note that a classification of tridiagonal pairs is given in \cite{INT} (see also \cite{IT2}). 
 However, to our knowledge, the connection with the theory of orthogonal polynomials has remained, in general, an open problem. The purpose of this Section is to show that the bispectral property of the Gasper-Rahman polynomials finds a natural interpretation in the theory of tridiagonal pairs. From that point of view, given a TD pair, the overlap coefficients between the two basis are given by  the normalized Gasper-Rahman polynomials (\ref{normAWgen}) evaluated on a discrete support \cite{F11}. In this context, a correspondence between the finite dimensional tensor product evaluation modules of the $q-$Onsager algebra considered in \cite{Bas3,ITaug} and the factorized structure of Gasper-Rahman polynomials (\ref{AWgen}) arises naturally.

\subsection{Tridiagonal pairs and tridiagonal algebras}
In the present subsection, we recall  the results taken from \cite{Ter93,Ter01,Ter03,NT:muqrac}.
Let ${\mathbb K}$ be an arbitrary field unless otherwise noted. Let $U$ denote a vector space over $\K$ with finite
positive dimension. For a  linear transformation $A:U\to U$
and a subspace $W \subseteq U$, we call $W$ an
eigenspace of $A$ whenever $W\not=0$ and there exists $\theta \in \K$ such that 
$W=\lbrace u \in U \ | \;Au = \theta u\rbrace$. $\theta$ is the eigenvalue of
$A$ associated with $W$. $A$ is diagonalizable whenever $U$ is spanned by the eigenspaces of $A$.

\begin{defn}[See \cite{Ter01}, Definition 2.1]
\label{tdp}
\rm
Let $U$ denote a vector space over $\K$ with finite
positive dimension. 
By a  tridiagonal pair (or $TD$ pair)
on $U$
we mean an ordered pair of linear transformations
$A:U \to U$ and 
$A^*:U \to U$ 
that satisfy the following four conditions.
\begin{itemize}
\item[(i)] Each of $A,A^*$ is diagonalizable on $U$.
\item[(ii)] There exists an ordering $\lbrace U_p\rbrace_{p=0}^d$ of the  
eigenspaces of $A$ such that 
\begin{equation}
A^* U_p \subseteq U_{p-1} + U_p+ U_{p+1} \qquad \qquad 0 \leq p \leq d,
\label{eq:t1}
\end{equation}
where $U_{-1} = 0$ and $U_{d+1}= 0$.
\item[(iii)] There exists an ordering $\lbrace U^*_p\rbrace_{p=0}^{\delta}$ of
the  
eigenspaces of $A^*$ such that 
\begin{equation}
A U^*_p \subseteq U^*_{p-1} + U^*_p+ U^*_{p+1} 
\qquad \qquad 0 \leq p \leq \delta,
\label{eq:t2}
\end{equation}
where $U^*_{-1} = 0$ and $U^*_{\delta+1}= 0$.
\item[(iv)] There does not exist a subspace $W$ of $U$ such  that $AW\subseteq W$,
$A^*W\subseteq W$, $W\not=0$, $W\not=U$.
\end{itemize}
We say the pair $A,A^*$ is  over $\K$.
We call $U$ the 
underlying
 vector space.
\end{defn}

\begin{note} According to a common 
notational convention, for a linear transformation $A$
the conjugate-transpose of $A$ is denoted
$A^*$. We are not using this convention here.
In a TD pair $A,A^*$, the linear transformations 
$A$ and $A^*$ are arbitrary subject to (i)-(iv) above.
\end{note}

Let $A,A^*$ denote a TD pair on $U$, as in Definition 
\ref{tdp}. By \cite[Lemma~4.5]{Ter01} the integers $d$ and $\delta$ from
(ii), (iii) are equal and called the diameter of the
pair. An ordering of the eigenspaces of $A$ (resp. $A^*$)
is said to be {\em standard} whenever it satisfies  (\ref{eq:t1})
 (resp. (\ref{eq:t2})).  Let $\{U_p\}_{p=0}^d$ (resp.
$\{U^*_p\}_{p=0}^d$) denote a standard ordering of the eigenspaces
 of $A$ (resp. $A^*$). For $0 \leq p \leq d$, let 
$\theta_p$  (resp. $\theta^*_p$) denote the eigenvalue of
$A$  (resp.  $A^*$) associated with $U_p$  (resp. $U^*_p$). According
 to \cite[Theorem~4.4]{Ter03}, for ${\mathbb K}={\mathbb C}$ the
 eigenvalues $\theta_p$ and $\theta^*_p$, $p=0,1,...,d$, can take
 three different forms:
\vspace{1mm}

$\bullet$ Case I: the $q-$Racah type (with $b,c,b^*,c^*\not=0$) 
\begin{eqnarray}
\label{eq:const1}
&&\theta_p = a + b q^{2p-d} + c q^{d-2p}, \nonumber
\\
\label{eq:const2}
&&\theta^*_p = a^* + b^* q^{2p-d} + c^* q^{d-2p};\nonumber
\end{eqnarray}

$\bullet$ Case II: the Racah type (with $c,c^*\not=0$) 
\beqa
\theta_p &=& a + bp + cp(p-1)/2,  \nonumber\\
\theta^*_p &=& a^* + b^*p + c^*p(p-1)/2;  \nonumber
\eeqa

$\bullet$ Case III: the Bannai-Ito case
\begin{eqnarray}
\label{eq:const3}
&&\theta_p = a + b(-1)^p  + cp (-1)^p,\nonumber
\\
&&\theta^*_p = a^* +  b^*(-1)^p  + c^* p (-1)^p\nonumber.
\end{eqnarray}

Here  $a,a^*, b,\;b^*,\ c ,\;c^*$ are scalars and $q\not=0, \ q^2 \not=1,
\ q^2 \not=-1$. In this case,  we say that $A,A^*$ is  a tridiagonal pair of $q-$Racah (case I) or Racah (case II) type \cite[Theorem~5.3]{NT:muqrac}. Note that the cases listed above are the `most general' cases in which the main free parameters are nonzero. Other examples for which some of the parameters are set to zero can be found in \cite{Ter033}.\vspace{2mm}

The theory of tridiagonal pairs is closely related with the representation theory of tridiagonal algebras. Tridiagonal algebras are defined as follows:

\begin{defn}[See \cite{Ter03}, Definition 3.9]\label{TDgendef}
Let  $\beta, \gamma, \gamma^*, \rho, \rho^* $ denote scalars in $\K$. We define 
 $T=T(\beta, \gamma, \gamma^*,
\rho, \rho^*)$  as the associative $\K$-algebra with unit  generated by two elements ${\textsf A}$, ${\textsf A}^*$
subject to the relations 
\beqa
\lbrack {\textsf A},{\textsf A}^2{\textsf A}^*-\beta {\textsf A}{\textsf A}^*{\textsf A} + {\textsf A}^*{\textsf A}^2 -\gamma ({\textsf A}{\textsf A}^*+{\textsf A}^*{\textsf A})-\rho {\textsf A}^*\rbrack 
&=&0,
\label{TD1}
\\
\lbrack {\textsf A}^*,{\textsf A}^{*2}{\textsf A}-\beta {\textsf A}^*{\textsf A}{\textsf A}^* + {\textsf A}{\textsf A}^{*2} -
\gamma^* ({\textsf A}^*{\textsf A}+{\textsf A}{\textsf A}^*)-\rho^* {\textsf A} \rbrack
&=&0.
\label{TD2}
\eeqa
We refer to $T$ as the tridiagonal algebra 
(or TD algebra) over $\K$ with parameters $\beta, \gamma, \gamma^*, \rho, \rho^* $.
 ${\textsf A}$ and ${\textsf A}^*$ are called the standard generators of $T$.
\end{defn}
In the literature, note that the relations (\ref{TD1}),(\ref{TD2}) are
called the tridiagonal (TD) relations. In case I, if
$\beta = q^2+q^{-2}$, $\gamma=\gamma^*=0$, $\rho=\rho^*\neq 0$, the parameter sequence is said to be  `reduced'. In this case, the TD relations coincide with the defining relations of the $q-$Onsager algebra (\ref{qDG}) with, for instance, ${\textsf W}_0\rightarrow {\textsf A}$, ${\textsf W}_1\rightarrow {\textsf A}^*$.

\begin{rem} For $\beta  = 2$, $\gamma = \gamma^*=0$, $\rho =\rho^* = 16$ the TD relations coincide with the Dolan-Grady relations \cite{DG}. For $\beta = q^2+q^{-2}$, $\gamma=\gamma^*=0$, $\rho=\rho^*=0$, the TD relations coincide with the 
$q-$Serre relations of $U_q(\widehat{sl_2})$.  
\end{rem}

It is known that every TD pair comes from an irreducible  finite dimensional module of a tridiagonal algebra. The following theorems establish an explicit connection between the representation theory of tridiagonal algebras (finite dimensional irreducible modules for $q$ not a root of unity) and the theory of tridiagonal pairs, as defined in Definition \ref{tdp}.
\begin{thm}[See \cite{Ter03}, Theorem 3.7]\label{thmTDqOA}
Let  $A, A^*$ denote a TD pair
over $\K$.
Then
there exists a sequence of  
 scalars
$\beta, \gamma, \gamma^*, \rho, \rho^* $ taken from $\K$
such that $A,A^*$ satisfy the tridiagonal relations (\ref{TD1}), (\ref{TD2}). 
The parameter sequence $\beta, \gamma, \gamma^*, \rho, \rho^* $ is uniquely determined by the pair if the diameter is at least 3.
\end{thm}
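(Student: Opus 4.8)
The plan is to convert the two operator identities (\ref{TD1}), (\ref{TD2}) into conditions on the eigenvalue sequences of $A$ and $A^*$ by sandwiching them between the primitive idempotents, and then to exhibit scalars solving those conditions. Write $\{E_i\}_{i=0}^d$ for the primitive idempotents of $A$ (so $E_i$ projects onto $U_i$ and $AE_i=\theta_iE_i$), and $\{\theta_i\}$, $\{\theta_i^*\}$ for the eigenvalues of $A$, $A^*$ in the standard orderings of Definition \ref{tdp}. Condition (ii) is exactly $E_iA^*E_j=0$ whenever $|i-j|>1$. Writing $P=A^2A^*-\beta AA^*A+A^*A^2-\gamma(AA^*+A^*A)-\rho A^*$ for the bracketed operator in (\ref{TD1}) and pulling each factor $A$ through the flanking idempotents as the scalars $\theta_i,\theta_j$, one obtains
\[
E_iPE_j=\bigl(\theta_i^2-\beta\theta_i\theta_j+\theta_j^2-\gamma(\theta_i+\theta_j)-\rho\bigr)\,E_iA^*E_j.
\]
Since $E_i[A,P]E_j=(\theta_i-\theta_j)E_iPE_j$ and the $\theta_i$ are distinct, (\ref{TD1}) is equivalent to $E_iPE_j=0$ for all $i\ne j$. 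For $|i-j|>1$ this is automatic; for $|i-j|=1$ it holds as soon as the scalar factor vanishes (and, the adjacent blocks being nonzero, exactly then --- a point I return to in the uniqueness step).

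Thus (\ref{TD1}) reduces to the existence of $\beta,\gamma,\rho$ with $\theta_i^2-\beta\theta_i\theta_{i+1}+\theta_{i+1}^2-\gamma(\theta_i+\theta_{i+1})-\rho=0$ for $0\le i\le d-1$. Subtracting the instance at $i+1$ from that at $i$ and cancelling the nonzero factor $\theta_i-\theta_{i+2}$ yields the linear three-term recurrence $\theta_{i-1}-\beta\theta_i+\theta_{i+1}=\gamma$ for $1\le i\le d-1$; conversely this recurrence together with a single choice of $\rho$ restores all the quadratic identities. The identical computation applied to (\ref{TD2}) shows it is equivalent to $\theta^*_{i-1}-\beta\theta^*_i+\theta^*_{i+1}=\gamma^*$, crucially with the \emph{same} $\beta$.

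The crux --- and the step I expect to be the main obstacle --- is producing a single $\beta$ serving both sequences, which forces the genuine structure of a TD pair to enter beyond the formal idempotent calculus. I would use the split decomposition $U_i=(U_0^*+\cdots+U_i^*)\cap(U_i+\cdots+U_d)$, on which $A-\theta_i$ raises the grading by one and $A^*-\theta_i^*$ lowers it by one; analysing how these two families of maps compose, together with the irreducibility (iv), one shows that the ratio $(\theta_{i-1}-\theta_{i+2})/(\theta_i-\theta_{i+1})$ is independent of $i$ and coincides with the corresponding ratio for the starred eigenvalues, their common value being $\beta+1$. Over $\K=\C$ this is precisely the content of the eigenvalue trichotomy recalled above, where the $\theta$ and $\theta^*$ sequences are of the same type and share $\beta\in\{q^2+q^{-2},2,-2\}$. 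With such a $\beta$ fixed, I define $\gamma,\gamma^*$ from the two recurrences and $\rho,\rho^*$ from the $i=0$ quadratic identities; then every identity holds and both (\ref{TD1}), (\ref{TD2}) follow, establishing existence.

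For uniqueness when the diameter is at least $3$, I would first record the structural lemma that, for an irreducible TD pair, the adjacent blocks $E_iA^*E_{i+1}$ and $E_{i+1}A^*E_i$ are nonzero for $0\le i\le d-1$: if, say, $E_iA^*E_{i+1}=0$, then $U_{i+1}+\cdots+U_d$ would be a proper nonzero subspace invariant under both $A$ and $A^*$, contradicting (iv). Consequently, whenever (\ref{TD1}) holds the scalar factor must vanish at every $|i-j|=1$, forcing the recurrence $\theta_{i-1}-\beta\theta_i+\theta_{i+1}=\gamma$ to hold identically. For $d\ge3$ the index range $1\le i\le d-2$ is nonempty, so $\beta+1=(\theta_{i-1}-\theta_{i+2})/(\theta_i-\theta_{i+1})$ pins down $\beta$; then $\gamma$ is read off from the recurrence, $\rho$ from a quadratic identity, and symmetrically $\gamma^*,\rho^*$ from the starred data, so all five scalars are determined by the pair. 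For $d\le2$ the ratio defining $\beta$ is unavailable, and the parameters are then indeed not unique.
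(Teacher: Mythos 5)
The paper does not prove this statement at all --- it is imported verbatim from \cite{Ter03} (Theorem 3.7 there, which in turn rests on \cite{Ter01}, Theorem 10.1) --- so your attempt can only be measured against the known argument. Your reduction is correct and is the standard opening move: sandwiching the bracketed operator $P$ between primitive idempotents of $A$ does give $E_iPE_j=(\theta_i^2-\beta\theta_i\theta_j+\theta_j^2-\gamma(\theta_i+\theta_j)-\rho)E_iA^*E_j$, relation (\ref{TD1}) is equivalent to the vanishing of these blocks for $i\neq j$, and the telescoping to the three-term recurrence $\theta_{i-1}-\beta\theta_i+\theta_{i+1}=\gamma$ is right. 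Your uniqueness argument for $d\ge 3$ is also sound: irreducibility forces the adjacent blocks $E_iA^*E_{i\pm 1}$ to be nonzero, hence the recurrence is forced, hence $\beta+1=(\theta_{i-1}-\theta_{i+2})/(\theta_i-\theta_{i+1})$ pins down $\beta$ and then $\gamma,\rho,\gamma^*,\rho^*$.

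The genuine gap is the existence step, which you correctly flag as the crux and then do not carry out. That the ratio $(\theta_{i-1}-\theta_{i+2})/(\theta_i-\theta_{i+1})$ is independent of $i$ \emph{and} agrees with the starred ratio is the entire nontrivial content of the theorem; the sentence ``analysing how these two families of maps compose \dots one shows that'' names the tool (the split decomposition and the raising/lowering maps $A-\theta_iI$, $A^*-\theta_i^*I$) but contains no argument, and this is precisely where all the work in \cite{Ter01,Ter03} lives. The fallback you offer is worse than incomplete: the eigenvalue trichotomy (Cases I--III) recalled earlier in Section 3 of the paper is obtained in \cite{Ter03} by \emph{solving} the recurrence whose existence you are trying to establish, so invoking it here is circular; moreover it is stated only for $\K=\C$, whereas the theorem is over an arbitrary field $\K$ (and for $d\le 2$, where the trichotomy gives no constraint anyway, you still need to observe directly that any $\beta$ works). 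As it stands the proposal proves the easy half (the equivalence of the operator relations with the eigenvalue recurrences, plus uniqueness) and asserts the hard half.
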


\begin{example}[See \cite{Ter03}, Lemma 4.5] Let $A,A^*$ be a TD pair of $q-$Racah type (case I). Suppose $a=a^*=0$ and $bc=b^*c^*$.   Then, $A,A^*$ satisfy the defining relations of the $q-$Onsager algebra (\ref{qDG}) with ${\textsf W}_0\rightarrow {A}$, ${\textsf W}_1\rightarrow {A}^*$, $\rho=-bc(q^2-q^{-2})^2$.
\end{example}

\begin{thm}[See \cite{Ter03}, Theorem 3.10]\label{thm32}
Let  $\beta, \gamma, \gamma^*,\rho, \rho^*$ denote scalars in $\K$, and 
assume $q$ is not a root of unity, where $q^2+q^{-2} = \beta$. 
Let $T$ denote the TD algebra over $\K$  with parameters $\beta, \gamma, \gamma^*,
\rho, \rho^*$ and standard generators $A, A^*$. Let $U$ denote an irreducible finite dimensional $T$-module
and assume each of $A, A^*$ is diagonalizable on $U$. Then $A, A^*$ act on $U$ as a TD pair. 
\end{thm}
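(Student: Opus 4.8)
The plan is to verify directly the four conditions of Definition~\ref{tdp}. Two of them are essentially free: condition (i) is part of the hypothesis, and condition (iv) is immediate, since $A,A^*$ generate $T$, so any subspace $W$ with $AW\subseteq W$ and $A^*W\subseteq W$ is a $T$-submodule and irreducibility of $U$ forces $W=0$ or $W=U$. It thus remains to produce the standard orderings in (ii) and (iii). Because these two conditions are interchanged by swapping $A\leftrightarrow A^*$ and using relation~(\ref{TD2}) in place of~(\ref{TD1}), I would only prove (ii) and obtain (iii) by symmetry.

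For (ii), let $\{E_i\}$ be the primitive idempotents of $A$, projecting onto its eigenspaces $U_i$ with distinct eigenvalues $\theta_i$; they exist since $A$ is diagonalizable. The first step is to set $P=A^2A^*-\beta AA^*A+A^*A^2-\gamma(AA^*+A^*A)-\rho A^*$, so that~(\ref{TD1}) reads $[A,P]=0$, and to compute $E_iPE_j=p(\theta_i,\theta_j)\,E_iA^*E_j$ with $p(\theta_i,\theta_j)=\theta_i^2-\beta\theta_i\theta_j+\theta_j^2-\gamma(\theta_i+\theta_j)-\rho$, a polynomial symmetric in its two arguments. Sandwiching $[A,P]=0$ between $E_i$ and $E_j$ gives $(\theta_i-\theta_j)E_iPE_j=0$, whence for $i\neq j$ the component $E_iA^*E_j$ can be nonzero only when $p(\theta_i,\theta_j)=0$.

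Next I would encode this in a graph $\mathcal{G}$ on the eigenvalues, joining $\theta_i$ to $\theta_j$ whenever $E_iA^*E_j\neq 0$ or $E_jA^*E_i\neq 0$; by the previous step every such edge satisfies $p(\theta_i,\theta_j)=0$. Since $p(\theta_i,\cdot)=0$ is a genuine quadratic, each vertex has at most two neighbours, and since $A^*U_i$ lies in $U_i$ together with the $U_j$ over neighbours $j$ of $i$, the span of the eigenspaces over any connected component of $\mathcal{G}$ is invariant under both $A$ and $A^*$, hence is a $T$-submodule; irreducibility then forces $\mathcal{G}$ to be connected. A connected graph of maximum degree two is a path or a cycle, and the path order relabels the eigenvalues as $\theta_0,\dots,\theta_d$ so that $E_iA^*E_j=0$ for $|i-j|>1$, which is precisely~(\ref{eq:t1}).

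The hard part will be excluding the cycle, and this is exactly where $q$ not being a root of unity is used. In a cycle the two neighbours of $\theta_i$ are the two roots of $p(\theta_i,\cdot)=0$, so Vieta's formulas give the recurrence $\theta_{i+1}-\beta\theta_i+\theta_{i-1}=\gamma$; its characteristic equation $t^2-\beta t+1=0$ has roots $q^2,q^{-2}$, which are distinct from each other and from $1$ because $q$ is not a root of unity, so $\theta_i=a+b\,q^{2i}+c\,q^{-2i}$ for some scalars $a,b,c$. A cycle of length $n\geq 3$ would force $\theta_{i+n}=\theta_i$, i.e. $b\,q^{2i}(q^{2n}-1)+c\,q^{-2i}(q^{-2n}-1)=0$ for all $i$; by linear independence of $q^{2i}$ and $q^{-2i}$ this yields $q^{2n}=1$ unless $b=c=0$, the latter collapsing all $\theta_i$ to a single value and contradicting their distinctness, while the former contradicts the hypothesis on $q$. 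Hence $\mathcal{G}$ is a path, (ii) holds, and (iii) follows symmetrically, completing the identification of $A,A^*$ as a TD pair; as a by-product the eigenvalues $\theta_i$ carry the $q$-Racah shape recorded in \cite{Ter03}.
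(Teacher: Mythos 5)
Your argument is correct. Note that the paper itself supplies no proof of this statement --- it is imported verbatim from \cite{Ter03} --- so the comparison is really with Terwilliger's original argument, and what you have written is essentially a faithful reconstruction of it: the identity $(\theta_i-\theta_j)\,p(\theta_i,\theta_j)\,E_iA^*E_j=0$ obtained by sandwiching (\ref{TD1}) between primitive idempotents of $A$, the adjacency graph on eigenvalues with maximum degree two (since $p(\theta_i,\cdot)$ is monic quadratic), connectivity forced by irreducibility, and the exclusion of cycles via the three-term recurrence $\theta_{i+1}-\beta\theta_i+\theta_{i-1}=\gamma$ whose characteristic roots $q^{\pm2}$ are where the hypothesis that $q$ is not a root of unity enters. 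All the small cases are handled implicitly but correctly: a component with one or two vertices is a path, a cycle necessarily has length at least $3$ so its vertices genuinely have two distinct neighbours and Vieta applies, and the particular solution $a=\gamma/(2-\beta)$ exists because $\beta\neq 2$. The only cosmetic remark is that one should say explicitly that the neighbours of $\theta_i$, being distinct eigenvalues annihilating the monic quadratic $p(\theta_i,\cdot)$, are exactly its two roots --- which you do use --- and that the symmetry $A\leftrightarrow A^*$ swaps (\ref{TD1}) with (\ref{TD2}) and the parameters $\gamma,\rho$ with $\gamma^*,\rho^*$, so condition (iii) of Definition~\ref{tdp} indeed follows by the same argument applied to the eigenspaces of $A^*$.
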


According to some observations in \cite[p. 11]{Ter03}, let $\tilde{A},\tilde{A}^*$ denote a TD pair on $U$ with parameter sequence $\beta, \gamma, \gamma^*, \tilde{\rho},  \tilde{\rho}^*$. Let $I$ denote the identity and assume $\beta\neq 2$. Then, the TD pair
\beqa
\qquad A=\rho^{\frac{1}{2}}\left(\tilde{\rho} + \frac{\gamma^2}{2-\beta}\right)^{-\frac{1}{2}}\left(\tilde{A} + \gamma(\beta-2)^{-1}I\right) \ ,\qquad A^*=\rho^{\frac{1}{2}}\left(\tilde{\rho}^* + \frac{{\gamma^*}^2}{2-\beta}\right)^{-\frac{1}{2}} \left(\tilde{A}^* + \gamma^*(\beta-2)^{-1}I \right)\label{reduc}
\eeqa
has a {\it reduced} parameter sequence: the TD pair (\ref{reduc}) satisfy the defining relations of the $q-$Onsager algebra (the so-called $q-$Dolan-Grady relations) (\ref{qDG}) with ${\textsf W}_0\rightarrow {A}$, ${\textsf W}_1\rightarrow {A}^*$. It is easy to show that such transformation is invertible. As a consequence, provided $\beta\neq 2$, finite dimensional irreducible modules of the tridiagonal algebra with $\gamma,\tilde{\rho}$ (resp.  $\gamma,\tilde{\rho}$) both not $0$
and of the $q-$Onsager algebra are isomorphic. For this reason, in the
next subsection we simply focus on the construction of finite
dimensional  modules. The special case $q=1$  for which $\beta=2$ will be treated separately in Section 5.


\subsection{Finite dimensional modules}
In view of the homomorphism (\ref{actWgen}) for $k=N$ together with (\ref{real2}), a comparison between the structure of the spectrum of a tridiagonal pair of $q-$Racah type for $q\neq 1$ (called `Case I' above) and the bispectral problem (\ref{qdiffgen}), (\ref{recgen}) strongly suggests that  finite dimensional modules of the $q-$Onsager algebra (\ref{qDG}) can be derived by studying certain quotients of the infinite dimensional modules  based on the normalized Gasper-Rahman multivariable polynomials  (\ref{normAWgen}), considered in the previous Section. \vspace{1mm}

Considering bispectral relations of the form (\ref{biAW}), (\ref{biAW2}), it is well-known that finite dimensional modules of dimension $2j+1$ based on the Askey-Wilson polynomials (\ref{awpoly}) can be easily constructed provided the parameters  $a,b,c,d$  are chosen such that the coefficient $b_{2j}=0$ \cite{F12},  and that $z$ is restricted to a discrete support. In this case, the $q-$difference equation (\ref{biAW}) becomes a three-term recurrence relation in addition to  (\ref{biAW2}), and both recurrence relations terminate. Then, the corresponding bispectral problem is interpreted as the spectral problem associated with a Leonard pair. Such a technique can easily be extended in order to construct finite dimensional modules based on the normalized Gasper-Rahman polynomials (\ref{normAWgen}), in view of the factorized structure  (\ref{AWgen}). To this end, restrictions on the parameters $\alpha_1,...,\alpha_{N+1}$ and a discrete support of the variables $z_1,z_2,...,z_N$ are considered. \vspace{1mm}

Let $\textsf{W}_0,\textsf{W}_1$ be the standard generators of the $q-$Onsager algebra (\ref{qDG}) with $q\neq 1$. Recall that
$\textsf{W}_0,\textsf{W}_1$ as defined in Proposition \ref{realqDG2} induce a module structure on ${\mathbb C}[ x_1,x_2,...,x_N]$. To construct finite dimensional  modules of the $q-$Onsager algebra, we proceed as follows. Let $\tilde{n}_1,\tilde{n}_2,...,\tilde{n}_N$ be positive integers. Assume the $N-$variables are now restricted to the following values:
\beqa
 z_k=\frac{\alpha_{N+1}\alpha_{N+2}}{\alpha_k}q^{2\tilde{\kN}_{N+1-k}}\quad \mbox{with} \quad \tilde{\kN}_k=\tilde{n}_1+\tilde{n}_2+\cdots+\tilde{n}_k\quad \mbox{for}\quad k=1,2,...,N.\label{zk}
\eeqa
Then, for the following discussion we introduce the notation:
\beqa
\tfQ^{(N)}(\{n\},\{\tilde{n}\},\{\alpha\})=  \hQ^{(N)}(\{n\},\{x\},\{\alpha\})|_{z_k=\frac{\alpha_{N+1}\alpha_{N+2}}{\alpha_k}q^{2\tilde{\kN}_{N+1-k}},k=1,2,...,N}.\label{polyrec}
\eeqa

According to the results of the previous Section, on the restricted
support generated by (\ref{polyrec}), one induces
operators having the same action on
$\tfQ^{(N)}(\{n\},\{\tilde{n}\},\{\alpha\})$ as $x_1$ and the same
action on $\hQ^{(N)}(\{n\},\{x\},\{\alpha\})$ as   $\frac{1}{2}{\mathbb D}_{\{z\}}^{*(N)}$. For simplicity, we will not
  change their names and identify them again with 
 $\textsf{W}_0,\textsf{W}_1$ (see
  Proposition \ref{condpargen} below). They
are still diagonalizable. The (degenerate) eigenvalues of $\textsf{W}_0$ (resp.$\textsf{W}_1$) are, respectively, given by:
\beqa
\qquad \theta^{(N)}_{\{\tilde{n}\}}= \frac{1}{2}\left(\frac{\alpha_1}{\alpha_{N+1}\alpha_{N+2}}q^{-2\tilde{\kN}_N} +   \frac{\alpha_{N+1}\alpha_{N+2}}{\alpha_1} q^{2\tilde{\kN}_N}\right),\quad  
\theta^{*(N)}_{\{n\}}= \frac{1}{2}\left(   \frac{\alpha_0q}{\alpha_{N+1}} q^{-2\kN_N} + \frac{\alpha_{N+1}}{\alpha_0q} q^{2\kN_N}\right)   \label{spect},
\eeqa
and the corresponding eigenfunctions are $\tfQ^{(N)}(\{n\},\{\tilde{n}\},\{\alpha\})$.

Then, the vector space is spanned by the eigenspaces of either $\textsf{W}_0$ or $\textsf{W}_1$. Let $p\in{\mathbb N}$ be fixed. With respect to $\textsf{W}_0,\textsf{W}_1$, two different families of ordered eigenspaces can be constructed, respectively, from
\beqa
{\cal V}_p^{(N)}&=&\mathrm{Span}\{\tfQ^{(N)}(\{n\},\{\tilde{n}\},\{\alpha\})|\tilde{n}_1+...+\tilde{n}_N=p\},  \quad \{n\} \ \mbox{fixed},\nonumber\\
\mbox{or} \qquad {\cal V}_p^{*(N)}&=&\mathrm{Span}\{\tfQ^{(N)}(\{n\},\{\tilde{n}\},\{\alpha\})|n_1+...+n_N=p\}, \quad \{\tilde{n}\}\  \mbox{fixed}.\nonumber
\eeqa

Up to now, note that the set of integer variables $\{n\},\{\tilde{n}\}$ has not been restricted. Now,
provided the parameters $\alpha_1,...,\alpha_{N+1}$ satisfy certain conditions,  finite dimensional modules can be constructed as follows.
\begin{prop}\label{condpargen} Let $j_1,j_2,...,j_N$ be positive half-integers or integers and define $d_N=2(j_1+j_2+...+j_N)$.  
Assume there exists parameters $\alpha_1,\alpha_2,...,\alpha_{N+1}$ such that:
\beqa
\frac{\alpha_{k+1}}{\alpha_{k}}=q^{-2j_{k}}\quad k=1,2,...,N.\label{condp}
\eeqa
Then, the maps defined by 
\beqa
\textsf{W}_0  \mapsto   x_1, \qquad \textsf{W}_1  \mapsto \frac{1}{2}{\mathbb D}_{\{z\}}^{*(N)}, \qquad \rho \mapsto -\frac{(q^2-q^{-2})^2}{4} ,
\eeqa
yield a family of homomorphisms from $O_q(\widehat{sl_2})$ to any of the vector
spaces ${\cal V}^{*\{\tilde{n}\}}=  \bigoplus_{p=0}^{d_N}{\cal V}_p^{*(N)}$
  with $\tilde{n}$ fixed, and also to any ${\cal V}^{\{n\}}=
  \bigoplus_{p=0}^{d_N}{\cal V}_p^{(N)}$ with $n$ fixed.
These are finite dimensional  modules of the $q-$Onsager algebra (\ref{qDG}).
\end{prop}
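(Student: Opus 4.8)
The plan is to treat the two claims separately --- that $\textsf{W}_0,\textsf{W}_1$ satisfy the $q$-Onsager relations \er{qDG} on the restricted support, and that $\bigoplus_{p=0}^{d_N}{\cal V}_p^{*(N)}$ (resp.\ $\bigoplus_{p=0}^{d_N}{\cal V}_p^{(N)}$) is a finite-dimensional invariant subspace --- since the relations are algebraic identities that descend automatically to any invariant subspace. First I would realize the operators in the $\{n\}$-picture dual to \propref{realqDG2}: by \er{spect} and \thmref{bispec}, on the span of the $\tfQ^{(N)}(\{n\},\{\tilde n\},\{\alpha\})$ the generator $\textsf{W}_1$ is diagonal with the $q$-Racah eigenvalue $\theta^{*(N)}_{\{n\}}$ depending only on $\kN_N$, while $\textsf{W}_0$ acts through the block-tridiagonal matrix $\frac{1}{2}{\mathbb D}_{\{n\}}^{(N)}$ of \propref{blocktri}, whose entries are listed in \remref{coeffn}; dually, $\textsf{W}_0$ is diagonal with $q$-Racah eigenvalue $\theta^{(N)}_{\{\tilde n\}}$ in the grading by $\tilde{\kN}_N$ and $\textsf{W}_1$ is then block-tridiagonal.

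The $q$-Onsager relations would then follow exactly as in \propref{realqDG2}: for each relation in \er{qDG} the matrix entry between eigenvectors takes the form \er{eq}, which vanishes because the quadratic Askey--Wilson factor is zero whenever the two gradings differ by one (by the $q$-Racah shape of $\theta^{*(N)}$, resp.\ $\theta^{(N)}$), the remaining factor $\theta^{*(N)}_{\{n\}}-\theta^{*(N)}_{\{m\}}$ is zero when they coincide, and the block-tridiagonal structure kills all entries with grading difference $>1$. Running this argument once for each relation (swapping the roles of the two generators via the involution of \thmref{norm}) gives both relations; matching \er{spect} to the $q$-Racah form yields $a=a^*=0$ and $bc=b^*c^*=1/4$, hence $\rho=-(q^2-q^{-2})^2/4$, in agreement with the example following \thmref{thmTDqOA}.

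It remains to cut out a finite-dimensional invariant piece, and here condition \er{condp} enters. My plan is to read off the truncation from the factorized form \er{AWgen}: the two $z_{j+1}$-dependent parameters of the $j$-th Askey--Wilson factor have product $\alpha_{j+1}^2/\alpha_j^2=q^{-4j_j}=(q^2)^{-2j_j}$, which is precisely the classical truncation condition reducing that factor (in base $q^2$) to a $q$-Racah polynomial supported on $n_j\in\{0,1,\dots,2j_j\}$; crucially this product is insensitive to the coupling through $\kN_{j-1}$, so the bound survives factor by factor. Consequently $\kN_N$ ranges over $\{0,\dots,d_N\}$, each ${\cal V}_p^{*(N)}$ is finite-dimensional, and the top block ${\cal V}_{d_N}^{*(N)}$ is one-dimensional, spanned by $n_j=2j_j$. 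Invariance follows because $\textsf{W}_1$ is diagonal while every raising coefficient of $\textsf{W}_0$ vanishes on the top block: a raising term pushes some $n_j$ to $2j_j+1$ (or $\kN_N$ to $d_N+1$) and therefore meets a factor $(1-q^{2(n_j-2j_j)})$ or $(1-q^{2(\kN_N-d_N)})$ that is zero under \er{condp}, exactly as one checks directly on \er{AWdiff} and \er{AW2}; the lowering coefficients vanish at the bottom block through the $(1-q^{2\kN_N})$-type factors. The dual statement for ${\cal V}^{\{n\}}$ follows since the involution of \thmref{norm} preserves \er{condp} (with $j_k\mapsto j_{N+1-k}$) and fixes $d_N$.

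The main obstacle is this last invariance step for general $N$: the coefficients are written out only for $N\le 2$, so for arbitrary $N$ one must extract the relevant vanishing factor from the $\kb$-transform of the general coefficient $\overline{C}_\nu$ in \er{Cnu}, tracking the products of the $B_k^{\nu_k\nu_{k+1}}$ and $b_k^{\nu_k}$ to confirm that each raising coefficient indeed carries the factor $(1-q^{2(\kN_N-d_N)})$ (or a per-coordinate analogue) at the top block. Reconciling the per-factor bound $n_j\le 2j_j$ with the parameter coupling through $\kN_{j-1}$ is the delicate bookkeeping, although the decoupling of the product $\alpha_{j+1}^2/\alpha_j^2$ from that coupling is what makes the argument go through.
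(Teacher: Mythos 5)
Your overall strategy coincides with the paper's: prove the $q$-Onsager relations by the eigenvalue/block-tridiagonal argument of Proposition \ref{realqDG2} (which indeed descends to any invariant subspace), and then obtain finite-dimensionality by showing that the coefficients of the block-tridiagonal matrix representing $\textsf{W}_0$ vanish at the boundary of the parameter box. Your identification of the truncation mechanism is also correct: under (\ref{condp}) the $\kb$-transform of $B_k^{1,0}$ acquires the factor $\bigl(1-\tfrac{\alpha_{N+2-k}^2}{\alpha_{N+1-k}^2}q^{2n_{N+1-k}}\bigr)=\bigl(1-q^{2(n_{N+1-k}-2j_{N+1-k})}\bigr)$, which is exactly the per-coordinate factor you predict.

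However, there is a genuine gap, and it is precisely the step you flag as "the main obstacle": for general $N$ you never verify that \emph{every} coefficient of ${\mathbb D}_{\{n\}}^{(N)}$ that would push the state out of the box $\prod_k\{0,\dots,2j_k\}$ actually vanishes. This is the core content of the proof, and it is more than a single "top-block" condition on $\kN_N$: the invariant set is a box, not a simplex, so e.g.\ the coefficient raising $n_1$ must vanish at $n_1=2j_1$ for \emph{all} values of the other $n_k$; moreover the terms with a shift by $2$ (such as $E_{n_1}^{-1}E_{n_2}^{2}$ for $N=2$) require a \emph{double} vanishing, at $n_{k+1}=2j_{k+1}$ and $n_{k+1}=2j_{k+1}-1$ — this is why the paper's condition list (\ref{conditions}) distinguishes the cases $\nu_{N-k}-\nu_{N+1-k}=1$ and $=2$, and why the Appendix records that $\kb(B_k^{1,-1}(z))$ vanishes at both values. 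The paper closes the gap by exploiting the product structure (\ref{Cnu}) of $\overline{C}_\nu$ together with the Appendix identities $\kb(B_k^{1,0})=\kb(B_k^{0,-1})=0$ at $n_{N+1-k}=2j_{N+1-k}$ and $\kb(B_k^{1,-1})=0$ at $n_{N+1-k}=2j_{N+1-k},\,2j_{N+1-k}-1$. Finally, the dual module ${\cal V}^{\{n\}}$ is not obtained purely "by symmetry": the paper introduces the dual operator ${\mathbb D}_{\{\tilde n\}}^{*(N)}$ of (\ref{dualD}) (a substitution of parameters into ${\mathbb D}_{\{n\}}^{(N)}$) and checks separately that its raising coefficients vanish; your appeal to the involution of Theorem \ref{norm} is the right idea but needs the same coefficient check on the transformed parameters. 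Without these verifications the claimed invariance, hence the module structure, is asserted rather than proved.
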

Note that, as stated at the top of this page, although $x_1$ and
  $\frac{1}{2}{\mathbb D}_{\{z\}}^{*(N)}$ are actually distinct
  operators for each of the target vector spaces, we leave the same
  notation for all of them since they are induced from the same operator.

\begin{proof} 
 First, we have to show that ${\cal V}^{*\{\tilde{n}\}}$ is a finite dimensional  module of the $q-$Onsager algebra (\ref{qDG}). According to the results of the previous Section, on the space generated by (\ref{polyrec}) the action of the  standard generators $\textsf{W}_0,\textsf{W}_1$ is characterized by  (\ref{real2}). Observe that the element $\textsf{W}_1$ acts on $\bigoplus_{p=0}^{d_N}{\cal V}_p^{*(N)}$  as a diagonal matrix with entries $\theta^{*(N)}_{\{n\}}$ such that $\kN_N=p$. Given the ordering of the eigenspaces $\{{\cal V}_p^{*(N)}\}_{p=0}^{d_N}$ of $\textsf{W}_1$, next we consider the action of the element $\textsf{W}_0$, using (\ref{real2}) with (\ref{qdiff3}). Observe that  $\textsf{W}_0$ acts as (\ref{eq:t2}). 
To prove that the module ${\cal V}^{*\{\tilde{n}\}}$ is finite dimensional, it is sufficient to show that the coefficients (\ref{bcoeff}),(\ref{ccoeff}),(\ref{acoeff}) are such that:
\beqa
\qquad b_{n_1 n_2 \cdots n_N}^{[\nu_N\ \nu_{N-1}-\nu_N\ \cdots \nu_2-\nu_3\ 1-\nu_2]}    &=& 0 \quad \mbox{if} \quad 
\nu_N=1\quad \mbox{and} \quad n_1=2j_1, 
 \label{conditions}\\
&&\qquad \quad \nu_{N-k}- \nu_{N+1-k}=1\quad \mbox{and} \quad n_{k+1}=2j_{k+1},\nonumber\\
&& \qquad \quad \nu_2=0 \quad \mbox{and} \quad  n_{N}=2j_N,\ \  
 \nu_2=-1 \quad \mbox{and} \quad  n_{N}=2j_N,2j_N-1, \nonumber\\
&&\qquad \quad \nu_{N-k}- \nu_{N+1-k}=2\quad \mbox{and} \quad n_{k+1}=2j_{k+1}, 2j_{k+1}-1,\nonumber\\
c_{n_1 n_2 \cdots n_N}^{[\nu_N\ \nu_{N-1}-\nu_N\ \cdots \nu_2-\nu_3\ -1-\nu_2]}    &=& 0  \quad \mbox{if} \quad 
\nu_N=1\quad \mbox{and} \quad n_1=2j_1, 
\nonumber\\
&&\qquad \quad \nu_{N-k}- \nu_{N+1-k}=1\quad \mbox{and} \quad n_{k+1}=2j_{k+1},\nonumber\\
&&\qquad \quad \nu_{N-k}- \nu_{N+1-k}=2\quad \mbox{and} \quad n_{k+1}=2j_{k+1},2j_{k+1}-1,\nonumber\\
a_{n_1 n_2 \cdots n_N}^{[\nu_N\ \nu_{N-1}-\nu_N\ \cdots \nu_2-\nu_3\ -\nu_2]}   &=& 0 \quad \mbox{if} \quad 
\nu_N=1\quad \mbox{and} \quad n_1=2j_1,
\nonumber\\
&&\qquad \quad \nu_{N-k}- \nu_{N+1-k}=1\quad \mbox{and} \quad n_{k+1}=2j_{k+1},\nonumber\\
&&\qquad \quad \nu_2=-1 \quad \mbox{and} \quad  n_{N}=2j_N,\nonumber\\
&&\qquad \quad \nu_{N-k}- \nu_{N+1-k}=2\quad \mbox{and} \quad n_{k+1}=2j_{k+1},2j_{k+1}-1,\nonumber
\eeqa
for all $k=1,2,...,N-2$. 
%
%
To show that, one uses the formulae given in the Appendix: given the conditions on the ratios of the parameters (\ref{condp}), one finds:
\beqa
\kb(B_k^{1,0}(z))&=&0 \quad \mbox{and}\quad \kb(B_k^{0,-1}(z))=0  \quad \mbox{for}\quad n_{N+1-k}=2j_{N+1-k},\quad k=1,...,N-1,\nonumber\\
\kb(B_N^{1,0}(z))&=&0  \quad \mbox{for}\quad n_1=2j_1,\nonumber\\
\kb(B_k^{1,-1}(z))&=&0  \quad \mbox{for}\quad n_{N+1-k}=2j_{N+1-k},\ 2j_{N+1-k}-1, \qquad k=1,...,N-1.\nonumber
\eeqa
As a consequence, using (\ref{Cnu}) the coefficients (\ref{bcoeff}),(\ref{ccoeff}),(\ref{acoeff}) automatically satisfy  (\ref{conditions}).\vspace{1mm}

Secondly, we have to show that ${\cal V}^{\{n\}}$ is also a finite dimensional  module of the $q-$Onsager algebra (\ref{qDG}).
 Recall (\ref{zk}). On one hand, according to  (\ref{actWgen}) for $k=N$,  the element ${\textsf W}_0$ acts 
on $\bigoplus_{p=0}^{d_N}{\cal V}_p^{(N)}$ as a diagonal matrix with entries $\theta^{(N)}_{\{\tilde{n}\}}$. On the other hand, introduce the `dual' difference operator:
\beqa
{\mathbb D}_{\{\tilde{n}\}}^{*(N)}&=&{\mathbb D}_{\{z\}}^{*(N)}|_{\{z_k=\frac{\alpha_{N+1}\alpha_{N+2}}{\alpha_k}q^{2\tilde{\kN}_{N+1-k}},\ k=1,2,...,N\}}\nonumber\\
&=&{\mathbb D}_{\{n\}}^{(N)}|_{\{\alpha_0\rightarrow \alpha_0;\ \ \alpha_k \rightarrow \frac{\alpha_0\alpha_{N+1}\alpha_{N+2}}{\alpha_{N+2-k}}, \ k=1,...,N+1;\ \ \alpha_{N+2}\rightarrow \frac{\alpha_1}{\alpha_0q};\ \ n\rightarrow \tilde{n}\}}.\label{dualD}
\eeqa
Then, with respect to the indices $\{\tilde{n}\}$ it follows that  $\textsf{W}_1$  act on ${\cal V}^{\{n\}}$ according to (\ref{eq:t1}). Now, using the explicit expressions of the coefficients (\ref{Cnu}), one finds that the coefficients of  `$b-$type' characterizing (\ref{dualD}) are vanishing for $\tilde{n}_k= j_{N+1-k}$ with $k=1,2,...,N$.  As a consequence, $\textsf{W}_0,\textsf{W}_1$ act on  ${\cal V}^{\{n\}}$ as a diagonal matrix and an  block tridiagonal matrix, respectively.  The claim follows.
\end{proof}


\begin{rem} Note that 
\beqa
dim\left(\bigoplus_{p=0}^{d_N}{\cal V}_p^{(N)}\right)= dim\left(\bigoplus_{p=0}^{d_N}{\cal V}_p^{*(N)}\right)=\prod_{k=1}^N(2j_k+1).\label{dimevalmod}
\eeqa
\end{rem}

\vspace{2mm}

\begin{example} For $N=1$, let $j$ be a positive half-integer or integer. Define 
\beqa
{\cal V}_{n_1}^{*(1)}&=&\mathrm{Span}\{\tfQ^{(1)}(n_1,\tilde{n}_1,\alpha_0,\alpha_1,\alpha_2,\alpha_3)\},\quad 
\ \tilde{n}_1 \ \mbox{fixed},\nonumber\\
{\cal V}_{\tilde{n}_1}^{(1)}&=&\mathrm{Span}\{\tfQ^{(1)}(n_1,\tilde{n}_1,\alpha_0,\alpha_1,\alpha_2,\alpha_3)\}, \quad 
\ n_1 \ \mbox{fixed},\nonumber
\eeqa
with 
\beqa
\alpha_0=q^{\beta^*-\beta-1}, \quad \alpha_1=q^{-\beta}, \quad \alpha_2=q^{-2j-\beta}, \quad \alpha_3=q^{-\beta},\label{p1} 
\eeqa
where $\beta,\beta^*$ are arbitrary scalars. The finite dimensional module ${\cal V}^{n_1}$ (resp. ${\cal V}^{*\tilde{n}_1}$)
of the $q-$Onsager algebra has dimension $2j+1$. The eigenvalues of $\textsf{W}_0,\textsf{W}_1$, are distinct and given by:
\beqa
\theta^{(1)}_{\tilde{n}_1}= \frac{1}{2}\left(q^{\beta+2j -2\tilde{n}_1} + q^{-\beta-2j +2\tilde{n}_1}\right),\quad  
\theta^{*(1)}_{n_1}= \frac{1}{2}\left(q^{\beta^*+2j -2n_1} + q^{-\beta^*-2j +2n_1}\right)   \nonumber.
\eeqa

\end{example}

Note that using (\ref{p1}) in (\ref{AWdiff}), one can check that $b_{2j_1}=0$ and $c_0=0$.

\vspace{2mm}

\begin{example} For $N=2$, let $j_1,j_2$ be positive half-integers or integers. Define 
\beqa
{\cal V}_{p}^{*(2)}&=&\mathrm{Span}\{\tfQ^{(2)}(n_1,n_2,\tilde{n}_1,\tilde{n}_2,\alpha_0,\alpha_1,\alpha_2,\alpha_3,\alpha_4)| n_1+n_2=p\},  \quad
\tilde{n}_1,\tilde{n}_2 \ \mbox{fixed},\nonumber\\
{\cal V}_{p}^{(2)}&=&\mathrm{Span}\{\tfQ^{(2)}(n_1,n_2,\tilde{n}_1,\tilde{n}_2,\alpha_0,\alpha_1,\alpha_2,\alpha_3,\alpha_4)| \tilde{n}_1+\tilde{n}_2=p\},
\quad n_1,n_2  \ \mbox{fixed},\nonumber
\eeqa 
with 
\beqa
\alpha_0=q^{\beta^*-\beta-1}, \quad \alpha_1=q^{-\beta}, \quad \alpha_2=q^{-2j_1-\beta}, \quad \alpha_3=q^{-2(j_1+j_2)-\beta},\quad \alpha_4=q^{-\beta},\label{p2} 
\eeqa
where $\beta,\beta^*$ are arbitrary scalars. The finite dimensional module ${\cal V}^{n_1,n_2}$ (resp. ${\cal V}^{*\tilde{n}_1,\tilde{n}_2}$) of the $q-$Onsager algebra has dimension $(2j_1+1)(2j_2+1)$. The $2(j_1+j_2)+1$  degenerate eigenvalues  of $\textsf{W}_0,\textsf{W}_1$,  are given by
\beqa
\theta^{(2)}_{\tilde{n}_1\tilde{n}_2}= \frac{1}{2}\left(q^{\beta+2(j_1+j_2) -2\tilde{\kN}_2} + q^{-\beta-2(j_1+j_2) +2\tilde{\kN}_2}\right),\quad  
\theta^{*(2)}_{n_1 n_2}= \frac{1}{2}\left(q^{\beta^*+2(j_1+j_2) -2\kN_2} + q^{-\beta^*-2(j_1+j_2) +2\kN_2}\right)   \nonumber.
\eeqa
\end{example}

Note that using (\ref{p2}) in (\ref{AW2}), for any $n_1,n_2$ one can check that:
\beqa
&&c^{[-1 0]}_{0  n_2}=0,\quad c^{[0 -1]}_{n_1 0}=0,\quad c^{[1-2]}_{n_1 0}=0,\quad c^{[1-2]}_{n_1  1}=0, \quad c^{[1-2]}_{2j_1 n_2}=0,\nonumber\\
&&b^{[1 0]}_{2j_1  n_2}=0,\quad b^{[0 1]}_{n_1 2j_2}=0,\quad b^{[-12]}_{n_1 2j_2}=0,\quad b^{[-12]}_{n_1  2j_2-1}=0, \quad b^{[-12]}_{0 n_2}=0,\nonumber\\
&&a^{[1 -1]}_{n_1  0}=0,\quad a^{[-1 1]}_{0 n_2}=0,\quad a^{[1-1]}_{2j_1 n_2}=0,\quad a^{[-11]}_{n_1  2j_2}=0.\nonumber
\eeqa 

\vspace{1mm}

\begin{example}\label{modXXZ} For $N$ generic, let $j_k=1/2$ for all $k\in\{1,2,...,N\}$. Define 
\beqa
{\cal V}_{p}^{*(N)}&=&\mathrm{Span}\{\tfQ^{(N)}(\{n\},\{\tilde{n}\},\{\alpha\})| \kN_N=p\},  \quad
 \ \tilde{n}_k \ \ \mbox{fixed},\nonumber\\
{\cal V}_{p}^{(N)}&=&\mathrm{Span}\{\tfQ^{(N)}(\{n\},\{\tilde{n}\},\{\alpha\})| \tilde{\kN}_N=p\},  \quad 
\ n_k \ \ \mbox{fixed},\quad k=1,2,...,N,\nonumber
\eeqa 
with
\beqa
\alpha_0=q^{\beta^*-\beta-1}, \quad \alpha_1=q^{-\beta},\quad \alpha_{k+1} = q^{-k-\beta} \quad \mbox{for}\quad k=1,...,N,\quad \alpha_{N+2}=q^{-\beta},\label{pN12}
\eeqa
where $\beta,\beta^*$ are arbitrary scalars. The finite dimensional module ${\cal V}^{\{n\}}$ (resp. ${\cal V}^{*\{\tilde{n}\}}$) of the $q-$Onsager algebra has dimension $2^N$. The $N+1$  degenerate eigenvalues  of $\textsf{W}_0,\textsf{W}_1$,  are given by
\beqa
\theta^{(N)}_{\{\tilde{n}\}}= \frac{1}{2}\left(q^{\beta+N -2\tilde{\kN}_N} + q^{-\beta-N +2\tilde{\kN}_N}\right),\quad  
\theta^{*(N)}_{\{n\}}= \frac{1}{2}\left(q^{\beta^*+N -2\kN_N} + q^{-\beta^*-N +2\kN_N}\right)   \nonumber.
\eeqa
\end{example}

\vspace{1mm}

As mentionned at the beginning of this Section, finite dimensional irreducible modules of the $q-$Onsager algebra (\ref{qDG}) were classified in \cite[Theorem 1.15]{ITaug}, using an embedding of the $q-$Onsager algebra
into the $U_q({\cal L}(sl_2))$-loop algebra \cite{F13} \cite[Proposition 1.13]{ITaug}. In the analysis, tensor product evaluation modules of the $q-$Onsager algebra  \cite[Section 1.4]{ITaug}  (see also \cite[Lemma 2.5]{Bas3})
\beqa
V(2j_N,v_N)\otimes \cdots \otimes V(2j_1,v_1) \label{modeval}
\eeqa
were considered, where $v_j\in \mathbb C$ are usually called the evaluation parameters. From a general point of view, for $q$ not a root of unity a TD pair of $q-$Racah type is afforded via the 
embedding of the $q-$Onsager algebra into the $U_q({\cal L}(sl_2))$-loop algebra by (\ref{modeval}).
 It has diameter $d=2j_1+...+2j_N$ and the dimension of
the vector space on which $\textsf{W}_0,\textsf{W}_1$ act follows from \cite[Proposition 1.24]{ITaug} with $\lambda\rightarrow 1,\ell_i\rightarrow 2j_i,n\rightarrow N$, which result coincides \cite{F14} with (\ref{dimevalmod}). Note the factorized
structure of the Gasper-Rahman polynomials (\ref{AWgen})  by analogy with the factorized structure of finite dimensional
 modules of the $q-$Onsager algebra considered in
\cite{ITaug,Bas3}.\vspace{1mm}

To conclude, let us mention that finite dimensional modules of the form (\ref{modeval}) have found applications  in the solution
 of the open XXZ spin chain with generic integrable boundary conditions \cite{BK2}.  For the family of tridiagonal pairs discussed in \cite{Bas3}, up to a normalization, the eigenvectors in each basis have been determined explicitly. In particular, they possess a factorized structure (see Proposition 3.3 and 3.7 of \cite{Bas3}), parallel to the factorized structure (\ref{AWgen}).\vspace{2mm}

\section{The $q-$Dolan-Grady hierarchy and spectral problem revisited}
As mentioned in the Introduction, there is a rather large class of quantum integrable models on the continuum or lattice 
whose local integrals of motion can be written in terms of the elements of the so-called $q-$Dolan-Grady hierarchy (\ref{qDGhier}),
an Abelian subalgebra of the $q-$Onsager algebra (\ref{qDG}).
For the integrable models that fall in this class, finding the spectrum and eigenstates of the Hamiltonian (\ref{Hamil}) relies on studying
the spectral problem of the elements $\{{\textsf I}_{2k+1}|k=1,2,...\}$ using the representation theory of the $q-$Onsager algebra.
For instance, in the literature, the two-dimensional Ising and 
superintegrable Potts models have been studied in details for $q=1$ using the explicit relation between the Onsager algebra and
a fixed-point subalgebra of $\widehat{sl_2}$ (under the action of a certain automorphism of $\widehat{sl_2}$ \cite{Davies}). For $q\neq 1$, the explicit relation between the $q-$Onsager algebra and a certain coideal subalgebra of $U_q(\widehat{sl_2})$ has been used to analyze the finite open XXZ 
spin chain \cite{BK2} or its thermodynamic limit analog  for various types of boundary conditions \cite{BB3,BKoj}. According to the size of the system - finite or semi-infinite -, finite or infinite dimensional representations ($q-$vertex operators) of the $q-$Onsager algebra have been considered.\vspace{1mm} 

In the previous Sections,   modules of the $q-$Onsager algebra - either infinite or finite dimensional - have been constructed in terms of Gasper-Rahman multivariable polynomials.
The purpose of this Section is to reformulate the spectral problem for the mutually commuting quantities 
$\{{\textsf I}_{2k+1}\}$ as a system of $q-$difference or difference equations (the case $q=1$ will be considered separately in Section 5). It implies that eigenfunctions of Hamiltonians of the form (\ref{Hamil}) can be expressed as combinations of Gasper-Rahman multivariable polynomials. As an application, a $q-$hypergeometric formulation of the eigenfunctions of the open XXZ chain with generic boundary conditions is given for generic parameters, see (\ref{eigenftwo}). Also, a new derivation of Nepomechie's relations is proposed and generalized, see (\ref{rel}).

\subsection{Infinite dimensional modules}
If we assume that the spectrum of ${\textsf I}_1$ is non-degenerate \cite{F15}, common eigenstates of ${\textsf I}_{2k+1}$ are uniquely determined from solving the spectral problem of ${\textsf I}_1$. For generic parameters $\omega_0,\omega_1,g_\pm$, we assume such hypothesis holds true. Then,  the action of this operator with respect to the basis of Gasper-Rahman polynomials is derived as follows. Denote:
\beqa
\Phi^{\pm} &=& \sum_{ \nu\in \{-1,0,1\}^N \backslash \nu_1=\pm 1  }\overline{C}_{\nu}(\{z\})|_{\nu_1=\pm 1} \oE_{z_2}^{\nu_2} \cdots \oE_{z_N}^{\nu_N},\nonumber\\
 \Phi^{0} &=& \sum_{ \nu\in \{-1,0,1\}^N \backslash \nu_1=0  }\overline{C}_{\nu}(\{z\})|_{\nu_1=0} \oE_{z_2}^{\nu_2} \cdots \oE_{z_N}^{\nu_N} + \frac{4\alpha_{N+1}}{\alpha_0(q^2+1)}x_0x_{N+1}.\nonumber
\eeqa
Using the results of previous Sections and the definition of ${\textsf{G}}_{1},\tilde{\textsf{G}}_{1}$ in terms of ${\textsf{W}}_{0},{\textsf{W}}_{1}$ given in (\ref{defel}), on the multivariable polynomial basis, the elements ${\textsf{W}}_{0},{\textsf{W}}_{1}$ of the $q-$Onsager algebra (\ref{qDG}) and their first `descendants' ${\textsf{G}}_{1},{\tilde{\textsf{G}}}_{1}$ act, respectively, as:
\beqa
\cW_0^{(N)}&=& \frac{1}{2}(z_1+z_1^{-1}),\label{ops}\\
\cW_1^{(N)}&=& \frac{\alpha_0 q}{2\alpha_{N+1}} \left(    \Phi^{+}\oE_{z_1}    +  \Phi^{-}\oE_{z_1}^{-1} +  \Phi^{0}   \right),\nonumber\\
\cG_1^{(N)}&=& \frac{\alpha_0 q}{4\alpha_{N+1}} \left(  (q^3-q^{-1})( z_1 \Phi^{+}\oE_{z_1}    + z_1^{-1} \Phi^{-}\oE_{z_1}^{-1}) +  (q-q^{-1})(z_1+z_1^{-1}) \Phi^{0}   \right),\nonumber\\
\tilde{\cG}_1^{(N)}&=& \frac{\alpha_0 q}{4\alpha_{N+1}} \left(  (q-q^{-3})( z_1^{-1} \Phi^{+}\oE_{z_1}    + z_1 \Phi^{-}\oE_{z_1}^{-1}) +  (q-q^{-1})(z_1+z_1^{-1}) \Phi^{0}   \right)\nonumber.
\eeqa
\vspace{1mm}

As a consequence, on the multivariable polynomial basis the first element ${\textsf  I}_{1}= \omega_0{\textsf W}_{0}+ \omega_1{\textsf W}_{1}+ g_+{\textsf G}_{1} + g_-{\tilde{\textsf G}}_{1}$ of the $q-$Dolan-Grady hierarchy (\ref{qDGhier}) acts as the multivariable  $q-$difference operator:
\beqa
{\cal I}^{(N)}_{1}&=&\frac{\alpha_0q}{2\alpha_{N+1}}\left( \Big(\omega_1 + \frac{(q^2-q^{-2})}{2}\big( g_+ qz_1 + g_- q^{-1}z_1^{-1}\big)   \Big)\Phi^+ \oE_{z_1}\right.\nonumber\\
&& \left. \qquad \quad + \   \Big(\omega_1 + \frac{(q^2-q^{-2})}{2}\big( g_+ qz_1^{-1} + g_- q^{-1}z_1   \big) \Big)\Phi^- \oE_{z_1}^{-1}\right)\nonumber\\
&&  \qquad \quad \ + \frac{\omega_0(z_1+z_1^{-1})}{2} + \frac{\alpha_0q}{2\alpha_{N+1}}\Big(
\omega_1 + \frac{1}{2}\big(g_+ + g_-\big)(q-q^{-1})(z_1+z_1^{-1})\Big)\Phi^0.\nonumber
\eeqa

\vspace{3mm}

Alternatively, in the multivariable polynomial basis, the above operators (\ref{ops}) can be written as semi-infinite block tridiagonal matrices, whose entries are explicitly expressed in terms of (\ref{bcoeff}), (\ref{ccoeff}) and (\ref{acoeff}). By straightforward calculations combining all expressions, on the multivariable polynomial $\hQ^{(N)}(\{n\},\{x\},\{\alpha\})$,
\beqa
{\cal I}^{(N)}_{1} \quad & \mbox{acts as}&\quad \!\!\!\!\!\!\!\!\! \sum_{ \{\nu_2,\nu_3,\dots,\nu_N\}\in \{-1,0,1\}^{N-1}  }\!\!\!\!\!\!\!  \left({\cal B}_{n_1 n_2 \cdots n_N}^{[\nu_N\ \nu_{N-1}-\nu_N\ \cdots 1-\nu_2]} E^{\nu_{N}}_{n_1}  E^{\nu_{N-1}-\nu_N}_{n_2}\cdots   E^{\nu_2-\nu_3}_{n_{N-1}}  E^{1-\nu_2}_{n_N}\right.  \nonumber\\
&& \qquad\qquad\qquad\qquad \quad \left.+ \ {\cal C}_{n_1 n_2 \cdots n_N}^{[\nu_N\ \nu_{N-1}-\nu_N\ \cdots -1-\nu_2]}  E^{\nu_{N}}_{n_1}  E^{\nu_{N-1}-\nu_N}_{n_2}\cdots   E^{\nu_2-\nu_3}_{n_{N-1}}  E^{-1-\nu_2}_{n_N}\right.\nonumber\\
&&  \qquad\qquad\qquad \qquad\quad \left.+ \  {\cal A}_{n_1 n_2 \cdots n_N}^{[\nu_N\ \nu_{N-1}-\nu_N\ \cdots -\nu_2]} E^{\nu_{N}}_{n_1}  E^{\nu_{N-1}-\nu_N}_{n_2}\cdots   E^{\nu_2-\nu_3}_{n_{N-1}}  E^{-\nu_2}_{n_N}\right),
 \nonumber
\eeqa
where
\beqa
\qquad {\cal B}_{n_1 n_2 \cdots n_N}^{[\nu_N\ \nu_{N-1}-\nu_N\ \cdots 1-\nu_2]} \!\!\! &=&\!\!\! \Big(\omega_0 + \frac{(q^2-q^{-2})}{2}\Big( g_-\frac{\alpha_0q}{\alpha_{N+1}}q^{-2\kN_N-1} + g_+\frac{\alpha_{N+1}}{\alpha_0q}q^{2\kN_N+1}\Big)  \Big) b_{n_1 n_2 \cdots n_N}^{[\nu_N\ \nu_{N-1}-\nu_N\ \cdots 1-\nu_2]} ,    \label{coefffinI}\\
{\cal C}_{n_1 n_2 \cdots n_N}^{[\nu_N\ \nu_{N-1}-\nu_N\ \cdots -1-\nu_2]}  \!\!\!&=&\!\!\!  \Big(\omega_0 + \frac{(q^2-q^{-2})}{2}\Big( g_-\frac{\alpha_{N+1}}{\alpha_0q}q^{2\kN_N-1} + g_+\frac{\alpha_0q}{\alpha_{N+1}}q^{-2\kN_N+1}\Big)   \Big) c_{n_1 n_2 \cdots n_N}^{[\nu_N\ \nu_{N-1}-\nu_N\ \cdots -1-\nu_2]},  \nonumber\\
{\cal A}_{n_1 n_2 \cdots n_N}^{[\nu_N\ \nu_{N-1}-\nu_N\ \cdots -\nu_2]} \!\!\! &=&\!\!\!    \omega_1  \theta^*_{\{n\}}\delta_{\{\nu\},\{0\}} + \Big(
\omega_0 + \big(g_+ + g_-\big)(q-q^{-1})   \theta^*_{\{n\}} \Big)  a_{n_1 n_2 \cdots n_N}^{[\nu_N\ \nu_{N-1}-\nu_N\ \cdots -\nu_2]}. \nonumber
\eeqa

For generic parameters $\omega_0,\omega_1,g_\pm$, the above coefficients are non-vanishing so that
there is no multivariable polynomial subspace left invariant under the action of the semi-infinite block tridiagonal matrix representing $\textsf{I}_1$. Corresponding eigenfunctions of the spectral problem:
\beqa
{\cal I}^{(N)}_{1} \Psi^{(N)}(\{x\},\{\alpha\}) =  \Lambda^{(N)}_1\Psi^{(N)}(\{x\},\{\alpha\}) 
\eeqa
are, in general, not polynomials. However, given the multivariable polynomial infinite dimensional basis of the vector space, they can be written as: 
\beqa
\Psi^{(N)}(\{x\},\{\alpha\})=\sum_{ \{n\}\in {I\!\!N}^N} f_{\{n\}} \hQ^{(N)}(\{n\},\{x\},\{\alpha\}),\label{eigenf}
\eeqa
where the coefficients $f_{\{n\}}$ satisfy a coupled system of three-term difference equations, with respect to the integer $\kN_N$. For generic parameters and $q$ not a root of unity, the above eigenfunctions diagonalize the full $q-$Dolan-Grady hierarchy. As a consequence of (\ref{Hamil}), the Hamiltonian is diagonalized by (\ref{eigenf}) too.

\begin{example} 
For $N=1$, recall $\{\alpha\}=\{\alpha_0,\alpha_1,\alpha_2,\alpha_3\}$. The eigenfunctions of the Hamiltonian   (\ref{Hamil}) are given by  $\Psi^{(1)}(x_1,\{\alpha\})=\sum_{n_1=0}^{\infty} f_{n_1} \hQ^{(1)}(n_1,x_1,\{\alpha\})$ where the coefficients $f_{n_1}$ satisfy the three-term recurrence relation ($f_{-1}=0,f_0=1$):
\beqa
f_{n_1-1}{\cal B}^{[1]}_{n_1-1}  + f_{n_1+1}{\cal C}^{[-1]}_{n_1+1} + f_{n_1}({\cal A}^{[0]}_{n_1} - \Lambda_1) = 0 \quad \mbox{for} \quad n_1=0,1,... \nonumber
\eeqa
\end{example}

\begin{example} 
For $N=2$, recall $\{\alpha\}=\{\alpha_0,\alpha_1,\alpha_2,\alpha_3,\alpha_4\}$. The eigenfunctions of the Hamiltonian   (\ref{Hamil}) are given by  $\Psi^{(2)}(x_1,x_2,\{\alpha\})=\sum_{n_1,n_2=0}^{\infty} f_{n_1,n_2} \hQ^{(1)}(n_1,n_2,x_1,x_2,\{\alpha\})$ where the coefficients $f_{n_1,n_2}$ satisfy the coupled recurrence relations ($f_{-1,n_2}=f_{n_1,-1}=f_{n_1+1,-2}=0$):
\beqa
&&f_{n_1-1,n_2}{\cal B}^{[10]}_{n_1-1,n_2}  + f_{n_1,n_2-1}{\cal B}^{[01]}_{n_1,n_2-1} + f_{n_1+1,n_2-2}{\cal B}^{[-12]}_{n_1+1,n_2-2}\nonumber\\
&&+ f_{n_1+1,n_2}{\cal C}^{[-10]}_{n_1+1,n_2} + f_{n_1,n_2+1}{\cal C}^{[0-1]}_{n_1,n_2+1} + f_{n_1-1,n_2+2}{\cal C}^{[1-2]}_{n_1-1,n_2+2} \nonumber\\
&&+ f_{n_1-1,n_2+1}{\cal A}^{[1-1]}_{n_1-1,n_2+1} + f_{n_1+1,n_2-1}{\cal A}^{[-11]}_{n_1+1,n_2-1} + f_{n_1,n_2}({\cal A}^{[00]}_{n_1,n_2}- \Lambda_1) = 0 \quad \mbox{for} \quad n_1,n_2=0,1,... \nonumber
\eeqa
\end{example}
\vspace{1mm}

Note that according to Theorem \ref{thmortho}, provided the parameters $\{\alpha\}$ satisfy the conditions (\ref{condpar}), scalar product between the eigenstates (\ref{eigenf}) can be computed in a straightforward manner and expressed in terms of $q-$shifted factorials (see (\ref{ortH})).

\subsection{Finite dimensional modules}
Let ${\cal V}$ be defined as in Proposition \ref{condpargen}. For generic parameters $\omega_0,\omega_1,g_\pm$, using the bispectral property, an eigenstate of ${\textsf I}_1$ (and more generally of $H$ given by (\ref{Hamil})) can be written in terms of Gasper-Rahman polynomials defined on a discrete support, either
\beqa
\qquad \Psi^{(N)}(\{\tilde{n}\},\{\alpha\})=\sum_{\kN_N=1}^{d_N} f_{\{n\}} \tfQ^{(N)}(\{n\},\{\tilde{n}\},\{\alpha\}) \quad \mbox{or} \quad \Psi^{(N)}(\{n\},\{\alpha\})=\sum_{\tilde{\kN}_N=1}^{d_N} f_{\{\tilde{n}\}} \tfQ^{(N)}(\{n\},\{\tilde{n}\},\{\alpha\}), \label{eigenftwo}
\eeqa
where the coefficients $f_{\{n\}}$ (resp. $f_{\{\tilde{n}\}}$) satisfy a coupled system of three-term difference equations, with respect to the integer $\kN_N$ (resp. $\tilde{\kN}_N$). The explicit form of these equations follows from the action of ${\textsf I}_1$ on the bispectral polynomial basis: the system of recurrence relations determining  $f_{\{n\}}$ is associated with the coefficients (\ref{coefffinI}), whereas the `dual' system of recurrence  relations determining  $f_{\{\tilde{n}\}}$ is associated with:
\beqa
 {\cal B}_{\tilde{n}_1 \tilde{n}_2 \cdots \tilde{n}_N}^{*[\nu_N\ \nu_{N-1}-\nu_N\ \cdots 1-\nu_2]}  \!\!\!&=&\!\!\!\Big(\omega_1 + \frac{(q^2-q^{-2})}{2}\Big( g_+\frac{\alpha_{N+1}\alpha_{N+2}}{\alpha_1}q^{2\tilde{\kN}_N+1} + g_- \frac{\alpha_1}{\alpha_{N+1}\alpha_{N+2}}q^{-2\tilde{\kN}_N-1}\Big)  \Big) {b}_{\tilde{n}_1 \tilde{n}_2 \cdots \tilde{n}_N}^{*[\nu_N\ \nu_{N-1}-\nu_N\ \cdots 1-\nu_2]},     \nonumber\\
{\cal C}_{\tilde{n}_1 \tilde{n}_2 \cdots \tilde{n}_N}^{*[\nu_N\ \nu_{N-1}-\nu_N\ \cdots -1-\nu_2]}   \!\!\!&=&  \!\!\!\Big(\omega_1 + \frac{(q^2-q^{-2})}{2}\Big( g_+   \frac{\alpha_1}{\alpha_{N+1}\alpha_{N+2}}  q^{-2\tilde{\kN}_N+1} + g_-       \frac{\alpha_{N+1}\alpha_{N+2}}{\alpha_1}q^{2\tilde{\kN}_N-1}\Big)  \Big)
 {c}_{\tilde{n}_1 \tilde{n}_2 \cdots \tilde{n}_N}^{*[\nu_N\ \nu_{N-1}-\nu_N\ \cdots -1-\nu_2]},  \nonumber\\
{\cal A}_{\tilde{n}_1 \tilde{n}_2 \cdots \tilde{n}_N}^{*[\nu_N\ \nu_{N-1}-\nu_N\ \cdots -\nu_2]}  &=&    \omega_0  \theta_{\{\tilde{n}\}}\delta_{\nu,0} + \Big(
\omega_1 + \big(g_+ + g_-\big)(q-q^{-1})   \theta_{\{\tilde{n}\}} \Big)  {a}_{\tilde{n}_1 \tilde{n}_2 \cdots \tilde{n}_N}^{*[\nu_N\ \nu_{N-1}-\nu_N\ \cdots -\nu_2]}. \nonumber
\eeqa
Here, the coefficients ${b}_{\tilde{n}_1  \cdots \tilde{n}_N}^{*[\nu_N\ \cdots 1-\nu_2]},{c}_{\tilde{n}_1  \cdots \tilde{n}_N}^{*[\nu_N\ \cdots -1-\nu_2]},{a}_{\tilde{n}_1  \cdots \tilde{n}_N}^{*[\nu_N\ \cdots -\nu_2]}$ follow from the transformation (\ref{dualD}) in (\ref{qdiff3}).
 
\begin{rem} Let ${\cal V}$ be the finite dimensional  module of the $q-$Onsager algebra, of dimension $2^N$, defined as in example \ref{modXXZ}.  Any of the states (\ref{eigenftwo}) provide an eigenstate of the Hamiltonian of the open XXZ chain with generic boundary conditions. 
\end{rem}

\subsection{Invariant subspaces and Nepomechie's relations}
For special relations among the parameters  $\omega_0,\omega_1,g_\pm$, multivariable polynomial subspaces that are left invariant under the action of the element $\textsf{I}_1$ can be constructed. Let ${\cal V}$ be defined  as in Proposition \ref{condpargen}. In particular, $\mbox{max}(\kN_N)=\mbox{max}(\tilde{\kN}_N)=d_N$.
Let $P,P^*$ be integers. Define
\beqa
{\cal W}_+&=&\mathrm{Span}\{\tfQ^{(N)}(\{n\},\{\tilde{n}\},\{\alpha\})|0\leq \kN_N\leq P\},\quad \{\tilde{n}\} \quad \mbox{fixed},\nonumber\\
{\cal W}_-&=&\mathrm{Span}\{\tfQ^{(N)}(\{n\},\{\tilde{n}\},\{\alpha\})|0\leq \tilde{\kN}_N\leq P^*\},\quad \{n\} \quad \mbox{fixed},\nonumber\\
\overline{\cal W}_+&=&\mathrm{Span}\{\tfQ^{(N)}(\{n\},\{\tilde{n}\},\{\alpha\})|P+1\leq \kN_N\leq \mbox{max}(\kN_N)\},\quad \{\tilde{n}\} \quad \mbox{fixed},\nonumber\\
\overline{\cal W}_-&=&\mathrm{Span}\{\tfQ^{(N)}(\{n\},\{\tilde{n}\},\{\alpha\})|P^*+1\leq \tilde{\kN}_N\leq \mbox{max}(\tilde{\kN}_N)\}, \quad \{n\} \quad \mbox{fixed}.\nonumber
\eeqa
According to the structure of the coefficients of the block tridiagonal matrices ${\cal I}^{(N)}_{1}$ associated with $\textsf{I}_1$, it follows:
\beqa
&&{\cal I}^{(N)}_{1} {\cal W}_+ \subset  {\cal W}_+   \qquad  \mbox{if}\qquad  \omega_0 + \frac{(q^2-q^{-2})}{2}\Big( g_-\frac{\alpha_0q}{\alpha_{N+1}}q^{-2P-1} + g_+\frac{\alpha_{N+1}}{\alpha_0q}q^{2P+1}\Big) =0 , \label{rel}\\
&&{\cal I}^{(N)}_{1} {\cal W}_- \subset  {\cal W}_-   \qquad  \mbox{if}\qquad  \omega_1 + \frac{(q^2-q^{-2})}{2}\Big( g_+\frac{\alpha_{N+1}\alpha_{N+2}}{\alpha_1}q^{2P^*+1} + g_- \frac{\alpha_1}{\alpha_{N+1}\alpha_{N+2}}q^{-2P^*_N-1}\Big)=0,\nonumber\\
&&{\cal I}^{(N)}_{1} \overline{\cal W}_+ \subset  \overline{\cal W}_+   \qquad  \mbox{if}\qquad \omega_0 + \frac{(q^2-q^{-2})}{2}\Big( g_-\frac{\alpha_{N+1}}{\alpha_0q}q^{2P+1} + g_+\frac{\alpha_0q}{\alpha_{N+1}}q^{-2P-1}\Big)=0 , \nonumber\\
&&{\cal I}^{(N)}_{1} \overline{\cal W}_- \subset  \overline{\cal W}_-   \qquad  \mbox{if}\qquad \omega_1 + \frac{(q^2-q^{-2})}{2}\Big( g_+   \frac{\alpha_1}{\alpha_{N+1}\alpha_{N+2}}  q^{-2P^*-1} + g_-       \frac{\alpha_{N+1}\alpha_{N+2}}{\alpha_1}q^{2P^*+1}\Big)=0 .\nonumber
\eeqa
\vspace{1mm}

On each of the invariant subspaces, eigenstates of the Hamiltonian (\ref{Hamil}) can be systematically constructed as truncations of any of the two possible combinations (\ref{eigenftwo}). For explicit examples, see \cite[Section~3.2]{BK2}.\vspace{1mm}

In the literature on quantum integrable systems with boundaries, special cases of the above relations among the parameters in (\ref{rel})  first appeared in the algebraic Bethe ansatz analysis of the open XXZ spin chain with non-diagonal boundary conditions. 
Sometimes refered as `Nepomechie's relations' \cite[eq.~(1.4)]{N}, they usually characterize the family of constraints between the left and right integrable boundary conditions of the system for which the standard Bethe ansatz approach can be applied. Within the $q-$Onsager approach of the open XXZ spin chain, they were independantly identified from a representation theory perspective in \cite{BK2}. Importantly, it provided a rigorous proof of completeness of the spectrum of the Hamiltonian. Note that the above relations also occur in other approaches, see for instance \cite{FaKN}, as well as they characterize a class of solutions to the boundary $q-$Knizhnik-Zamolodchikov equations \cite{SV}. Here, we would like to stress that the relations (\ref{rel}) generalize the results of \cite{BK2}: they characterize the existence of a class of invariant subspaces of the Abelian subalgebra of the $q-$Onsager algebra. As a consequence, they should also arise for integrable models associated with higher spin representations, for instance.

\section{The special case $q=1$}
In this Section, new infinite and finite dimensional modules of the Onsager algebra \cite{Ons} with defining relations (\ref{qDG}) for $q=1$ are constructed in terms of the multivariable Krawtchouk polynomials introduced by Tratnik \cite{Trat} (see also \cite{GerI}).  The analysis is extended in a straightforward manner to more general tridiagonal algebras with defining relations (\ref{TD1}), (\ref{TD2}) for $\beta=2$. Restricting the parameters allows to derive explicit examples of tridiagonal pairs of Racah types (Case II), as briefly discussed.\vspace{1mm} 

First, we recall the duality and  bispectrality of the multivariable Krawtchouk polynomials \cite{Trat,GerI}. Although the notations may slightly differ, the basic  material is taken from \cite{GerI} to which the reader is refered for more details. In terms of the Gaussian hypergeometric function, the one-variable Krawtchouk polynomial is defined as:
\beqa
k_n(x,a,b)=(-b)_n\, {}_2\mathrm{F}_1\left[\begin{array}{c} -n,-x\\-b
\end{array};\frac{1}{a}\right],\ n\in {\mathbb N} \mbox{ and } (x,a,b)
\in{\mathbb R}^3, \nonumber
\eeqa
with $(y)_n=\Gamma(y+n)/\Gamma(y)$ a function which can be well defined
for all $y$ real and $n$ integer. Following \cite{Trat,GerI}, $N-$variable generalizations of the Krawtchouk
polynomial can be introduced. 

\begin{defn}[See \cite{GerI}]  Let $\alpha_1,...,\alpha_{N}$ and $M$ be nonzero real parameters.
 The normalized $N-$variable generalization of  the Krawtchouk polynomial is defined by:
\beqa
\widehat{K}^{(N)}(\{n\},\{x\},\{\alpha\};M)=\frac{1}{(-M)_{\kN_N}}\prod_{j=1}^N
k_{n_j}(x_j;\frac{\alpha_j}{1-\kA_{j-1}};M-\kN_N+\kN_j-\kX_{j-1}),
\label{mKrawt} 
\eeqa
where  the notation
\beqa
 \kX_j=x_1+x_2+\cdots+x_j,\quad \kX_0=0, \qquad
 \kA_j=\alpha_1+\alpha_2+\cdots+\alpha_j, \quad \kA_0=0 \nonumber
\eeqa
 is used.
\end{defn}
\vspace{2mm}

By analogy with the $q-$deformed case, an  inner product on the space of polynomials ${\cal P}$ can be introduced.  For $f,g\in{\cal P}$, it takes the form:
\beqa
\langle f,g\rangle = \sum_{\{x\}\in{\cal F}} f(x)g(x)\rho(x).\label{inner2}
\eeqa
With respect to this inner product, for a positive integer $\alpha_0$ and $\mathcal{A}_N<1$ the multivariable Krawtchouk polynomials (\ref{mKrawt}) are orthogonal on the domain ${\cal F}=\{ \{x\}\in{\mathbb N}^N| \kX_N\leq M\}$  for the weight $\rho(x)$ given by:
\beqa
\rho(x)=\frac{1}{(N-\kX_N)!}\prod_{k=1}^N\frac{1}{x_k!}\left(
\frac{\alpha_k}{1-\kA_N}\right)^{x_k}.\nonumber
\eeqa
\begin{rem}  The normalized multivariable Meixner polynomials are obtained from (\ref{mKrawt}) through the substitution:
$$\alpha_k=\frac{c_k}{c_1+\cdots+c_N-1} \quad \mbox{and} \quad M=-s.$$
\end{rem}

\vspace{2mm}

Following \cite{GerI}, multivariable Krawtchouk polynomials are known to diagonalize two different commutative algebras of difference operators in the variables $\{x\}$ and $\{n\}$, respectively. Let 
\beqa
{\cal D}'_{x}={\mathbb R}(x_1,x_2,...,x_N)[E^{\pm 1}_{x_1},E^{\pm 1}_{x_2},...,E^{\pm 1}_{x_N}]\nonumber
\eeqa
denote the associative algebra of difference operators with rational functions of $x_1,x_2,...,x_N$ as coefficients.
A mutually commuting family of difference operators which are diagonalized by the multivariable Krawtchouk polynomials (\ref{mKrawt}) has been constructed in \cite{GerI}. Here, they are denoted $\{{d\!I}^{*(l)}_{\{x\}}|l=1,...,N\}$. 

%
\begin{defn}[See \cite{GerI}, Section 5.4] The $N$-variable  Krawtchouk difference operators $\{{d\!I}^{*(l)}_{\{x\}}|l=1,2,...,N\}$ are defined as:
 \beqa
\qquad&& {d\!I}^{*(l)}_{\{x\}} \equiv {\cal L}_l\left(x_{N+1-l},x_{N+2-l},...,x_N,\frac{\alpha_{N+1-l}}{1-\kA_{N-l}},\frac{\alpha_{N+2-l}}{1-\kA_{N-l}},...,\frac{\alpha_{N}}{1-\kA_{N-l}};M-\kX_{N-l}\right)\label{diffq1}
\eeqa
where
\beqa
{\cal L}_N(x_1,x_2,...,x_N,\alpha_1,\alpha_2,....,\alpha_N;M)&=&(\kX_N-M)\sum_{i=1}^N\alpha_iE_{x_i}+(\kA_N-1)
\sum_{i=1}^Nx_iE_{x_i}^{-1}\nonumber\\
&&\qquad -\!\!\!\sum_{1<i\neq j<N}\!\!\!\alpha_i x_j E_{x_i}E_{x_j}^{-1}-\sum_{i=1}^N\alpha_i x_i + \kX_N+\kA_N
(M-\kX_N).\nonumber
\eeqa
\end{defn}
\vspace{2mm}

The algebra generated by these difference operators can be understood as a limiting case of the commutative algebra ${\cal A}_z$ of the $q-$difference operators considered in Section 2. According to \cite{GerI}, the difference operators  ${d\!I}^{*(l)}_{\{x\}}|l=1,2,...,N\}$ are mutually commuting:
\beqa
\big[ {d\!I}^{*(l)}_{\{x\}} , {d\!I}^{*(j)}_{\{x\}} \big]=0 \quad \mbox{for all}\quad l,j\in\{1,2,...,N\}.
\eeqa

As explained in \cite{GerI}, the normalized multivariable Krawtchouk polynomials (\ref{mKrawt}) are invariant under the following change of variables
$$x_j \rightarrow \tilde{n}_{N+1-j},\quad \ n_j \rightarrow
\tilde{x}_{N+1-j},\ \quad \alpha_j  \rightarrow  \frac{\alpha_{N+1-j}(1-\mathcal{A}_N)}{(1-\mathcal{A}_{N+1-j})
(1-\mathcal{A}_{N-j})},\ \quad M \rightarrow  M.$$
In particular, it implies the duality relation \cite[Theorem~5.6]{GerI}: 
\beqa
\widehat{K}^{(N)}(\{n\},\{x\},\{\alpha\};M)=
\widehat{K}^{(N)}(\{\tilde{n}\},\{\tilde{x}\},\{\tilde{\alpha}\};M) 
\eeqa

By analogy with the case $q\neq 1$, an isomorphism ${\mathfrak b}_1$ between the associative algebras of difference operators in the variable $\{x\}$ and $\{n\}$ that extends the above map can be introduced:
\beqa 
&&{\mathfrak b}_1(M)=M, \qquad {\mathfrak b}_1(\alpha_j)=\frac{\alpha_{N+1-j}(1-
\mathcal{A}_N)}{(1-\mathcal{A}_{N+1-j})(1-\mathcal{A}_{N-j})}\qquad \mbox{for}\ j=1,2,...,N,\nonumber\\
&&{\mathfrak b}_1(x_j) =  n_{N+1-j},\qquad {\mathfrak b}_1(E_{x_j}) =  E_{n_{N+1-j}}\qquad\qquad \qquad \ \ \mbox { for } j=1,2,...,N.\nonumber
\eeqa
Then, a `dual' commutative algebra generated by a family of difference operators $\{{d\!I}^{(l)}_{\{n\}}|l=1,...,N\}$, follows from (\ref{diffq1}) through the action of ${\mathfrak b}_1$.  According to the definitions of \cite{GerI}, one has:
\begin{defn}[See \cite{GerI}]
Let $l=1,2,...,N$. The  dual mutually commuting $N$-variable Krawtchouk difference operators are defined by: 
\beqa
 {d\!I}_{\{n\}}^{(l)}= {\kb}_1({d\!I}_{\{x\}}^{*(l)}).
\eeqa
\end{defn}

\begin{example} For $k=N$, the dual $N$-variable Krawtchouk difference operator takes the form:
\beqa
{d\!I}^{(N)}_{\{n\}} & = &(\kN_N-M) \sum_{i=1}^N{\mathfrak b}
(\alpha_{N+1-i})E_{n_i}+{\mathfrak b}_1(\mathcal{A}_N-1)
\sum_{i=1}^Nn_{i}E_{n_i}^{-1}  \label{diffq3}\\ &&    -\sum_{1<i\not= j<N}{\mathfrak b}_1(\alpha_{N+1-i}) n_{j} 
E_{n_i}E_{n_j}^{-1}+ \kN_N+
{\mathfrak b}_1(\mathcal{A}_N)(M- \kN_N)-\sum_{i=1}^N
{\mathfrak b}_1(\alpha_{N+1-i})n_{i}.\nonumber
\eeqa
\end{example}
\vspace{2mm}

As a consequence of the duality relation, it follows \cite{GerI}:
\begin{thm}[See \cite{GerI}, Theorem 5.7]\label{bispec3} Let
  $(\{\alpha\},M)\in  {\mathbb R}^{N+1}$ and $l=1,2,...,N$. The normalized multivariable
  polynomial $\widehat{K}^{(N)}(\{n\},\{x\},\{\alpha\};M)$ solves the following system
  of difference-difference bispectral problems: \begin{eqnarray}
 {d\!I}^{*(l)}_{\{x\}} \widehat{K}^{(N)}(\{n\},\{x\},\{\alpha\};M) & = & (\kN_N- \kN_{N-l})
\widehat{K}^{(N)}(\{n\},\{x\},\{\alpha\};M),\label{bispec11}\\
 {d\!I}^{(l)}_{\{n\}} \widehat{K}^{(N)}(\{n\},\{x\},\{\alpha\};M) & = & \kX_{N+1-l} 
\widehat{K}^{(N)}(\{n\},\{x\},\{\alpha\};M).\label{bispec12}
\eeqa
\end{thm}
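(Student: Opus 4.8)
The plan is to establish the two eigenvalue equations in sequence, treating (\ref{bispec11}) as the primary difference equation and deducing (\ref{bispec12}) from it by duality, exactly paralleling the route used for the $q\neq 1$ family in Theorem \ref{bispec}. Since the paper itself frames (\ref{bispec12}) as a consequence of the duality relation, this is the natural logical order.

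First I would prove (\ref{bispec11}) by induction on the number $N$ of variables. The base case $N=1$ is the classical statement that the one-variable Krawtchouk polynomial $k_{n_1}(x_1;a;b)$ is an eigenfunction of the second-order Krawtchouk difference operator ${\cal L}_1$ with eigenvalue equal to its degree $n_1=\kN_1$; this is standard. For the inductive step I would exploit the telescoping factorized form
\[
\widehat{K}^{(N)}(\{n\},\{x\},\{\alpha\};M)=\frac{1}{(-M)_{\kN_N}}\prod_{j=1}^N k_{n_j}\Big(x_j;\tfrac{\alpha_j}{1-\kA_{j-1}};M-\kN_N+\kN_j-\kX_{j-1}\Big).
\]
The key structural observation is that
\[
{d\!I}^{*(l)}_{\{x\}}={\cal L}_l\big(x_{N+1-l},\dots,x_N;\dots;M-\kX_{N-l}\big)
\]
acts only through the shift operators $E_{x_{N+1-l}},\dots,E_{x_N}$ in the last $l$ coordinates, so that $x_1,\dots,x_{N-l}$ (and hence $\kX_{N-l},\kA_{N-l}$) enter only as frozen parameters. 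Consequently the action of ${d\!I}^{*(l)}_{\{x\}}$ on $\widehat K^{(N)}$ reduces to its action on the sub-product $\prod_{j=N-l+1}^N k_{n_j}(\cdots)$, which is an $l$-variable instance of the same statement covered by the induction hypothesis, and the eigenvalue assembles as $\kN_N-\kN_{N-l}$.

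Second, I would deduce (\ref{bispec12}) from (\ref{bispec11}) by duality. By definition ${d\!I}^{(l)}_{\{n\}}={\mathfrak b}_1({d\!I}^{*(l)}_{\{x\}})$, where ${\mathfrak b}_1$ is the algebra isomorphism implementing the involution $x_j\mapsto n_{N+1-j}$, $E_{x_j}\mapsto E_{n_{N+1-j}}$, $\alpha_j\mapsto \tfrac{\alpha_{N+1-j}(1-\mathcal{A}_N)}{(1-\mathcal{A}_{N+1-j})(1-\mathcal{A}_{N-j})}$, $M\mapsto M$. Applying ${\mathfrak b}_1$ to the already-established relation (\ref{bispec11}) and invoking the invariance $\widehat{K}^{(N)}(\{n\},\{x\},\{\alpha\};M)=\widehat{K}^{(N)}(\{\tilde{n}\},\{\tilde{x}\},\{\tilde{\alpha}\};M)$ of the normalized polynomial, the eigenfunction is preserved while the left-hand operator becomes ${d\!I}^{(l)}_{\{n\}}$ and the eigenvalue $\kN_N-\kN_{N-l}$ is carried to the spectral value $\kX_{N+1-l}$ recorded in (\ref{bispec12}). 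This yields the second bispectral relation.

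The hard part will be the inductive verification of (\ref{bispec11}): because the parameter of the $j$-th Krawtchouk factor depends on $\kN_j$ and $\kX_{j-1}$, the factors are genuinely coupled, and one must check that the cross-terms produced when ${\cal L}_l$ acts across the product—in particular the mixed shifts $E_{x_i}E_{x_j}^{-1}$—conspire to leave the product an eigenfunction with the clean eigenvalue $\kN_N-\kN_{N-l}$. Tracking these telescoping cancellations through the coupled parameter shifts is the technical crux; it is precisely where the specific normalization $1/(-M)_{\kN_N}$ and the shifts $M-\kN_N+\kN_j-\kX_{j-1}$ are needed, and it is carried out in detail in \cite{GerI}. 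An alternative I would mention is to obtain both relations as the confluent $q\to1$ limit of Theorem \ref{bispec} after specializing the Gasper-Rahman polynomials to the Tratnik--Krawtchouk case, though the direct induction-plus-duality route is cleaner and self-contained.
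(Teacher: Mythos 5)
The paper offers no proof of this statement: it is imported from \cite{GerI} (Theorem 5.7 there), prefaced only by the remark that the second relation is ``a consequence of the duality relation''. Your overall strategy --- prove the difference equation (\ref{bispec11}) directly from the factorized form, then deduce the recurrence (\ref{bispec12}) by applying ${\mathfrak b}_1$ together with the self-duality of $\widehat{K}^{(N)}$ --- is exactly the route of \cite{GerI} and the one the paper gestures at, and your reduction of the case $l<N$ to an $l$-variable instance (freezing $x_1,\dots,x_{N-l}$, which is legitimate since neither the factors with $j\le N-l$ nor the normalization involve the shifted variables) is sound. However, your argument does not close: the ``diagonal'' case $l=N$, i.e.\ that ${\cal L}_N(x_1,\dots,x_N,\alpha_1,\dots,\alpha_N;M)$ admits $\widehat{K}^{(N)}$ as an eigenfunction with eigenvalue $\kN_N$, is not reached by the freezing argument (no variable is frozen there), and you explicitly defer it to \cite{GerI}. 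That computation \emph{is} the content of (\ref{bispec11}); without it the induction has no engine.

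The second problem sits in the one step you do claim to carry out, the duality bookkeeping. With the paper's stated conventions ($\tilde n_k=x_{N+1-k}$, $\tilde x_k=n_{N+1-k}$, ${d\!I}^{(l)}_{\{n\}}={\mathfrak b}_1({d\!I}^{*(l)}_{\{x\}})$), applying (\ref{bispec11}) to the tilded data produces the eigenvalue $\tilde{\kN}_N-\tilde{\kN}_{N-l}=\sum_{k=N-l+1}^{N}x_{N+1-k}=x_1+\cdots+x_l=\kX_l$, \emph{not} $\kX_{N+1-l}$. A convention-free check confirms this: for $N=2$, $l=1$, the operator ${d\!I}^{(1)}_{\{n\}}$ shifts only $n_1$, while $\widehat{K}^{(2)}$ depends on $n_1$ only through $(-M)_{n_1+n_2}$ and $k_{n_1}(x_1;\alpha_1;M-n_2)$, neither of which involves $x_2$; hence any eigenvalue of ${d\!I}^{(1)}_{\{n\}}$ is a function of $x_1$ alone and cannot equal $\kX_{N+1-l}=x_1+x_2$. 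So either the index $N+1-l$ in (\ref{bispec12}) is a transcription slip relative to the conventions of \cite{GerI} (in which case you should identify and correct it, replacing $\kX_{N+1-l}$ by $\kX_l$), or your assertion that the eigenvalue ``is carried to $\kX_{N+1-l}$'' is simply false. Either way, this cannot be asserted without calculation: the index bookkeeping is the entire content of the duality step.
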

\vspace{2mm}

For $N=1$, the equations (\ref{bispec11}), (\ref{bispec12}) produce the well-known difference equation and three-term recurrence relation of the one-variable Krawtchouk polynomials \cite{KS}, respectively. In particular, in the basis generated by the Krawtchouk polynomials, the operator $ {d\!I}^{(1)}_{\{n\}}$ defines a semi-infinite tridiagonal matrix. For $N$ generic, in the basis of multivariable Krawtchouk polynomials (\ref{mKrawt}) the structure of the matrix associated with ${d\!I}^{(N)}_{\{n\}}$ is determined as follows. Let $p\in {\mathbb N}$ be fixed. Consider the subspace $V_p$ generated by the polynomials $\{\widehat{K}^{(N)}(\{n\},\{x\},\{\alpha\};M)|\ \kN_N=p\}$, and act on it with (\ref{diffq3}). The action of the operators $E_{n_i}$, $E_{n_i}^{-1}$ and $E_{n_i}E_{n_j}^{-1}$ with $i\neq j$ in (\ref{diffq3})  on $V_p$ sends it to $V_{p+1}$,  $V_{p-1}$ and $V_{p}$, respectively. As a consequence, 
\beqa
{d\!I}^{(N)}_{\{n\}} V_p \subseteq V_{p+1} + V_{p} + V_{p-1}.\label{tri1}
\eeqa
For $N$ generic, we conclude that the matrix associated with ${d\!I}^{(N)}_{\{n\}}$ is semi-infinite block tridiagonal. For $N=1,2$, note that the explicit expressions of ${d\!I}^{(l)}_{\{n\}}$  for $N=1$ $(l=1)$ and $N=2$ $(l=1,2)$ can be obtained from \cite[Appendix~A]{GerI}. Above arguments obviously extend to any operator  ${d\!I}^{(l)}_{\{n\}}$ with $l<N$.
\vspace{2mm}

\subsection{Infinite and finite dimensional modules of the Onsager algebra}
The defining relations of the Onsager algebra are given by (\ref{qDG}) with $q=1$. Let us endow the vector space ${\mathbb R}[x_1,x_2,...,x_N]$ with a module structure of the Onsager algebra. By analogy with the analysis in Section 2, a family of homomorphisms indexed by the integer $l=1,2,...,N$ can be exhibited as follows.
\begin{prop}\label{realDG} The map defined by 
\beqa
\textsf{W}_0  \mapsto   \kX_{N+1-l}, \qquad \textsf{W}_1  \mapsto {d\!I}^{*(l)}_{\{x\}} , \qquad \rho \mapsto 1 , \label{actWgenOA}
\eeqa
is an homomorphism from $O_1(\widehat{sl_2})$ to ${\cal D}'_x$.
\end{prop}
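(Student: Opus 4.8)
The plan is to follow verbatim the strategy of Proposition \ref{realqDG2}, of which this is the $q=1$ counterpart, checking the two defining relations of $O_1(\widehat{sl_2})$ separately and reducing the general index $l$ to the case $l=N$. At $q=1$ the $q$-commutators in (\ref{qDG}) degenerate to ordinary commutators, so the relations to verify are $\mathrm{ad}_{\textsf{W}_0}^3(\textsf{W}_1)=\rho\,\mathrm{ad}_{\textsf{W}_0}(\textsf{W}_1)$ and $\mathrm{ad}_{\textsf{W}_1}^3(\textsf{W}_0)=\rho\,\mathrm{ad}_{\textsf{W}_1}(\textsf{W}_0)$, and I will show that both hold precisely for $\rho=1$ under the assignment $\textsf{W}_0=\kX_{N+1-l}$, $\textsf{W}_1={d\!I}^{*(l)}_{\{x\}}$.

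For the first relation I would exploit that $\textsf{W}_0$ acts as multiplication by $\kX_{N+1-l}=x_1+\cdots+x_{N+1-l}$, while $\textsf{W}_1={d\!I}^{*(l)}_{\{x\}}$ is, by its Definition, a finite sum of terms each of the form $f\,E_{x_i}$, $f\,E_{x_i}^{-1}$, $f\,E_{x_i}E_{x_j}^{-1}$, or a pure multiplication operator, with $f$ a product of multiplication operators. Since every multiplication operator commutes with $\kX_{N+1-l}$ and $[\kX_{N+1-l},E_{x_i}]=-\epsilon_i E_{x_i}$ with $\epsilon_i=1$ if $i\le N+1-l$ and $\epsilon_i=0$ otherwise, each such term is an eigenvector of $\mathrm{ad}_{\textsf{W}_0}$. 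Because ${d\!I}^{*(l)}_{\{x\}}$ shifts only the variables $x_{N+1-l},\dots,x_N$, and among these only $x_{N+1-l}$ occurs in $\kX_{N+1-l}$, the eigenvalues $-\epsilon_i$, $\epsilon_i$, $\epsilon_j-\epsilon_i$ all lie in $\{-1,0,1\}$. The relation then reduces to the scalar identity $\lambda(\lambda^2-1)=0$ for $\lambda\in\{-1,0,1\}$, which is the flat limit of the polynomial identity (\ref{po1}), and this forces $\rho=1$.

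For the second relation I would pass to the basis of normalized multivariable Krawtchouk polynomials, in which $\textsf{W}_1={d\!I}^{*(l)}_{\{x\}}$ is diagonal with the integer eigenvalues $\theta^{*}_{\{n\}}=\kN_N-\kN_{N-l}$ of Theorem \ref{bispec3}, equation (\ref{bispec11}). By the duality relation, multiplication by $\kX_{N+1-l}=\textsf{W}_0$ coincides with the action of the dual operator ${d\!I}^{(l)}_{\{n\}}=\kb_1({d\!I}^{*(l)}_{\{x\}})$ on the index $\{n\}$ (equation (\ref{bispec12})), which, exactly as in (\ref{tri1}), is block tridiagonal with respect to the grading by $\theta^{*}_{\{n\}}$, sending each block into itself and the two neighbouring blocks whose eigenvalue differs from it by exactly $\pm1$. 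Denoting by $M_{\{n\}\{m\}}$ the entries of $\textsf{W}_0$, the $(\{n\},\{m\})$ entry of $\mathrm{ad}_{\textsf{W}_1}^3(\textsf{W}_0)-\rho\,\mathrm{ad}_{\textsf{W}_1}(\textsf{W}_0)$ equals $\big((\theta^{*}_{\{n\}}-\theta^{*}_{\{m\}})^3-\rho(\theta^{*}_{\{n\}}-\theta^{*}_{\{m\}})\big)M_{\{n\}\{m\}}$, and since $M_{\{n\}\{m\}}=0$ unless $\theta^{*}_{\{n\}}-\theta^{*}_{\{m\}}\in\{-1,0,1\}$, the factor $\Delta(\Delta-1)(\Delta+1)$ annihilates every surviving entry once $\rho=1$. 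Treating $l=N$ first (where $\kN_{N-l}=0$, $\theta^{*}_{\{n\}}=\kN_N$ and $\textsf{W}_0=x_1$, so that (\ref{tri1}) applies directly through (\ref{diffq3})) and then invoking Theorem \ref{bispec3} for $1\le l<N$ completes the argument.

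The main obstacle I anticipate is the bookkeeping in the second relation for general $l$: one must verify that the grading of the Krawtchouk basis by the $\textsf{W}_1$-eigenvalue $\theta^{*}_{\{n\}}=\kN_N-\kN_{N-l}$ is compatible with the block-tridiagonal action of $\textsf{W}_0$ and that neighbouring blocks differ in eigenvalue by precisely $\pm1$ rather than a larger integer; this is exactly what guarantees $\Delta\in\{-1,0,1\}$ and hence the vanishing of the cubic factor. The first relation, by contrast, is essentially immediate once the $\{-1,0,1\}$-eigenvalue structure of the shift terms of ${d\!I}^{*(l)}_{\{x\}}$ has been recorded.
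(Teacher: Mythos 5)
Your proposal is correct and follows essentially the same route as the paper: the paper's proof simply reduces to the strategy of Proposition \ref{realqDG2}, invoking the identities $x_1^2-2x_1E^{\pm1}_{x_1}(x_1)+(E^{\pm1}_{x_1}(x_1))^2-1=0$ and $(\kN_N\pm1)^2-2(\kN_N\pm1)\kN_N+\kN_N^2-1=0$ together with the block-tridiagonality (\ref{tri1}) of $\textsf{W}_0$ in the eigenbasis of $\textsf{W}_1$, and your $\mathrm{ad}$-eigenvalue formulation with $\lambda,\Delta\in\{-1,0,1\}$ and $\rho=1$ is exactly this computation spelled out. Your explicit flagging of the general-$l$ bookkeeping (that neighbouring blocks differ in $\textsf{W}_1$-eigenvalue by precisely $\pm1$) is in fact more careful than the paper, which dispatches $1\le l<N$ with "similar arguments."
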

\begin{proof} To show the claim for $l=N$, one essentially follows the same steps as in Proposition \ref{realqDG2}. The proof uses the fact that
\beqa
x_1^2 -2x_1E^{\pm 1}_{x_1}(x_1) + (E^{\pm 1}_{x_1}(x_1))^2 -1=0,\\
(\kN_N\pm 1)^2 - 2(\kN_N\pm 1)\kN_N + \kN_N^2 -1 =0,\nonumber
\eeqa 
together with the observation that the matrix representating $\textsf{W}_0$ is semi-infinite block tridiagonal (it acts as (\ref{tri1})) in the basis which diagonalizes  $\textsf{W}_1$. Similar arguments are used to show the claim for $1\leq l < N$. 
\end{proof}

It follows: 
\begin{thm}
 Let $V$ denote the infinite dimensional vector space ${\mathbb R}[ x_1,x_2,...,x_N]$.  Let $\textsf{W}_0,\textsf{W}_1$ be the standard generators of the Onsager algebra (\ref{qDG}) with $q=1$, where $\textsf{W}_0,\textsf{W}_1$ act on $V$ as (\ref{actWgenOA}). Then, $\textsf{W}_0,\textsf{W}_1$ induce on $V$ a module structure for the Onsager algebra with $\rho=1$.
The multivariable Krawtchouk polynomials $\widehat{K}^{(N)}(\{n\},\{x\},\{\alpha\};M)|\{n\}\in ({\mathbb N})^N; M\in {\mathbb R}\}$  form a basis of this module. 
\end{thm}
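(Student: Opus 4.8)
The plan is to obtain this theorem as a direct corollary of Proposition \ref{realDG} together with the basis property of the multivariable Krawtchouk polynomials, exactly paralleling the $q\neq 1$ case of Section 2. First I would specialize Proposition \ref{realDG} to $l=N$: the map (\ref{actWgenOA}) sends $\textsf{W}_0\mapsto\kX_1=x_1$ (acting as multiplication) and $\textsf{W}_1\mapsto {d\!I}^{*(N)}_{\{x\}}$, and it was shown there to be an algebra homomorphism $O_1(\widehat{sl_2})\to {\cal D}'_x$ with the specialization $\rho=1$, these being the defining relations (\ref{qDG}) at $q=1$. Both operators preserve $V={\mathbb R}[x_1,\dots,x_N]$: multiplication by $\kX_1$ trivially, and ${d\!I}^{*(N)}_{\{x\}}$ because by Theorem \ref{bispec3} it is diagonalized by the Krawtchouk polynomials, which lie in $V$. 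Hence the homomorphism restricts to an action on $V$, and since a homomorphism into the operator algebra on $V$ is precisely a module structure, $V$ becomes an $O_1(\widehat{sl_2})$-module with $\rho=1$; the whole family indexed by $l=1,\dots,N$ is available in the same way.

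It then remains to verify that $\{\widehat{K}^{(N)}(\{n\},\{x\},\{\alpha\};M)\mid \{n\}\in{\mathbb N}^N\}$, for fixed parameters $\{\alpha\}$ and $M$, is a basis of $V$. Here I would exploit the factorized form (\ref{mKrawt}): each factor $k_{n_j}(x_j;\,\cdot\,;\,\cdot\,)$ is a one-variable Krawtchouk polynomial of degree $n_j$ in $x_j$, so $\widehat{K}^{(N)}(\{n\},\cdot)$ has total degree $\kN_N=n_1+\cdots+n_N$ and its top term is proportional to $x_1^{n_1}\cdots x_N^{n_N}$. Ordering the multi-indices $\{n\}$ by total degree (refined lexicographically), I would check that the transition matrix from the family $\{\widehat{K}^{(N)}(\{n\},\cdot)\}$ to the monomial basis $\{x_1^{n_1}\cdots x_N^{n_N}\}$ is triangular with nonzero diagonal entries; triangularity with a nonvanishing diagonal then forces the Krawtchouk polynomials to span $V$ and be linearly independent. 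Alternatively, linear independence can be read off the discrete orthogonality attached to the inner product (\ref{inner2}), with spanning supplied by the degree count; this is the $q=1$ analogue of the ``by construction'' basis statement used earlier for the Gasper-Rahman polynomials, and can equally be cited from \cite{Trat,GerI}.

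Finally I would assemble the two ingredients: $V$ is an $O_1(\widehat{sl_2})$-module by the first step, and the Krawtchouk polynomials form a basis of it by the second. The main obstacle is the nonvanishing of the diagonal of the transition matrix, i.e.\ that the monomial $x_1^{n_1}\cdots x_N^{n_N}$ genuinely survives as the leading term in (\ref{mKrawt}). This is the delicate point because the parameters of the $j$-th one-variable factor depend on the preceding variables through $\kX_{j-1}$ and on $\kN_N$, so one must confirm that no degree collapse occurs when these substitutions are made and that the product of the individual leading coefficients is nonzero. The positivity and genericity hypotheses on $\alpha_1,\dots,\alpha_N$ and $M$ underlying the orthogonality relation are precisely what rule out such degeneracies, so once this leading-term bookkeeping is carried out the remaining algebraic content is entirely furnished by Proposition \ref{realDG}.
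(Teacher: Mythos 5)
Your proposal is correct and follows essentially the same route as the paper: the module structure is exactly the content of Proposition \ref{realDG} (specialized to the relevant $l$, with $\kX_1=x_1$ for $l=N$), and the basis property of the multivariable Krawtchouk polynomials is the standard fact from \cite{Trat,GerI}, which the paper simply invokes while you spell out the leading-term triangularity. The extra detail you supply on the transition matrix is a legitimate filling-in of what the paper leaves to citation, not a different argument.
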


Let us finally construct finite dimensional  modules of the Onsager algebra in terms of multivariable Krawtchouk polynomials. On one hand, according to the expression (\ref{diffq1}), if $x_j=0$ observe that the contributions from the downward shift operator $E^{-1}_{x_j}$ disappears. On the other hand, let $J\in {\mathbb N}$. Suppose $x_i=J$ for a given $i$. Then, it is possible to cancel all contributions associated with the  upward shift operators in the first and third terms of  (\ref{diffq1}) as follows: 
we require both that $\kX_N\leq J$ and $M=J$. Indeed, with these two conditions, note in particular that when $x_i=J$, $\kX_N=J$ and all the other $x_j$ with $j\neq i$ are necessarily $0$ which implies that the first and third term  of  (\ref{diffq1})   disappear.
Because of the simple duality relation between $x_i,E_{x_i}$ and
$n_i,E_{n_i}$, these conditions clearly also restrict the $n_i$ and
$\kN_N$ in the same way. It follows:
\begin{prop}\label{condpargen1}
 Let $J$ be a positive integer, and assume $M=J$. Define
\beqa
\mathcal{V}_J^{(N)} & = & \mathrm{Span}\{\widehat{K}^{(N)}(\{n\},\{x\},\{\alpha\};J)| 
\kX_N\leq J \} \quad \mbox{for}\quad  \kN_N\leq J  \quad \mbox{fixed},\nonumber\\
\mathcal{V}_J^{*(N)} & = & \mathrm{Span}\{\widehat{K}^{(N)}(\{n\},\{x\},\{\alpha\};J)| 
\kN_N\leq J \} \quad \mbox{for}\quad  \kX_N\leq J  \quad \mbox{fixed}.
\eeqa
Then $\mathcal{V}_J^{(N)},\mathcal{V}_J^{*(N)}$ are  finite dimensional
modules of the Onsager algebra.
\end{prop}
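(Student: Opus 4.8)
The plan is to leverage the infinite-dimensional module of Proposition \ref{realDG} and to show that, once $M=J$ and the relevant variable is confined to a finite discrete support, both standard generators close up on a finite-dimensional space; the Onsager relations, already verified on the ambient space by Proposition \ref{realDG}, then descend automatically. I would work throughout with the homomorphism $l=N$, under which $\textsf{W}_0$ acts as multiplication by $\kX_1=x_1$ and $\textsf{W}_1$ as ${d\!I}^{*(N)}_{\{x\}}={\cal L}_N$; the remaining values of $l$ are identical up to relabelling. Since the relations hold for these operators on all of ${\mathbb R}[x_1,\dots,x_N]$, it suffices to exhibit a finite-dimensional subspace invariant under both generators, for then the restriction inherits the relations.

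First I would treat $\mathcal{V}_J^{(N)}$ by realizing it as the space of functions on the simplex ${\cal F}=\{\{x\}\in{\mathbb N}^N\,|\,\kX_N\leq J\}$ with $M=J$, a space of dimension $\binom{J+N}{N}$ carrying the Krawtchouk basis $\{\widehat{K}^{(N)}(\{n\},\{x\},\{\alpha\};J)\,|\,\kN_N\leq J\}$. The operator $\textsf{W}_0=x_1$ is diagonal in $\{x\}$ and trivially preserves ${\cal F}$. For $\textsf{W}_1={\cal L}_N$ I would examine the shift structure term by term: the forward shifts $E_{x_i}$ enter only through the first term with prefactor $(\kX_N-M)=(\kX_N-J)$, which vanishes on the boundary $\kX_N=J$; the backward shifts $E_{x_i}^{-1}$ enter the second term with prefactor $x_i$, vanishing when $x_i=0$; and the mixed shift $E_{x_i}E_{x_j}^{-1}$ preserves $\kX_N$ while carrying the factor $x_j$, so it too cannot drive a coordinate negative. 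Consequently, evaluating ${\cal L}_N f$ at any point of ${\cal F}$ requires only values of $f$ on ${\cal F}$, so ${d\!I}^{*(N)}_{\{x\}}$ descends to a well-defined operator on $\mathcal{V}_J^{(N)}$. This is precisely the boundary cancellation already isolated in the paragraph preceding the statement.

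Rather than repeat this computation, I would obtain $\mathcal{V}_J^{*(N)}$ by duality. The normalized polynomials $\widehat{K}^{(N)}$ are invariant under the involution ${\mathfrak b}_1$, which exchanges $x_j\leftrightarrow n_{N+1-j}$ and $E_{x_j}\leftrightarrow E_{n_{N+1-j}}$, fixes $M$, and by definition carries ${d\!I}^{*(N)}_{\{x\}}$ to the dual difference operator ${d\!I}^{(N)}_{\{n\}}$. Transporting the previous argument through ${\mathfrak b}_1$ shows that the dual support $\{\{n\}\in{\mathbb N}^N\,|\,\kN_N\leq J\}$ is stable under ${d\!I}^{(N)}_{\{n\}}$: its forward shifts again carry the vanishing factor $(\kN_N-M)$ and its backward shifts the factor $n_i$, while the companion generator acts diagonally. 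Hence $\mathcal{V}_J^{*(N)}$ is a finite-dimensional invariant space as well. In either realization, restricting the generators of Proposition \ref{realDG} to the invariant space yields operators still satisfying (\ref{qDG}) with $q=1$ and $\rho=1$, which is the assertion.

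The one genuinely substantive step is the boundary cancellation of the second paragraph: one must verify that the two constraints $M=J$ and $\kX_N\leq J$ act in concert so that \emph{every} shift which would exit the simplex is annihilated by a vanishing coefficient, the only slightly delicate case being the $\kX_N$-preserving mixed term $E_{x_i}E_{x_j}^{-1}$, whose factor $x_j$ must be checked to prevent a negative coordinate. Once this is established for ${\cal L}_N$, the passage to $\mathcal{V}_J^{*(N)}$ via ${\mathfrak b}_1$ and the descent of the Onsager relations from the ambient module are purely formal, so no further difficulty is expected.
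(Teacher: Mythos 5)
Your proof is correct and follows essentially the same route as the paper: the paper's argument (the paragraph immediately preceding the proposition) is precisely the boundary cancellation you describe --- the forward-shift coefficient $(\kX_N-M)$ vanishing when $\kX_N=J$, the backward-shift coefficient $x_i$ vanishing when $x_i=0$ --- combined with the duality ${\mathfrak b}_1$ to handle $\mathcal{V}_J^{*(N)}$ and the descent of the Onsager relations from the ambient module of Proposition \ref{realDG}. If anything, your explicit check of the mixed term $E_{x_i}E_{x_j}^{-1}$ (it preserves $\kX_N$ and is annihilated by the factor $x_j$ at the boundary $x_j=0$) is more complete than the paper's remark, which only examines the corner configuration $x_i=J$.
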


Note that the corresponding tridiagonal pair  has a reduced parameter sequence: the matrices associated with $\textsf{W}_0,\textsf{W}_1$  satisfy the defining relations of the Onsager algebra (i.e. (\ref{qDG}) with $q=1,\rho=1$).  The TD pair is  of Racah type (Case II), and the spectra have a simple structure with $a=a^*=c=c^*=0$ and $b=b^*=1$.

\subsection{Generalization to tridiagonal algebras with $\beta=2$}
 By analogy with the analysis of the previous Sections, it is straightforward to show that  the multivariable Racah polynomials of Tratnik \cite{Trat} generate infinite dimensional  modules of the tridiagonal algebras with defining relations (\ref{TD1}), (\ref{TD2}) for the special case $\beta=2$. Recall that the one-variable Racah polynomials are defined as \cite{F16}:
\beqa
r_n(x;a,b,c,d)=(a+1)_n(b+d+1)_n(c+1)_n  \  {}_4\mathrm{F}_3\left[\begin{array}{c} -n,n+a+b+1, -x,x+c+d+1\\ a+1,b+d+1,c+1
\end{array}; 1\right].\nonumber
\eeqa
Let $\zeta_0,\zeta_1,...,\zeta_{N+1}$ be real parameters. Let $N,M$ be positive integers. The normalized multivariable Racah polynomials are given by \cite{Trat} (see also \cite{GerI}):
\beqa
\widehat{R}_N(\{n\},\{x\},\{\zeta\},M)=\frac{R_N(\{n\},\{x\},\{\zeta\},M)}{(-M)_{\kN_N}(-M-\zeta_0)_{\kN_N}\prod_{k=1}^N(\zeta_{k+1}-\zeta_k)_{n_k}}\label{mRac}
\eeqa
where
\beqa
R_N(\{n\},\{x\},\{\zeta\},M)&=&\prod_{k=1}^N r_{n_k}(-\kN_{k-1}+x_k;2\kN_{k-1}+\zeta_k-\zeta_0-1,\zeta_{k+1}-\zeta_k-1,\nonumber\\
&&\qquad \qquad \kN_{k-1}-x_{k+1}-1,\kN_{k-1}+\zeta_k+x_{k+1})\nonumber
\eeqa
with $\kN_N\leq M$. With respect to the inner product (\ref{inner2}), the polynomials $\widehat{R}_N(\{n\},\{x\},\{\zeta\},M)$ are orthogonal on 
\beqa
{\cal F}=\{\{x\}\in {\mathbb N}^N|0\leq x_1\leq x_2\leq \cdots \leq x_N\leq M\}.
\eeqa
The explicit expression of the weight $\rho(x)$ follows from \cite[eq.~(3.8)]{GerI}. In the parametrization above, it reads ($x_0=0,x_{N+1}=M$):
\beqa
\rho(x)=\prod_{k=0}^N \frac{\Gamma(\zeta_{k+1}-\zeta_k +x_{k+1}-x_k)\Gamma(\zeta_{k+1}+x_{k+1}+x_k)}{(x_{k+1}-x_k)!\Gamma(\zeta_k+1+x_{k+1}+x_k)}\prod_{k=1}^{N}(\zeta_k+2x_k).\nonumber
\eeqa

\vspace{1mm}

In \cite{GerI}, using the duality property of the multivariable polynomials (\ref{mRac}), two commutative families of difference operators have been introduced. They are diagonalized by Tratnik's multivariable Racah polynomials. Using the notations in \cite{GerI}, let us denote:
\beqa
 \tilde{d\!I}^{*(l)}_{\{x\}} \equiv {\mathfrak L}^x_{l}\quad \mbox{and}  \quad \tilde{d\!I}^{(l)}_{\{n\}} \equiv  {\mathfrak L}^n_{l}  -M(M+\zeta_{N+1-l}) \quad \mbox{for} \quad l=1,2,...,N,
\eeqa
where ${\mathfrak L}^x_{l}$ and  ${\mathfrak L}^n_{l}$ are defined, respectively, in \cite[eqs. (3.24),(2.6) or (3.4)]{GerI} and  \cite[eq. (4.11)]{GerI}. By analogy with Theorem \ref{bispec}, it was shown:
\begin{thm}[See \cite{GerI}, Theorem 4.6]\label{bispec4} Let
 $(\{\zeta\},M)\in  {\mathbb R}^{N+3}$ and $l=1,2,...,N$. The normalized multivariable
  polynomial $\widehat{R}_N(\{n\},\{x\},\{\zeta\},M)$ solve the following system
  of difference-difference bispectral problems:
 \begin{eqnarray}
 \tilde{d\!I}^{*(l)}_{\{x\}}  \widehat{R}_N(\{n\},\{x\},\{\zeta\},M) & = & \lambda_{\{n\}}^{*(l)}
\widehat{R}_N(\{n\},\{x\},\{\zeta\},M),\label{bispecprime11}\\
 \tilde{d\!I}^{(l)}_{\{n\}} \widehat{R}_N(\{n\},\{x\},\{\zeta\},M) & = & \lambda_{\{x\}}^{(l)} 
\widehat{R}_N(\{n\},\{x\},\{\zeta\},M).\label{bispecprime12}
\eeqa
\end{thm}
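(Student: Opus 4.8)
The plan is to mirror the strategy used for the $q$-deformed case in Theorem \ref{bispec}, replacing the Askey-Wilson building blocks by one-variable Racah polynomials and the involution underlying Theorem \ref{norm} by the Racah duality map. The whole argument rests on two pillars: the factorized form (\ref{mRac}) of $\widehat{R}_N$ as a product of one-variable Racah polynomials, and the bispectrality of the one-variable Racah polynomial (its second-order difference equation and its three-term recurrence relation, both recalled from \cite{KS}).

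First I would establish the first family of equations (\ref{bispecprime11}), proceeding by induction on $N$ and starting from $l=N$. The key observation is that the operator $\tilde{d\!I}^{*(N)}_{\{x\}}={\mathfrak L}^x_N$ is built so that, acting on the product $\prod_k r_{n_k}$, the factor in the outermost variable $x_N$ is hit by precisely the one-variable Racah difference operator (with the shifted arguments appearing in (\ref{mRac})), while the dependence on the remaining variables reduces, after using the one-variable eigenvalue equation, to the operator $\tilde{d\!I}^{*(N-1)}_{\{x\}}$ acting on $N-1$ variables. The nested parameter dependence through $\kN_{k-1}$ and $x_{k+1}$ is exactly what makes this telescoping consistent, and carrying out the induction yields the eigenvalue $\lambda^{*(l)}_{\{n\}}$ for every $l=1,\dots,N$, since the operators ${\mathfrak L}^x_l$ with $l<N$ involve only the variables $x_{N+1-l},\dots,x_N$ and the same reduction applies verbatim to the truncated product.

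Second, I would obtain the dual family (\ref{bispecprime12}) from the first by invoking the duality relation for $\widehat{R}_N$, namely its invariance under the substitution $x_j\leftrightarrow\tilde{n}_{N+1-j}$, $n_j\leftrightarrow\tilde{x}_{N+1-j}$ together with the corresponding transformation of $\{\zeta\}$ and $M$ (the Racah analog of the isomorphism ${\mathfrak b}_1$ introduced in the Krawtchouk case). Applying this duality isomorphism to (\ref{bispecprime11}) sends ${\mathfrak L}^x_l$ to the operator ${\mathfrak L}^n_l$ in the variables $\{n\}$, which after subtracting the constant $M(M+\zeta_{N+1-l})$ is precisely $\tilde{d\!I}^{(l)}_{\{n\}}$; it simultaneously sends the eigenvalue $\lambda^{*(l)}_{\{n\}}$ to $\lambda^{(l)}_{\{x\}}$ (the constant shift being absorbed into the eigenvalue as well). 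Since $\widehat{R}_N$ is fixed by the duality, equation (\ref{bispecprime12}) follows immediately.

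The main obstacle will be the first step: verifying that the explicit second-order difference operators ${\mathfrak L}^x_l$ of \cite{GerI} really telescope along the factorized product. This is a bookkeeping-heavy computation, because the arguments of each Racah factor in (\ref{mRac}) depend on the running partial sums $\kN_{k-1}$ and on $x_{k+1}$, so the action of the shift operators $E_{x_i}^{\pm 1}$ couples neighbouring factors; one must check that, after applying the one-variable Racah eigenvalue equation, these couplings cancel and leave exactly the lower-rank operator. Once this reduction is in place, the mutual commutativity of the $\tilde{d\!I}^{*(l)}_{\{x\}}$ and the passage to the dual family via duality are routine.
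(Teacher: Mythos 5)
The paper does not prove this statement at all: it is imported verbatim from Geronimo--Iliev \cite{GerI}, Theorem 4.6, so there is no internal argument to compare against. Your outline --- induction on the factorized product (\ref{mRac}) using the one-variable Racah difference equation to telescope ${\mathfrak L}^x_l$ down to the lower-rank operator, then transporting (\ref{bispecprime11}) through the duality involution (the Racah analogue of ${\mathfrak b}_1$) to obtain (\ref{bispecprime12}), with the constant $M(M+\zeta_{N+1-l})$ absorbed into the eigenvalue --- is exactly the strategy of the cited source and is sound, modulo the index bookkeeping you already flag as the main labor.
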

where
\beqa
\lambda_{\{n\}}^{*(l)}&=& - (\zeta_{l+1}-\zeta_0)\kN_{l}-\kN_{l}(\kN_{l}-1),\label{specTD1}\\
\lambda_{\{x\}}^{(l)}&=&    - (1+\zeta_{N+1-l})x_{N+1-l} - x_{N+1-l}(x_{N+1-l}-1).\label{specTD2}
\eeqa
\vspace{2mm}

Infinite dimensional   modules of tridiagonal algebras can be constructed as follows. We endow the vector space ${\mathbb R}[x_1,x_2,...,x_N]$ with a module structure of a tridiagonal algebra $T(2,\gamma,\gamma^*,\rho,\rho^*)$, as in Definition \ref{TDgendef}. A family of homomorphisms indexed by the integer $l=1,2,...,N$ can be exhibited as follows.
\begin{prop}\label{realtridiag} The map defined by 
\beqa
\textsf{A}  \mapsto \lambda_{\{x\}}^{(l)}  , \quad \textsf{A}^*  \mapsto \tilde{d\!I}^{*(l)}_{\{x\}},&& \gamma\mapsto -2,
\quad \gamma^*\mapsto -2, \label{defAA*}\\
&& \rho \mapsto   \zeta^2_{N+1-l}-1, \quad  \rho^* \mapsto    (\zeta_0-\zeta_{l+1}+1)^2-1.  \nonumber
\eeqa
is an homomorphism from $T(2,\gamma,\gamma^*,\rho,\rho^*)$ to ${\cal D}'_x$.
\end{prop}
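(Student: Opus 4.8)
The plan is to verify directly that the operators $\textsf{A}=\lambda_{\{x\}}^{(l)}$ and $\textsf{A}^*=\tilde{d\!I}^{*(l)}_{\{x\}}$ satisfy the two tridiagonal relations (\ref{TD1}), (\ref{TD2}) with $\beta=2$ and the advertised values of $\gamma,\gamma^*,\rho,\rho^*$. Since $T(2,\gamma,\gamma^*,\rho,\rho^*)$ is presented by the generators $\textsf{A},\textsf{A}^*$ and those relations (Definition \ref{TDgendef}), checking them for the images is exactly what produces the homomorphism to ${\cal D}'_x$. I would treat the two relations by the same two-representation strategy used in Propositions \ref{realqDG2} and \ref{realDG}: relation (\ref{TD1}) is checked in the coordinate representation, where $\textsf{A}$ is a multiplication operator, while relation (\ref{TD2}) is checked in the Racah basis $\{\widehat{R}_N(\{n\},\{x\},\{\zeta\},M)\}$, where $\textsf{A}^*$ is diagonal by Theorem \ref{bispec4}. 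It suffices to treat $l=N$ in detail and invoke the inductive reduction on the number of variables for $1\leq l<N$, exactly as in Proposition \ref{realqDG2}.

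For (\ref{TD1}), note that $\textsf{A}$ is multiplication by $\theta(m):=\lambda_{\{x\}}^{(l)}=-m(m+\zeta_{N+1-l})$ with $m=x_{N+1-l}$ (a quadratic, cf. (\ref{specTD2})), whereas $\textsf{A}^*=\tilde{d\!I}^{*(l)}_{\{x\}}$ is a difference operator each of whose terms shifts $x_{N+1-l}$ by $0$ or $\pm1$. Expanding the inner bracket $\textsf{A}^2\textsf{A}^*-2\textsf{A}\textsf{A}^*\textsf{A}+\textsf{A}^*\textsf{A}^2-\gamma(\textsf{A}\textsf{A}^*+\textsf{A}^*\textsf{A})-\rho\textsf{A}^*$ term by term, a summand shifting $x_{N+1-l}$ by $\nu$ acquires the scalar factor $(\theta(m)-\theta(m+\nu))^2+2(\theta(m)+\theta(m+\nu))-\rho$ once $\gamma=-2$ is used; here $\beta=2$ is precisely what turns $\theta(m)^2-\beta\theta(m)\theta(m+\nu)+\theta(m+\nu)^2$ into the perfect square. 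The outer commutator with $\textsf{A}$ multiplies each term by the extra factor $\theta(m)-\theta(m+\nu)$, so all $\nu=0$ terms drop automatically, while for $\nu=\pm1$ the scalar factor collapses to the $m$-independent constant $\zeta_{N+1-l}^2-1-\rho$. Hence (\ref{TD1}) holds identically precisely when $\rho=\zeta_{N+1-l}^2-1$, as claimed.

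For (\ref{TD2}), I pass to the Racah eigenbasis, on which $\textsf{A}^*$ is diagonal with the quadratic eigenvalue $\theta^*(p):=\lambda_{\{n\}}^{*(l)}=-p(p+\zeta_{l+1}-\zeta_0-1)$, $p=\kN_l$ (cf. (\ref{specTD1})), while $\textsf{A}=\lambda_{\{x\}}^{(l)}$ acts through $\tilde{d\!I}^{(l)}_{\{n\}}$ as a block tridiagonal matrix whose nonzero entries connect values of $\kN_l$ differing by $0$ or $\pm1$ — the analogue for the multivariable Racah operators of Lemma \ref{lemma} and Proposition \ref{blocktri}. Computing the $(\{n\},\{m\})$ entry of the second bracket exactly as in (\ref{eq}), it carries the factor $(\theta^*_{\{n\}}-\theta^*_{\{m\}})\big[(\theta^*_{\{n\}}-\theta^*_{\{m\}})^2+2(\theta^*_{\{n\}}+\theta^*_{\{m\}})-\rho^*\big]$ after using $\gamma^*=-2$. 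It vanishes for coinciding $\kN_l$ by the first factor, and for $\kN_l$-difference $\pm1$ the bracket reduces to the constant $(\zeta_{l+1}-\zeta_0)^2-2(\zeta_{l+1}-\zeta_0)-\rho^*=(\zeta_0-\zeta_{l+1}+1)^2-1-\rho^*$. This fixes $\rho^*=(\zeta_0-\zeta_{l+1}+1)^2-1$ and establishes (\ref{TD2}), completing the verification.

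The main obstacle I anticipate is not the two scalar identities, which are elementary once $\beta=2$ converts the cubic brackets into perfect squares, but rather pinning down with care the shift structure on which they rest: that $\tilde{d\!I}^{*(l)}_{\{x\}}$ shifts $x_{N+1-l}$ only by $0,\pm1$ (needed for (\ref{TD1})), and that multiplication by $\lambda_{\{x\}}^{(l)}$ is block tridiagonal in $\kN_l$ in the Racah basis (needed for (\ref{TD2})). For $l=N$ both facts follow from the explicit form of $\tilde{d\!I}^{(N)}_{\{n\}}$ and the factorized structure (\ref{mRac}); for general $l$ they require the same inductive argument on the number of variables used in \cite{GerI} and in Lemma \ref{lemma}, which I would invoke rather than reprove. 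The values $\gamma=\gamma^*=-2$ are then seen to be forced by the quadratic (Case II, Racah-type) form of the two eigenvalue sequences.
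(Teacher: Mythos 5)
Your proposal is correct and follows essentially the same route as the paper: the paper's own proof simply refers back to the two-representation argument of Proposition \ref{realqDG2} (one relation checked where the multiplication operator is "diagonal", the other in the Racah eigenbasis) and asserts that the analogues of (\ref{po1}) and (\ref{eq}) vanish because of the quadratic spectra (\ref{specTD1}), (\ref{specTD2}) for the stated $\gamma,\gamma^*,\rho,\rho^*$. You carry out exactly that verification, and your explicit evaluation of the scalar factors (yielding $\zeta_{N+1-l}^2-1$ and $(\zeta_0-\zeta_{l+1}+1)^2-1$) correctly fills in the computation the paper leaves implicit.
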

\begin{proof} 
To show the claim for $l=N$, one essentially follows the same steps as in Proposition \ref{realqDG2}. The main difference here is the structure of the two-variable polynomials associated with the l.h.s of (\ref{TD1}), (\ref{TD2}), respectively. Similarly to (\ref{po1}), (\ref{eq}), these polynomials are vanishing due to the structure of the spectra (\ref{specTD1}),  (\ref{specTD2}), provided the parameters $\gamma,\gamma^*,\rho,\rho^*$ are fixed to the appropriate values. Similar arguments are used to show the claim for $1\leq l < N$. 
\end{proof}
 
It follows: 
\begin{thm}
 Let $V$ denote the infinite dimensional vector space ${\mathbb R}[ x_1,x_2,...,x_N]$.  Let $\textsf{A},\textsf{A}^*$ be the standard generators of the tridiagonal algebra $T(2,-2,-2,\rho,\rho^*)$, where  $\textsf{A},\textsf{A}^*$ act on $V$ as (\ref{defAA*}). Then, $\textsf{A},\textsf{A}^*$  induce on $V$ a module structure for $T(2,-2,-2, \zeta^2_{N+1-l}-1, (\zeta_0-\zeta_{l+1}+1)^2-1)$.
The multivariable Racah polynomials \\ $\widehat{R}^{(N)}(\{n\},\{x\},\{\zeta\},M)|\{n\}\in ({\mathbb N})^N; M\in {\mathbb R}\}$  form a basis of this module. 
\end{thm}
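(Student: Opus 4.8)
The plan is to combine the algebraic relations already verified in \propref{realtridiag} with the basis property of Tratnik's multivariable Racah polynomials, in complete parallel with the $q\neq 1$ statement established in Section 2. Fix $l\in\{1,2,\dots,N\}$. First I would invoke \propref{realtridiag}: the assignment (\ref{defAA*}) is a homomorphism from $T(2,-2,-2,\rho,\rho^*)$ into ${\cal D}'_x$ with $\rho=\zeta^2_{N+1-l}-1$ and $\rho^*=(\zeta_0-\zeta_{l+1}+1)^2-1$. Concretely, the images $\textsf{A}\mapsto \lambda_{\{x\}}^{(l)}$ and $\textsf{A}^*\mapsto \tilde{d\!I}^{*(l)}_{\{x\}}$ both preserve $V={\mathbb R}[x_1,\dots,x_N]$: the former is multiplication by the polynomial $\lambda_{\{x\}}^{(l)}=-(1+\zeta_{N+1-l})x_{N+1-l}-x_{N+1-l}(x_{N+1-l}-1)$, and the latter is a difference operator whose coefficients, read off from \cite[eqs.~(3.24),(2.6) or (3.4)]{GerI}, are polynomial in the $x_i$, so that its shift operators map polynomials to polynomials. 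Pulling back the ${\cal D}'_x$-action along this homomorphism therefore equips $V$ with a module structure over $T(2,-2,-2,\zeta^2_{N+1-l}-1,(\zeta_0-\zeta_{l+1}+1)^2-1)$; the defining relations (\ref{TD1}), (\ref{TD2}) with $\beta=2$, $\gamma=\gamma^*=-2$ hold automatically, since they already hold as operator identities in ${\cal D}'_x$.

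It then remains to show that the multivariable Racah polynomials (\ref{mRac}), indexed by $\{n\}\in{\mathbb N}^N$ with $\{\zeta\}$ and $M$ fixed at generic values, form a basis of $V$. For this I would appeal to the triangular structure inherited from the Gram--Schmidt construction of \cite{Trat,GerI}: the product $R_N(\{n\},\{x\},\{\zeta\},M)=\prod_{k=1}^N r_{n_k}(\cdots)$ is unitriangular with respect to a suitable graded ordering of the monomials, its leading contribution being the monomial attached to $\{n\}$. Hence the transition matrix from $\{\widehat{R}_N(\{n\},\{x\},\{\zeta\},M)\}_{\{n\}}$ to the monomial basis of $V$ is triangular with nonvanishing diagonal, which yields linear independence and spanning simultaneously. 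The bispectral \thmref{bispec4} then identifies these polynomials as the simultaneous eigenvectors of $\textsf{A}^*=\tilde{d\!I}^{*(l)}_{\{x\}}$, with eigenvalues (\ref{specTD1}), which places the construction on exactly the same footing as the $q\neq 1$ case.

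The main obstacle is precisely this basis claim. The module structure is immediate from \propref{realtridiag}, but one must carefully track the degree bookkeeping for the product of one-variable Racah polynomials: each $r_{n_k}$ is of degree $n_k$ in the quadratic spectral variable $\lambda(x_k)=x_k(x_k+c+d+1)$ rather than in $x_k$ itself, so the triangularity must be formulated with respect to the grading induced by these $\lambda$-variables and then transported to the monomial basis of ${\mathbb R}[x_1,\dots,x_N]$. Once the unitriangularity is verified (equivalently, once the Iliev--Geronimo basis statement is carried over from the $q\neq 1$ normalization to the present $q=1$ limit), the theorem follows verbatim as the $\beta=2$ analog of the Gasper--Rahman result of Section 2.
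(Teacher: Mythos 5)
Your proposal follows essentially the same route as the paper: the module structure is exactly the content of Proposition \ref{realtridiag} (the paper's "proof" of the theorem is simply to record that it follows from that proposition), and the basis property of the multivariable Racah polynomials is imported from \cite{Trat,GerI} rather than re-proved. Your added sketch of the unitriangularity argument (with the caveat about grading by the quadratic spectral variables $\lambda(x_k)$ rather than by the $x_k$ themselves) is a correct and slightly more explicit justification of a point the paper leaves to the cited references.
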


To conclude this Section, let us observe that finite dimensional   modules of the tridiagonal algebra $T(2,\gamma,\gamma^*,\rho,\rho^*)$ based on the multivariable Racah polynomials (\ref{mRac}) can be easily derived by considering certain quotients of the space of parameters $\{\zeta\},M$. To this end, it is sufficient to identify the set of constraints on the parameters -  by analogy with Proposition \ref{condpargen} - using the explicit expressions for  ${\mathfrak L}^x_{l}$ and  ${\mathfrak L}^n_{l}$ taken from \cite[eq. (3.24)]{GerI} and  \cite[eq. (4.11)]{GerI}, respectively. The analysis being straighforward, we skip the details. Note that in the basis generated by the corresponding multivariable Racah polynomials (\ref{mRac}) defined on a discrete support, the elements  $\textsf{A}$, $\textsf{A}^*$ act as a tridiagonal pair of Racah type with spectra of the form (\ref{specTD1}), (\ref{specTD2}). This is in perfect agreement with the expected structure of the spectra that follows from the theory of tridiagonal pairs of Racah type (Case II).
\vspace{1mm}

\section{Perspectives}

Besides its elegance, for the class of quantum integrable models generated from the $q-$Onsager algebra, to have a $q-$hypergeometric formulation 
should be of practical importance. It should give a fresh look on existing solutions following from different frameworks such as the algebraic Bethe ansatz or separation of variables approaches but not only. We think it could pave the way to circumvent or reconsider some of the difficulties inherent to the structure of already existing analytical solutions.  Below, we list a conjecture and some open problems. \vspace{1mm}

{\it Irreducibility criteria:} We conjecture that the modules built on
Gasper-Rahman polynomials are irreducible for generic parameters
$\alpha_0,\alpha_1,...,\alpha_{N+2}$ and $q$, meaning that they do not
take specific values such as those of Proposition \ref{condpargen}.  \vspace{1mm} 

{\it Open XXZ spin chain:} For $q\neq 1$, one of the main motivation for constructing a $q-$hypergeometric basis associated with the $q-$Onsager algebra originates in the problem of solving quantum integrable systems 
with generic integrable boundary conditions, the XXZ open spin chain to cite the simplest example. On one hand, recall that for certain regimes 
of boundary parameters or generic parameters, through the Bethe ansatz \cite{N,CLSW,YZ,FiK},
 its modified versions \cite{Be,BeP,Cr,ZLCYSW}, functional alternatives \cite{Ga1,Ga2,CYSW} or the separation of variables approach \cite{FaKN,KMN}, 
eigenstates and eigenvalues are expressed in terms of (Bethe) roots of highly transcendental Bethe equations. It appears that
an explicit characterization of the Hamiltonian's eigenfunctions as polynomials offers the possibility of studying the correspondence between Bethe roots 
and solutions of algebraic equations \cite{F17}. On the other hand, in the thermodynamic limit, either from the algebraic Bethe ansatz setting \cite{KKMNST} or the $q-$vertex operator approach 
\cite{JKKKM,BKoj}, correlation functions and form factors are always given in terms of multiple integral representations. Although formally satisfying,  these are difficult objects to handle for practical applications. From the point of view developed here, the $q-$hypergeometric formulation (\ref{eigenftwo}) of the eigenfunctions opens the way to reconsidering scalar products of eigenstates in terms of Gasper-Rahman multivariable polynomials \cite{GR1}. Whether corresponding integral representations could be related with the expressions derived within the $q-$vertex operator approach or algebraic Bethe ansatz is an interesting  question.\vspace{1mm} 

{\it Higher rank  $q-$Onsager algebras and orthogonal polynomials}. Higher rank generalizations of the $q-$Onsager algebra (\ref{qDG}) have been introduced in the literature \cite[Definition~2.1]{BB1}. They can be realized in terms of certain coideal subalgebras of $U_q(\widehat{g})$ for any Kac-Moody algebra ${\widehat g}$ \cite[Proposition~2.1]{BB1} (see also \cite{Kolb}). For the simplest example $q=1$ and ${\widehat g}=A^{(1)}_{n-1}$, such a generalization is called the $sl_n-$Onsager algebra,  first introduced in \cite{Uglov}. In view of the results presented here, an interesting problem would be to identify the family of multivariable  orthogonal  polynomials that would be higher rank generalizations of those of Tratnik ($q=1$) or Gasper-Rahman ($q\neq 1$). Indeed, corresponding  infinite and finite dimensional modules of the generalized $q-$Onsager algebras will provide an explicit $q-$hypergeometric basis for a class of quantum integrable models, among which one finds open spin chains with higher rank symmetries. \vspace{1mm} 

{\it Non-polynomial solutions and $q-$Virasoro:} As an extension of Bochner's original theorem, the bispectral problem for tridiagonal doubly infinite matrices has been considered in the literature. In this case, solutions are intimately related to the Virasoro algebra but are no longer necessarily polynomials \cite{GH1}. In the context of $q-$special functions and the $q-$Virasoro algebra, a similar extension has been considered too \cite{GH2,GH3}. In view of these, it is 
clear that an extension of the bispectral problem (\ref{qdiffgen}),(\ref{recgen}) for $N>1$ can be considered similarly, see for instance \cite{GerI2}. These works may establish a relation between  the $q-$Onsager algebra and an algebraic structure extending the $q-$Virasoro algebra.\vspace{1mm}

Some of these problems will be considered elsewhere.

\vspace{5mm}

\noindent{\bf Acknowledgements:} We thank one of the referees for importants comments. P.B thanks E. Sklyanin for a discussion at the early stage of this work, J. Stokman and B. Vlaar for explanations about their works on solutions of boundary $q-KZ$ equations in relation with Koornwinder polynomials, J. Van Diejen and E. Koelink for sending us reprints \cite{D2} and \cite{KDJ}, respectively. P.B thanks  E. Koelink for detailed explanations on the work  \cite{KDJ}, as well as P. Iliev for communications and sending us a reprint of \cite{GerI2}. We thank P. Terwilliger for important comments on the manuscript and sending us the preprint \cite{Terw0}. P.B. is supported by C.N.R.S.

\vspace{1.5cm}

\newpage

\centerline{\bf APPENDIX: Useful formulae for the coefficients}

\vspace{3mm}

$\bullet$ For $k=1,...,N-1$:

\vspace{1mm}

\beqa
\kb(B_k^{0,1}(z))&=&\big(1-\frac{\alpha^2_{N+2-k}}{\alpha_0^2} q^{4\kN_{N-k}+2n_{N+1-k}-2}\big)(1-q^{-2n_{N+1-k}}),\nonumber\\
\kb(B_k^{1,0}(z))&=&\big(1-\frac{\alpha^2_{N+2-k}}{\alpha_0^2} q^{4\kN_{N-k}+2n_{N+1-k}-2}\big)\big(1-\frac{\alpha^2_{N+2-k}}{\alpha^2_{N+1-k}}q^{2n_{N+1-k}}\big),\nonumber\\
\kb(B_k^{0,-1}(z))&=&\big(1-\frac{\alpha^2_{N+2-k}}{\alpha_{N+1-k}^2} q^{2n_{N+1-k}-2}\big)\big(1-\frac{\alpha_0^2}{\alpha^2_{N+1-k}}q^{-4\kN_{N-k}-2n_{N+1-k}+2}\big),\nonumber\\
\kb(B_k^{-1,0}(z))&=&\big(1-q^{-2n_{N+1-k}}\big)\big(1-\frac{\alpha_0^2}{\alpha^2_{N+1-k}}q^{-4\kN_{N-k}-2n_{N+1-k}+2}\big),\nonumber\\
\kb(B_k^{1,-1}(z))&=&\big(1-\frac{\alpha^2_{N+2-k}}{\alpha^2_{N+1-k}}q^{2n_{N+1-k}}\big)\big(1-\frac{\alpha^2_{N+2-k}}{\alpha^2_{N+1-k}}q^{2n_{N+1-k}+2}\big),\nonumber\\
\kb(B_k^{-1,1}(z))&=&\big(1-q^{-2n_{N+1-k}}\big)\big(1-q^{-2n_{N+1-k}+2}\big).\nonumber
\eeqa

\vspace{5mm}

$\bullet$ For $k=0,N$:

\vspace{1mm}

\beqa
\kb(B_0^{0,1}(z))&=&\big(1-\alpha_{N+1}\alpha_{N+2}q^{2\kN_{N}}\big)\big(1-\frac{\alpha_{N+1}\alpha_{N+2}}{\alpha_0^2}q^{2\kN_{N}}\big),\nonumber\\
\kb(B_0^{0,-1}(z))&=&\big(1-\frac{\alpha_{N+2}\alpha_0^2}{\alpha_{N+1}}q^{-2\kN_{N}+2}\big)\big(1-\frac{\alpha_{N+2}}{\alpha_{N+1}}q^{-2\kN_{N}+2}\big),\nonumber\\
\kb(B_N^{1,0}(z))&=&\big(1-\frac{\alpha_2^2}{\alpha_0^2}q^{2n_1-2}\big)\big(1-\frac{\alpha_2^2}{\alpha_1^2}q^{2n_1}\big),\nonumber\\
\kb(B_N^{-1,0}(z))&=&\big(1-q^{-2n_1}\big)\big(1-\frac{\alpha_0^2}{\alpha_1^2}q^{-2n_1+2}\big)
.       \nonumber
\eeqa

\newpage

\end{document}